\date{May 19, 2014}
\newtheorem{theorem}{Theorem}
\newtheorem{proposition}{Proposition}
\newtheorem{lemma}[proposition]{Lemma}
\theoremstyle{definition}
\theoremstyle{remark}
\newtheorem{remark}[proposition]{Remark}
\newcommand\R{{\ensuremath {\mathbb R} }}
\newcommand\C{{\ensuremath {\mathbb C} }}
\newcommand\N{{\ensuremath {\mathbb N} }}
\renewcommand\phi{\varphi}
\renewcommand\le{\leqslant}
\renewcommand\ge{\geqslant}
\renewcommand\leq{\leqslant}
\renewcommand\geq{\geqslant}
\renewcommand\epsilon{\varepsilon}
\renewcommand\hat{\widehat}
\renewcommand\tilde{\widetilde}
\renewcommand\bar{\overline}
\newcommand{\gH}{\mathfrak{H}}
\newcommand{\gS}{\mathfrak{S}}
\newcommand\ii{{\ensuremath {\infty}}}
\newcommand{\norm}[1]{ \left| \! \left| #1 \right| \! \right| }
\newcommand{\cE}{\mathcal{E}}
\newcommand{\B}{\mathcal{B}}
\newcommand{\cR}{\mathcal{R}}
\newcommand\1{{\ensuremath {\mathds 1} }}
\newcommand{\Sph}{\mathbb{S}}
\DeclareMathOperator{\re}{Re}
\DeclareMathOperator{\sgn}{sgn}
\DeclareMathOperator{\tr}{Tr}
\DeclareMathOperator{\Det}{Det}
\title[Restriction theorems for orthonormal functions]{Restriction theorems for orthonormal functions, Strichartz inequalities, and uniform Sobolev estimates}
\author[R. L. Frank]{Rupert L. FRANK}
\address{R. L. Frank, Mathematics 253-37, Caltech, Pasadena CA 91125, USA} 
\email{rlfrank@caltech.edu}
\author[J. Sabin]{Julien SABIN}
\address{J. Sabin, Universit\'e de Cergy-Pontoise, Mathematics Department (UMR 8088), F-95000 Cergy-Pontoise, France} 
\email{julien.sabin@u-cergy.fr}
\begin{document}
 
\begin{abstract}
  We generalize the theorems of Stein--Tomas and Strichartz about surface restrictions of Fourier transforms to systems of orthonormal functions with an optimal dependence on the number of functions. We deduce the corresponding Strichartz bounds for solutions to Schr\"odinger equations up to the endpoint, thereby solving an open problem of Frank, Lewin, Lieb and Seiringer. We also prove uniform Sobolev estimates in Schatten spaces, extending the results of Kenig, Ruiz, and Sogge. We finally provide applications of these results to a Limiting Absorption Principle in Schatten spaces, to the well-posedness of the Hartree equation in Schatten spaces, to Lieb--Thirring bounds for eigenvalues of Schr\"odinger operators with complex potentials, and to Schatten properties of the scattering matrix.
\end{abstract}

\maketitle

\section*{Introduction}

A classical topic in harmonic analysis is the so-called \emph{restriction problem}. Given a surface $S$ embedded in $\R^N$, $N\ge2$, one asks for which exponents $1\le p\le 2$, $1\le q\le\ii$ the Fourier transform of a function $f\in L^p(\R^N)$ belongs to $L^q(S)$, where $S$ is endowed with its $(N-1)$-dimensional Lebesgue measure $d\sigma$. More precisely, defining the restriction operator $\cR_S$ as $\cR_Sf=\hat{f}\,|_S$ for all $f$ in the Schwartz class, the problem is to know when $\cR_S$ can be extended as a bounded operator from $L^p(\R^N)$ to $L^q(S)$. The operator dual to $\cR_S$ is called the \emph{extension operator}, which we denote by $\cE_S$, and satisfies the identity
\begin{equation}\label{eq:def-ES}
  \cE_Sf(x)=\frac{1}{(2\pi)^{N/2}}\int_Sf(\xi) e^{i\xi\cdot x}\,d\sigma(\xi),\quad\forall x\in\R^N,
\end{equation}
for all $f\in L^1(S)$. The restriction problem is thus equivalent to knowing when $\cE_S$ is bounded from $L^{q'}(S)$ to $L^{p'}(\R^N)$. We refer to \cite{Tao-04} for a wide review of results concerning this problem and its motivations.

A model case of the restriction problem which is often considered in the literature is the case $q=2$. There are two types of surfaces for which this problem has been completely settled. For smooth compact surfaces with non-zero Gauss curvature, the celebrated Stein--Tomas theorem \cite{Stein-86,Tomas-75} states that the restriction problem has a positive answer if and only if $1\le p\le 2(N+1)/(N+3)$. For quadratic surfaces, Strichartz \cite{Strichartz-77} gave a complete answer depending on the type of the surface (paraboloid-like, cone-like, or sphere-like, see below for a more precise definition). Hence, in these cases we know exactly for which exponents $p$ the inequality 
\begin{equation}\label{eq:ES-bounded}
  \norm{\cE_Sf}_{L^{p'}(\R^N)}\le C\norm{f}_{L^2(S)}
\end{equation}
holds for all $f\in L^2(S)$ with $C>0$ independent of $f$. The question we want to address in this work is a generalization of \eqref{eq:ES-bounded} to systems of orthonormal functions. More precisely, let $(f_j)_{j\in J}$ a (possibly infinite) orthonormal system in $L^2(S)$, and let $(\nu_j)_{j\in J}\subset\C$ be a family of coefficients. We prove inequalities of the form
\begin{equation}\label{eq:ES-orthonormal}
 \norm{\sum_{j\in J}\nu_j|\cE_Sf_j|^2}_{L^{p'/2}(\R^N)}\le C\left(\sum_{j\in J}|\nu_j|^\alpha\right)^{1/\alpha}
\end{equation}
for some $1\le p\le2,\alpha>1$, with $C>0$ independent of $(f_j)$, $(\nu_j)$. To appreciate the difference between \eqref{eq:ES-bounded} and \eqref{eq:ES-orthonormal}, notice that combining \eqref{eq:ES-bounded} with the triangle inequality in $L^{p'/2}(\R^N)$ leads to the estimate
$$\norm{\sum_{j\in J}\nu_j|\cE_Sf_j|^2}_{L^{p'/2}(\R^N)}\le \sum_{j\in J}|\nu_j|\norm{\cE_Sf_j}_{L^{p'}(\R^N)}^2\le C\sum_{j\in J}|\nu_j|,$$
which is weaker than \eqref{eq:ES-orthonormal} (since we will prove \eqref{eq:ES-orthonormal} with $\alpha>1$). This difference is particularly important when the number of non-zero $\nu_j$ is infinite, which happens in some applications that we will discuss below. In the case of smooth compact surfaces with non-zero Gauss curvature we will prove \eqref{eq:ES-orthonormal} with the \emph{optimal} (that is, largest possible) exponent $\alpha$.

Generalizing functional inequalities involving a single function to systems of orthonormal functions is not a new topic. It is strongly motivated by the study of many-body systems in quantum mechanics, where a simple description of $M$ independent fermionic particles is given by $M$ orthonormal functions in some $L^2$-space. It is then important to obtain functional inequalities on these systems whose behaviour is optimal in the number $M$ of such functions. Typically, this behaviour is better than the one given by the triangle inequality. The first example of such a generalization is the Lieb--Thirring inequality \cite{LieThi-75}, which states that for any $f_1,\ldots,f_M$ orthonormal in $L^2(\R^d)$ and for any non-negative coefficients $\nu_1,\ldots,\nu_M$, we have
\begin{equation}\label{eq:LT-M-fct}
  \norm{\sum_{j=1}^M\nu_j|f_j|^2}_{L^{1+\frac2d}(\R^d)}\le C\left(\sup_j \nu_j\right)^{\frac{2}{d+2}}\left(\sum_{j=1}^M\nu_j\norm{\nabla f_j}_{L^2(\R^d)}^2\right)^{\frac{d}{d+2}}.
\end{equation}
Its counterpart for a single function is the Gagliardo--Nirenberg--Sobolev inequality,
\begin{equation}\label{eq:LT-1-fct}
  \norm{f}_{L^{2+\frac4d}(\R^d)}\le C\norm{f}_{L^2(\R^d)}^{\frac{2}{d+2}}\norm{\nabla f}_{L^2(\R^d)}^{\frac{d}{d+2}},
\end{equation}
which together with the triangle inequality implies
\begin{equation}\label{eq:LT-triangle}
  \norm{\sum_{j=1}^M\nu_j|f_j|^2}_{L^{1+\frac2d}(\R^d)}\le C\left(\sum_{j=1}^M\nu_j\right)^{\frac{2}{d+2}}\left(\sum_{j=1}^M\nu_j\norm{\nabla f_j}_{L^2(\R^d)}^2\right)^{\frac{d}{d+2}},
\end{equation}
which is weaker than \eqref{eq:LT-M-fct}. Lieb--Thirring inequalities are a decisive tool for proving stability of matter \cite{LieThi-75,Lieb-90-book}. In particular, we emphasize that \eqref{eq:LT-triangle} is \emph{not} enough to prove stability of matter. The homogeneous Sobolev inequality for $0<s<d/2$,
\begin{equation}\label{eq:hom-Sob-1-fct}
 \norm{(-\Delta)^{-s/2}f}_{L^{\frac{2d}{d-2s}}(\R^d)}\le C\norm{f}_{L^2(\R^d)},
\end{equation}
also has a generalization to several functions which was proved by Lieb \cite{Lieb-83a}: for any $f_1,\ldots,f_M$ orthonormal in $L^2(\R^d)$ and for any non-negative coefficients $\nu_1,\ldots,\nu_M$, we have
\begin{equation}\label{eq:hom-Sob-M-fct}
 \norm{\sum_{j=1}^M \nu_j\left|(-\Delta)^{-s/2}f_j\right|^2}_{L^{\frac{d}{d-2s}}(\R^d)}\le C\left(\sup_j\nu_j\right)^{\frac{2s}{d}}\left(\sum_{j=1}^M\nu_j\right)^{\frac{d-2s}{d}}.
\end{equation}
Again, combining \eqref{eq:hom-Sob-1-fct} with the triangle inequality leads to the estimate
$$\norm{\sum_{j=1}^M \nu_j\left|(-\Delta)^{-s/2}f_j\right|^2}_{L^{\frac{d}{d-2s}}(\R^d)}\le \sum_{j=1}^M\nu_j\norm{(-\Delta)^{-s/2}f_j}_{L^{\frac{2d}{d-2s}}(\R^d)}^2\le C\sum_{j=1}^M\nu_j,$$
which is weaker than \eqref{eq:hom-Sob-M-fct}. Finally, a recent work by Frank, Lewin, Lieb, and Seiringer \cite{FraLewLieSei-14} generalizes the Strichartz inequality
\begin{equation}\label{eq:strichartz-1-fct}
  \norm{e^{it\Delta}f}_{L^p_tL^q_x(\R\times\R^d)}\le C\norm{f}_{L^2_x(\R^d)},\:d\ge1,\:p,q\ge2,\:\frac2p+\frac dq=\frac d2,\:(d,p,q)\neq(2,\ii,2),
\end{equation}
to orthonormal functions $f_1,\ldots,f_M$ in $L^2(\R^d)$ with complex coefficients $\nu_1,\ldots,\nu_M$:
\begin{equation}\label{eq:strichartz-M-fct}
 \norm{\sum_{j=1}^M\nu_j\left|e^{it\Delta}f_j\right|^2}_{L^{\frac p2}_tL^{\frac q2}_x(\R\times\R^d)}\le C\left(\sum_{j=1}^M|\nu_j|^\alpha\right)^{1/\alpha},\:1\le\alpha\le2q/(q+2),\:q\le2+\frac4d.
\end{equation}
These generalized Strichartz estimates were used in \cite{LewSab-13a,LewSab-13b} to study the nonlinear evolution of quantum systems with an infinite number of particles. 

Our motivation to prove inequalities of the form \eqref{eq:ES-orthonormal} is threefold. Harmonic analysis tools are widely used in nonlinear problems. With a perspective to study nonlinear many-body problems, it is thus natural to understand the model restriction problem in this many-body context. In a more concrete approach, it was noticed by Strichartz \cite{Strichartz-77} that the restriction problem for some quadratic surfaces is linked to space-time decay estimates for some evolution equations. As a consequence, we will see that \eqref{eq:ES-orthonormal} when $S$ is one of these quadratic surfaces also has an interpretation in terms of solutions to evolution equations. In particular, we provide a new proof of the Strichartz inequality \eqref{eq:strichartz-M-fct} which furthermore includes the \emph{full range of exponents} where it is valid. As in the original article of Strichartz, this also provides new Strichartz inequalities for orthonormal functions for the fractional Laplacian $(-\Delta)^{1/2}$ and for the pseudo-relativistic operator $(1-\Delta)^{1/2}$. Finally, we present a general principle which allows to obtain bounds of the kind \eqref{eq:ES-orthonormal}, not necessarily in the context of the extension operator. The advantage of this principle is that it allows to deduce \emph{automatically} bounds for orthonormal systems from bounds for a single function, if these latter were proved by a certain method based on complex interpolation. Let us now provide some insight about this principle. 

One of the reasons why the case $q=2$ of the restriction problem is better understood is that we can compose the maps $\cE_S$ and $\cR_S=(\cE_S)^*$. In particular, $\cE_S$ is bounded from $L^2(S)$ to $L^{p'}(\R^N)$ if and only if $T_S:=\cE_S(\cE_S)^*$ is bounded from $L^p(\R^N)$ to $L^{p'}(\R^N)$. Stein \cite{Stein-86} and Strichartz \cite{Strichartz-77} prove the boundedness of $T_S$ by introducing an analytic family of operators $(T_z)$ defined on a strip $a\le\text{Re}\, z\le b$ in the complex plane, such that $T_S=T_c$ for some $c\in(a,b)$. They prove that $T_z$ is a bounded operator from $L^2$ to $L^2$ on the line $\text{Re}\,z=b$ and from $L^1$ to $L^\ii$ on the line $\text{Re}\,z=a$. Using Stein's interpolation theorem \cite{Stein-56}, they deduce that $T_S=T_c$ is bounded from $L^p$ to $L^{p'}$ for some exponent $p$, which turns out to be the optimal one. Now notice that Hölder's inequality  implies that $T_S$ is bounded from $L^p(\R^N)$ to $L^{p'}(\R^N)$ if and only if for any $W_1,W_2\in L^{2p/(2-p)}(\R^N)$, the operator $W_1T_SW_2$ is bounded from $L^2(\R^N)$ to $L^2(\R^N)$ with the estimate
\begin{equation}\label{eq:bounded-WTSW}
 \norm{W_1T_SW_2}_{L^2(\R^N)\to L^2(\R^N)}\le C\norm{W_1}_{L^{2p/(2-p)}(\R^N)}\norm{W_2}_{L^{2p/(2-p)}(\R^N)},
\end{equation}
with $C>0$ independent of $W_1,W_2$. 

Our key contribution is to show that the operator $W_1T_SW_2$ is more than a mere bounded operator on $L^2$, namely that it belongs to a Schatten class. Recall that the Schatten class $\gS^\alpha(L^2(\R^N))$, $\alpha>0$, is defined as the space of all compact operators on $L^2(\R^N)$ such that the sequence of their singular values belongs to $\ell^\alpha$. The $\gS^\alpha$-norm of such an operator is then the $\ell^\alpha$-norm of its singular values. The estimate that we prove is 
\begin{equation}\label{eq:schatten-WTSW}
 \norm{W_1T_SW_2}_{\gS^\alpha(L^2(\R^N))}\le C\norm{W_1}_{L^{2p/(2-p)}(\R^N)}\norm{W_2}_{L^{2p/(2-p)}(\R^N)},
\end{equation}
with $C>0$ independent of $W_1,W_2$. Of course, \eqref{eq:schatten-WTSW} implies \eqref{eq:bounded-WTSW} and is hence stronger. By a well-known duality argument (see Lemma \ref{lemma:duality-principle} below), the bound \eqref{eq:schatten-WTSW} is equivalent to a bound of the kind \eqref{eq:ES-orthonormal} for systems of orthonormal functions. The estimate \eqref{eq:schatten-WTSW} follows from a general principle which states that, as soon as a linear operator $T$ belongs to an analytic family of operators of the same type as the one we described above, then the operator $W_1TW_2$ satisfies a bound of the form \eqref{eq:schatten-WTSW}. As we explained, under such assumptions on $T$, Stein's interpolation theorem would typically imply that $T$ is bounded from $L^p$ to $L^{p'}$. Our main input is to notice that $T$ actually satisfies a stronger Schatten bound \eqref{eq:schatten-WTSW}.

This general principle does not depend on the fact that $T_S$ can be decomposed as $T_S=\cE_S(\cE_S)^*$. This decomposition is only necessary to deduce \eqref{eq:ES-orthonormal} from \eqref{eq:schatten-WTSW}. However, the Schatten bound \eqref{eq:schatten-WTSW} has an interest by itself, even when it holds for an operator $T$ which is not of the form $T=AA^*$. An example of such an operator is given by the resolvent of the Laplacian on $\R^N$,
$$T=(-\Delta-z)^{-1},\qquad z\in\C\setminus[0,\ii).$$
In \cite{KenRuiSog-87}, Kenig, Ruiz, and Sogge prove the bound 
\begin{equation}\label{eq:keruso}
  \norm{(-\Delta-z)^{-1}}_{L^p(\R^N)\to L^{p'}(\R^N)}\le C|z|^{-1+N\left(\frac1p-\frac12\right)},
\end{equation}
for a range of exponents $p$ which depends on $N$. When $z$ varies away from a neighborhood of the origin, these bounds are uniform in $z$ and hence were labeled \emph{uniform Sobolev estimates}. In \cite{KenRuiSog-87}, the authors used them to prove unique continuation results for solutions to Schrödinger equations, but numerous other applications of these estimates were later found. For instance, they were part of Goldberg and Schlag's proof of the Limiting Absorption Principle in $L^p$ spaces \cite{GolSch-04}; see also \cite{IonSch-06}. In \cite{Frank-11}, they were used to control the size of eigenvalues of Schrödinger operators with complex-valued potentials. In this paper, we show that the boundedness statement contained in \eqref{eq:keruso} can be upgraded to a quantitative compactness statement. That is, we prove a Schatten bound of the type \eqref{eq:schatten-WTSW} for $T=(-\Delta-z)^{-1}$, using our general principle. As an application of this, we then deduce from it a version of the Limiting Absorption Principle in Schatten spaces. This will allow us to prove optimal bounds on the scattering matrix for Schr\"odinger operators. As another application, we will prove bounds for \emph{sums} of eigenvalues of Schr\"odinger operators with complex-valued potentials. These bounds control the accumulation of eigenvalues at points in $(0,\infty)$ in a significantly better way than previous bounds.

The organization of the article is the following, with the corresponding main results of each section:

\begin{itemize}[leftmargin=*]

 \item In Section \ref{sec:complex-interpolation}, we first explain our general principle (Proposition \ref{prop:complex-interpolation}) from which the Schatten bounds \eqref{eq:schatten-WTSW} follow. We then apply this principle to various situations.\\

 \item In Section \ref{sec:restriction}, we prove restriction theorems for systems of orthonormal functions (Theorem \ref{thm:restriction-orthonormal}). We also prove a result about the optimality of these estimates (Theorem \ref{thm:optimality-schatten-exponent}).\\

 \item In Section \ref{sec:strichartz}, we extend the Strichartz inequalities of \cite{FraLewLieSei-14} to the full range of exponents (Theorem \ref{thm:strichartz-orthonormal-general}). We furthermore prove new Strichartz inequalities for systems of orthonormal functions, for the operators $(-\Delta)^{1/2}$ (Theorem \ref{thm:strichartz-orthonormal-fractional}) and $(1-\Delta)^{1/2}$ (Theorem \ref{thm:strichartz-orthonormal-pseudo-relativistic}).\\

 \item In Section \ref{sec:sobolev}, we prove uniform Sobolev estimates in Schatten spaces (Theorem \ref{thm:uniform-sobolev-schatten}), and apply them to prove a Limiting Absorption Principle in Schatten spaces (Theorem \ref{thm:LAP}).\\

 \item In Section \ref{sec:hartree}, we give an application of Strichartz estimates to well-posedness of the non-linear Hartree equation in Schatten spaces (Theorem \ref{thm:gwp-hartree}).\\

 \item In Section \ref{sec:LT}, we give another application of uniform Sobolev estimates to Lieb--Thirring inequalities for Schrödinger operators with complex-valued potentials (Theorem \ref{thm:LT-complex}).\\

 \item In Section \ref{sec:SCATMAT}, we finally give an application of the Limiting Absorption Principle to Schatten estimates of the Scattering Matrix (Theorem \ref{thm:scatmat}).
    
 \end{itemize}

\section{A complex interpolation estimate in Schatten spaces}\label{sec:complex-interpolation}

In this section, we explain how to obtain Schatten bounds of the form \eqref{eq:schatten-WTSW} by a complex interpolation method. The advantage of this result is that it requires assumptions that are naturally proved when one wants to show that a given operator is bounded from $L^p(\R^N)$ to $L^{p'}(\R^N)$ by a complex interpolation method. Hence, it provides an automatic way to upgrade this $L^p\to L^{p'}$ bound into a stronger Schatten bound. 

Let us first recall that a family of operators $(T_z)$ on $\R^N$ defined on a strip $a\le\text{Re}\,z\le b$ in the complex plane ($a<b$) is analytic in the sense of Stein \cite{Stein-56} if for all simple functions $f,g$ on $\R^N$ (that is, functions that take a finite number of nonzero values on sets of finite measure in $\R^N)$, the map $z\mapsto\langle g,T_z f\rangle$ is analytic in $a<{\rm Re}\,z<b$, continuous in $a\le{\rm Re}\,z\le b$, and if 
$$\sup_{a\le x \le b}|\langle g,T_{x+is}f\rangle|\le C(s),$$
for some $C(s)$ with at most a (double) exponential growth in $s$. 

We also recall the definition of Schatten spaces, see, e.g., \cite{Simon-79}. Let $\gH$ be a complex Hilbert space. For any compact operator $T$ on $\gH$, the non-zero eigenvalues of $\sqrt{T^*T}$ are called the singular values of $T$. They form an (at most) countable set that we denote by $(\mu_n(T))_{n\in\N}$. For $\alpha>0$, the Schatten space $\gS^\alpha(\gH)$ is defined as the space of all compact operators $T$ on $\gH$ such that $\sum_{n\in\N}\mu_n(T)^\alpha<\ii$. When $\alpha\ge1$, it is a Banach space endowed with the norm
$$\norm{T}_{\gS^\alpha(\gH)}:=\left(\sum_{n\in\N}\mu_n(T)^\alpha\right)^{1/\alpha}.$$
Our result is the following.

\begin{proposition}\label{prop:complex-interpolation}
  Let $(T_z)$ be an analytic family of operators on $\R^N$ in the sense of Stein defined on the strip $-\lambda_0\le{\rm Re}\,z\le0$ for some $\lambda_0>1$. Assume that we have the bounds
  \begin{equation}
    \norm{T_{is}}_{L^2\to L^2}\le M_0 e^{a|s|},\quad\norm{T_{-\lambda_0+is}}_{L^1\to L^\ii}\le M_1e^{b|s|},\quad\forall s\in\R,
  \end{equation}
  for some $a,b\ge0$ and for some $M_0,M_1\ge0$. Then, for all $W_1,W_2\in L^{2\lambda_0}(\R^N,\C)$, the operator $W_1T_{-1}W_2$ belongs to $\gS^{2\lambda_0}(L^2(\R^N))$ and we have the estimate
  \begin{equation}\label{eq:schatten-bound-complex-interpolation}
   \norm{W_1T_{-1}W_2}_{\gS^{2\lambda_0}(L^2(\R^N))}\le M_0^{1-\frac{1}{\lambda_0}}M_1^{\frac{1}{\lambda_0}}\norm{W_1}_{L^{2\lambda_0}(\R^N)}\norm{W_2}_{L^{2\lambda_0}(\R^N)}.
  \end{equation} 
\end{proposition}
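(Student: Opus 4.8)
### Proof Plan

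The plan is to run Stein's complex interpolation theorem, but with values in Schatten spaces rather than $L^p$ spaces, exploiting the fact that the two endpoint spaces for the operator $W_1 T_z W_2$ are $\gS^\infty$ (bounded operators) on the line $\re z = 0$ and $\gS^2$ (Hilbert--Schmidt operators) on the line $\re z = -\lambda_0$. The key structural observation is that multiplying a bounded operator $L^1 \to L^\infty$ by $L^2$ weights on both sides produces a Hilbert--Schmidt operator: indeed, if $T_{-\lambda_0}$ has an integral kernel $K(x,y)$ with $\|T_{-\lambda_0}\|_{L^1 \to L^\infty} = \|K\|_{L^\infty} \le M_1 e^{b|s|}$, then $W_1 T_{-\lambda_0+is} W_2$ has kernel $W_1(x) K_s(x,y) W_2(y)$, whose $L^2(\R^N \times \R^N)$ norm is bounded by $\|W_1\|_{L^2} \|K_s\|_{L^\infty} \|W_2\|_{L^2} \le M_1 e^{b|s|} \|W_1\|_{L^2}\|W_2\|_{L^2}$. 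So on the edge $\re z = -\lambda_0$ we get the $\gS^2$ bound, and on the edge $\re z = 0$ the hypothesis gives directly $\|W_1 T_{is} W_2\|_{\gS^\infty} \le \|W_1\|_{L^\infty} M_0 e^{a|s|} \|W_2\|_{L^\infty}$.

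First I would set up the analytic family properly. Fix $W_1, W_2$ and, by a standard density/truncation argument, reduce to the case where $W_1, W_2$ are bounded with compact support (so that all the quantities below are finite and one can pass to the limit at the end by the lower semicontinuity of Schatten norms, or simply prove the estimate for such $W_i$ and extend by density in $L^{2\lambda_0}$). Write $W_i = |W_i|^{\theta(z)} \omega_i$ where $\omega_i$ is the phase (of modulus one) and $\theta(z)$ is an affine function of $z$ chosen so that $\theta(-\lambda_0) $ corresponds to the $L^2$-endpoint exponent and $\theta(0)$ to the $L^\infty$-endpoint — concretely, one interpolates the exponents so that at $\re z = -1$ one recovers the genuine weights $W_i \in L^{2\lambda_0}$. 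Define the analytic family
$$
S_z := |W_1|^{p(z)}\,\omega_1\, T_z\, \omega_2\, |W_2|^{p(z)},
\qquad p(z) = \lambda_0 \cdot \frac{-z}{\lambda_0}\ \text{rescaled appropriately},
$$
so that $p$ interpolates between the value making $|W_i|^{p}\in L^\infty$ on $\re z = 0$ and the value making $|W_i|^{p}\in L^2$ on $\re z = -\lambda_0$, with $S_{-1} = W_1 T_{-1} W_2$. One checks $(S_z)$ is analytic in the sense of Stein (matrix elements against simple functions are analytic, continuous, with controlled growth, since $(T_z)$ is and the weight factors $|W_i|^{p(z)}$ are entire in $z$ with at most polynomial-in-$|s|$ growth on vertical lines because $|W_i|$ is bounded with finite-measure support).

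Next I would invoke the complex interpolation theorem for the Schatten-space-valued analytic family (this is the operator-valued analogue of Stein's theorem; it follows from the three-lines lemma applied to $z \mapsto \tr(A S_z B)$ for finite-rank $A, B$, together with the interpolation identity $[\gS^\infty, \gS^2]_{1/\lambda_0} = \gS^{2\lambda_0}$ — or one can cite the complex interpolation of Schatten classes directly, e.g. from Simon's book). With the endpoint bounds established above — $\|S_{is}\|_{\gS^\infty} \le M_0 e^{a|s|} \|W_1\|_{L^{2\lambda_0}}^{?}\cdots$, here one must be careful that the weight factors are normalized so the $L^{2\lambda_0}$ norms of $W_1, W_2$ appear with the correct powers on each edge — interpolation at $\re z = -1$, i.e. at interpolation parameter $\tfrac{1}{\lambda_0}$ between the lines, yields exactly
$$
\|W_1 T_{-1} W_2\|_{\gS^{2\lambda_0}} \le M_0^{1-\frac{1}{\lambda_0}} M_1^{\frac{1}{\lambda_0}} \|W_1\|_{L^{2\lambda_0}} \|W_2\|_{L^{2\lambda_0}},
$$
after absorbing the exponential factors $e^{a|s|}, e^{b|s|}$ into the admissible growth allowed by Stein's theorem (this is why at most double-exponential growth is permitted in the definition, but here we only have single-exponential, so it is comfortably within range).

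The main obstacle I anticipate is purely bookkeeping: choosing the affine exponent $p(z)$ so that simultaneously (i) $|W_i|^{p(z)} \in L^\infty$ with the right norm on $\re z = 0$, (ii) $|W_i|^{p(z)} \in L^2$ with the right norm on $\re z = -\lambda_0$, and (iii) $S_{-1}$ is literally $W_1 T_{-1} W_2$ — and then verifying that the powers of $\|W_i\|_{L^{2\lambda_0}}$ combine to give exactly $\|W_1\|_{L^{2\lambda_0}}\|W_2\|_{L^{2\lambda_0}}$ (not some other power). The arithmetic: on $\re z = -\lambda_0$ one needs $p \cdot 2 = 2\lambda_0$, i.e. $p = \lambda_0$, giving $\||W_i|^{\lambda_0}\|_{L^2} = \|W_i\|_{L^{2\lambda_0}}^{\lambda_0}$; on $\re z = 0$ one needs $p = 0$; so $p(z) = -z$ works, and $p(-1) = 1$. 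Then the $\gS^2$ edge bound is $M_1 e^{b|s|} \|W_1\|_{L^{2\lambda_0}}^{\lambda_0}\|W_2\|_{L^{2\lambda_0}}^{\lambda_0}$ and the $\gS^\infty$ edge bound is $M_0 e^{a|s|}$; interpolating with parameter $\theta = 1/\lambda_0$ gives $\gS^{2\lambda_0}$ norm $\le M_0^{1-1/\lambda_0} M_1^{1/\lambda_0} \|W_1\|_{L^{2\lambda_0}}\|W_2\|_{L^{2\lambda_0}}$, as claimed. The only genuine analytic point requiring a word of justification is that $T_{-\lambda_0+is}$, being bounded $L^1 \to L^\infty$, is given by a bounded measurable kernel (Dunford--Pettis / Schwartz kernel theorem for such operators), so that the Hilbert--Schmidt computation is legitimate; I would state this explicitly.
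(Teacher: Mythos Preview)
Your proposal is correct and follows essentially the same argument as the paper: define the analytic family $S_z = |W_1|^{-z}\omega_1 T_z \omega_2 |W_2|^{-z}$, use Dunford--Pettis to extract a bounded kernel on the line $\re z=-\lambda_0$ and hence a Hilbert--Schmidt bound, use the $L^2\to L^2$ bound on $\re z=0$, and interpolate in Schatten spaces (the paper cites \cite[Thm.~2.9]{Simon-79}) at parameter $1/\lambda_0$. The only cosmetic difference is that the paper first takes $W_1,W_2$ non-negative and simple (removing the phase afterwards via $W_j=e^{i\phi_j}|W_j|$), whereas you build the phase $\omega_j$ into the family from the start; both work.
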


\begin{proof}
 Let $W_1,W_2$ be non-negative, simple functions, and define the family of operators
 $$S_z:=W_1^{-z}T_zW_2^{-z}.$$
 The family $(S_z)$ is still analytic in the sense of Stein in the strip $-\lambda_0\le{\rm Re}\,z\le0$, and satisfies $S_{-1}=W_1T_{-1}W_2$. For all $s\in\R$ we have the first bound
 $$\norm{S_{is}}_{L^2\to L^2}\le \norm{W_1^{-is}}_{L^\ii}\norm{T_{is}}_{L^2\to L^2}\norm{W_2^{-is}}_{L^\ii}\le M_0e^{a|s|}.$$
 By the Dunford--Pettis theorem \cite[Thm. 2.2.5]{DunPet-40}, the operator $T_{-\lambda_0+is}$ has an integral kernel $T_{-\lambda_0+is}(x,y)$ satisfying
 $$\norm{T_{-\lambda_0+is}(\cdot,\cdot)}_{L^\ii(\R^N\times\R^N)}=\norm{T_{-\lambda_0+is}}_{L^1(\R^N)\to L^\ii(\R^N)}\le M_1e^{b|s|},\quad\forall s\in\R.$$
 Hence, we deduce the Hilbert--Schmidt bound
 \begin{align*}
    \norm{S_{-\lambda_0+is}}_{\gS^2}^2 &=\int_{\R^N}\int_{\R^N}W_1(x)^{2\lambda_0}\left|T_{-\lambda_0+is}(x,y)\right|^2W_2(y)^{2\lambda_0}\,dxdy\\
    &\le M_1^2e^{2b|s|}\norm{W_1}_{L^{2\lambda_0}(\R^N)}^{2\lambda_0}\norm{W_2}_{L^{2\lambda_0}(\R^N)}^{\lambda_0}.
 \end{align*}
 By \cite[Thm. 2.9]{Simon-79}, we deduce that $S_{-1}$ belongs to $\gS^{2\lambda_0}(L^2(\R^N))$ with 
 $$\norm{S_{-1}}_{\gS^{2\lambda_0}(L^2(\R^N))}\le M_0^{1-\frac{1}{\lambda_0}}M_1^{\frac{1}{\lambda_0}}\norm{W_1}_{L^{2\lambda_0}(\R^N)}\norm{W_2}_{L^{2\lambda_0}(\R^N)}.$$
 Hence, we have proved \eqref{eq:schatten-bound-complex-interpolation} for $W_1,W_2$ non-negative and simple. The non-negativity assumption can be removed by writing $W_j=e^{i\phi_j}|W_j|$ and estimating
 $$\norm{W_1T_{-1}W_2}_{\gS^{2\lambda_0}(L^2(\R^N))}\le \norm{e^{i\phi_1}}_{L^2\to L^2}\norm{|W_1|T_{-1}|W_2|}_{\gS^{2\lambda_0}}\norm{e^{i\phi_2}}_{L^2\to L^2}\le \norm{|W_1|T_{-1}|W_2|}_{\gS^{2\lambda_0}},$$
 and the simplicity assumption is removed by density.
\end{proof}

\begin{remark}
 The previous proof shows that the conclusion of Proposition \ref{prop:complex-interpolation} also holds when $\lambda_0=1$: in this case, there is even no interpolation to perform. 
\end{remark}

When furthermore we can decompose $T_{-1}=AA^*$, we deduce from \eqref{eq:schatten-bound-complex-interpolation} a corresponding result for systems of orthonormal functions.

\begin{lemma}[Duality principle]\label{lemma:duality-principle}
 Let $\gH$ be a separable Hilbert space. Assume that $A$ is a bounded operator from $\gH$ to $L^{p'}(\R^N)$ for some $1\le p\le2$ and let $\alpha\ge 1$. Then the following are equivalent.
\begin{enumerate}
\item[(i)] There is a constant $C>0$ such that
\begin{equation}\label{eq:schatten-WA}
  \norm{WAA^*\bar{W}}_{\gS^\alpha(L^2(\R^N))}\le C\norm{W}_{L^{2p/(2-p)}(\R^N)}^2,\quad\forall W\in L^{2p/(2-p)}(\R^N,\C).
 \end{equation}
\item[(ii)] There is a constant $C'>0$ such that for any orthonormal system $(f_j)_{j\in J}$ in $\gH$ and any sequence $(\nu_j)_{j\in J}\subset\C$,
 \begin{equation}\label{eq:est-orthonormal-H}
  \norm{\sum_{j\in J}\nu_j\left|Af_j\right|^2}_{L^{p'/2}(\R^N)}\le C' \left(\sum_{j\in J}|\nu_j|^{\alpha'}\right)^{1/\alpha'} \,.
 \end{equation}
\end{enumerate}
Moreover, the values of the optimal constants $C$ and $C'$ coincide.
\end{lemma}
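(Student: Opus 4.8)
The plan is to prove this as a duality statement between Schatten classes and their preduals, exploiting the fact that the operator $WAA^*\bar W$ is self-adjoint and non-negative, so its $\gS^\alpha$-norm can be tested against $\gS^{\alpha'}$ operators, and more specifically against non-negative trace-class combinations of rank-one projections. First I would record the elementary identity
$$\sum_{j\in J}\nu_j|Af_j(x)|^2 = \overline{W(x)}^{-1}\,\Big(WA\,\gamma\,A^*\bar W\Big)(x,x)\,W(x)^{-1}$$
is not quite what is needed; instead the cleaner route is to observe that for $\gamma=\sum_j \nu_j |f_j\rangle\langle f_j|$ a self-adjoint operator on $\gH$ with eigenvalues $\nu_j$, the function $\rho_\gamma:=\sum_j \nu_j|Af_j|^2$ is exactly the diagonal of the kernel of $A\gamma A^*$, i.e.\ $\rho_\gamma = \rho_{A\gamma A^*}$ in the usual notation for the density of a trace-class operator. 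So (ii) says precisely
$$\big\| \rho_{A\gamma A^*}\big\|_{L^{p'/2}} \le C'\,\|\gamma\|_{\gS^{\alpha'}(\gH)},$$
since $\big(\sum|\nu_j|^{\alpha'}\big)^{1/\alpha'}=\|\gamma\|_{\gS^{\alpha'}}$ when $(f_j)$ is orthonormal, and an arbitrary self-adjoint $\gamma\in\gS^{\alpha'}$ has such a spectral decomposition. (General, not necessarily self-adjoint, $\gamma$ can be reduced to the self-adjoint case, or one simply notes both sides are unchanged under the polar decomposition in the relevant way; alternatively, testing only against rank-one $\gamma=|f\rangle\langle f|$ and then against finite non-negative combinations suffices to recover the general orthonormal statement, because the left side of (ii) is the pairing of $\gamma$ with the positive operator $A^*(\text{mult.\ by }\overline{V})A$.)

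Next, the core of the argument is the duality pairing. For a non-negative $V\in L^{(p'/2)'}=L^{p/(p-1)\cdot\frac12\cdots}$ — more precisely $V\in L^{(p'/2)'}(\R^N)$ with $(p'/2)' = p'/(p'-2)$ — we have by definition of the $L^{p'/2}$ norm (using $p'/2\ge 1$, which holds since $p\le 2$):
$$\big\|\rho_{A\gamma A^*}\big\|_{L^{p'/2}} = \sup_{\|V\|_{(p'/2)'}\le 1}\ \int_{\R^N} V(x)\,\rho_{A\gamma A^*}(x)\,dx = \sup_{\|V\|_{(p'/2)'}\le1}\ \Tr\big(V\, A\gamma A^*\big),$$
where in the last step $V$ denotes multiplication by $V$ and we used that $\rho_{A\gamma A^*}$ pairs with $V$ exactly as the trace of the product. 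Writing $V=|W|^2$ with $\|W\|_{2p/(2-p)}^2=\|V\|_{(p'/2)'}$ (note $2\cdot p/(2-p)$ paired appropriately: indeed $2p/(2-p)$ is twice the exponent $(p'/2)'=p/(p-1)$ — one checks $2\cdot\frac{p}{2-p} = $ hmm, rather $\big(2p/(2-p)\big)/2 = p/(2-p)$ and $(p'/2)'=p'/(p'-2)=p/(2-p)$, so indeed $V=|W|^2$ with the exponents matching), we get
$$\Tr\big(|W|^2 A\gamma A^*\big) = \Tr\big(\bar W A\,\gamma\, A^* W\big) = \Tr\big(\gamma\, A^* W\, \bar W A\big),$$
and then by the Hölder inequality for Schatten classes, $|\Tr(\gamma\, B)|\le \|\gamma\|_{\gS^{\alpha'}}\|B\|_{\gS^{\alpha}}$ with $B = A^* W\bar W A = (WAA^*\bar W)^*$ up to taking adjoints (which preserves $\gS^\alpha$-norm). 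This shows (i)$\Rightarrow$(ii) with $C'\le C$, after taking the supremum over $\gamma$ with $\|\gamma\|_{\gS^{\alpha'}}\le 1$ and over $V$.

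For the converse (ii)$\Rightarrow$(i), I would run the same chain of equalities in reverse: for fixed $W\in L^{2p/(2-p)}$, the $\gS^\alpha$-norm of the self-adjoint positive operator $K:=\bar W A A^* W$ equals $\sup\{\Tr(\gamma K):\ \gamma\ge0,\ \|\gamma\|_{\gS^{\alpha'}}\le1\}$ — here self-adjointness/positivity of $K$ lets us restrict the testing operators $\gamma$ to be non-negative, hence of the form $\gamma=\sum_j\nu_j|f_j\rangle\langle f_j|$ with $\nu_j\ge0$ and $(f_j)$ orthonormal — and $\Tr(\gamma K) = \Tr(|W|^2 A\gamma A^*) = \int |W|^2\rho_{A\gamma A^*} \le \||W|^2\|_{(p'/2)'}\,\|\rho_{A\gamma A^*}\|_{L^{p'/2}} \le \|W\|_{2p/(2-p)}^2\cdot C'\,(\sum\nu_j^{\alpha'})^{1/\alpha'} \le C'\,\|W\|_{2p/(2-p)}^2$. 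The general complex $W$ is reduced to $|W|$ exactly as in the proof of Proposition \ref{prop:complex-interpolation}, by pulling out unimodular phase factors which are unitary multiplication operators and do not change the $\gS^\alpha$-norm. Tracking constants in both directions gives $C=C'$.

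The main obstacle — really the only subtle point — is the passage between "arbitrary trace-class $\gamma$" and "non-negative spectral combinations $\sum\nu_j|f_j\rangle\langle f_j|$ with orthonormal $f_j$", i.e.\ justifying that one may restrict the dual testing operators to be non-negative (using self-adjointness and positivity of $WAA^*\bar W$) and simultaneously that the resulting supremum is attained/approximated within the orthonormal systems of (ii). This requires a careful use of the variational characterization of the $\gS^\alpha$ norm on the cone of positive operators, together with an approximation argument since the system $(f_j)$ may be infinite; one must also make sure the finite-rank truncations converge in the relevant norms so that the constant is not lost in the limit. Everything else — Schatten–Hölder, the $\rho_{A\gamma A^*}$ identity, the phase reduction — is routine. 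I would also double-check the arithmetic of the conjugate exponents, i.e.\ that $(p'/2)'=p/(2-p)$ and hence that writing the dual weight as $|W|^2$ with $W\in L^{2p/(2-p)}$ is exactly the right normalization, since this is where the exponent $2p/(2-p)$ in \eqref{eq:schatten-WA} comes from.
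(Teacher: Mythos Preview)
Your approach is correct and essentially identical to the paper's: both reduce to the trace identity $\Tr(\gamma\,A^*|W|^2A)=\int|W|^2\rho_{A\gamma A^*}$, apply H\"older in Schatten spaces together with $L^{p'/2}$--$L^{p/(2-p)}$ duality, and handle the reduction to non-negative $\nu_j$ (your ``main obstacle'') exactly as the paper does via the observation that $\big|\sum_j\nu_j|Af_j|^2\big|\le\sum_j|\nu_j||Af_j|^2$. The paper's write-up is more streamlined---it avoids your false start and the fumbled exponent computation---but the substance is the same, and the paper likewise omits the (ii)$\Rightarrow$(i) direction as ``similar.''
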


\begin{proof}
 First, notice that (ii) is equivalent to (ii'), which is the same as (ii) but with the additional restriction that all $\nu_j\geq 0$. Indeed, assuming (ii') and taking $(\nu_j)\subset\C$, one has by the triangle inequality in $\C$
 $$\norm{\sum_{j\in J}\nu_j\left|Af_j\right|^2}_{L^{p'/2}(\R^N)}\le\norm{\sum_{j\in J}|\nu_j|\left|Af_j\right|^2}_{L^{p'/2}(\R^N)} \le C' \left(\sum_{j\in J}|\nu_j|^{\alpha'}\right)^{1/\alpha'} \,,$$
which is (ii) (with the same constant $C'$). We thus show that (i) implies (ii'). Let $(f_j)_{j\in J}$ an orthonormal system in $\gH$ and $(\nu_j)_{j\in J}\subset\R_+$. We define an operator $\gamma$ on $\gH$ with eigenfunctions $(f_j)$ and corresponding eigenvalues $(\nu_j)$. In Dirac's notation, we have
 $$\gamma=\sum_{j\in J}\nu_j|f_j\rangle\langle f_j|,$$
 where $|f_j\rangle\langle f_j|$ denotes the orthogonal projection on $\C f_j\subset\gH$. Estimate \eqref{eq:schatten-WA} is equivalent to 
 \begin{equation}\label{eq:schatten-W2A}
  \norm{A^*|W|^2A}_{\gS^\alpha(\gH)}\le C\norm{W}_{L^{2p/(2-p)}(\R^N)}^2,\quad\forall W\in L^{2p/(2-p)}(\R^N,\C).
 \end{equation}
 Using \eqref{eq:schatten-W2A} and Hölder's inequality in Schatten spaces \cite[Thm. 2.8]{Simon-79}, we deduce that 
 \begin{align*}
    \tr_{L^2(\R^N)}(WA\gamma(WA)^*)=\tr_{\gH}(\gamma A^*|W|^2 A)&\le C\norm{\gamma}_{\gS^{\alpha'}(\gH)}\norm{W}_{L^{2p/(2-p)}(\R^N)}^2\\
    &=C\left(\sum_{j\in J}\nu_j^{\alpha'}\right)^{1/\alpha'}\norm{W}_{L^{2p/(2-p)}(\R^N)}^2.
 \end{align*}
Since we have the identity
$$\tr_{L^2(\R^N)}(WA\gamma(WA)^*)=\int_{\R^N}\left(\sum_{j\in J}\nu_j|(Af_j)(x)|^2\right)|W(x)|^2\,dx,$$
we infer that for all $V\in L^{p/(2-p)}(\R^N)$ with $V\ge0$, 
$$\int_{\R^N}\left(\sum_{j\in J}\nu_j|(Af_j)(x)|^2\right)V(x)\,dx\le C\left(\sum_{j\in J}\nu_j^{\alpha'}\right)^{1/\alpha'}\norm{V}_{L^{p/(2-p)}(\R^N)}.$$
The duality principle for $L^p$-spaces (or choosing $V\equiv1$ when $p=2$) leads to \eqref{eq:est-orthonormal-H}, since $(p/(2-p))'=p'/2$. Thus, (i) implies (ii'). The proof that (ii') implies (i) is similar and will be omitted.  
\end{proof}

\begin{remark}\label{rk:density}
 The previous proof shows that \eqref{eq:schatten-WA} and \eqref{eq:est-orthonormal-H} are equivalent to the following bound: for any $\gamma\in\gS^{\alpha'}(\gH)$, we have 
 \begin{equation}\label{eq:est-rho-Agamma}
  \norm{\rho_{A\gamma A^*}}_{L^{p'/2}(\R^N)}\le C\norm{\gamma}_{\gS^{\alpha'}(\gH)},
 \end{equation}
 with $C>0$ independent of $\gamma$, and where $\rho_{A\gamma A^*}$ is the \emph{density} of the operator $A\gamma A^*$. It is defined for any finite-rank $\gamma$ by duality,
 $$\int_{\R^N}\rho_{A\gamma A^*}(x)V(x)\,dx:=\tr_{\gH}(\gamma A^* V A),$$
 and extended to all $\gamma\in\gS^{\alpha'}(\gH)$ using the density of finite-rank operators in $\gS^{\alpha'}(\gH)$ and the estimate \eqref{eq:est-rho-Agamma} valid for all finite-rank $\gamma$. 
\end{remark}

To illustrate this duality principle, let us consider the case of Young's inequality. The underlying bounded operator $A$ from $\gH=L^2(\R^N)$ to $L^{p'}(\R^N)$ is $Af=g*f$ for some fixed $g\in L^{2p'/(2+p')}(\R^N)$. Then, the corresponding Schatten bound \eqref{eq:schatten-WA} is the Kato--Seiler--Simon inequality \cite[Thm. 4.1]{Simon-79},
\begin{equation}\label{eq:KSS}
  \norm{W|\hat{g}(-i\nabla)|^2\bar{W}}_{\gS^{p/(2-p)}(L^2(\R^N))}\le(2\pi)^{N(1-2/p)}\norm{W}_{L^{2p/(2-p)}(\R^N)}^2\norm{\hat{g}}_{L^{2p/(2-p)}}^2.
\end{equation}
We note that our proof of Proposition \ref{prop:complex-interpolation} is based on complex interpolation much like the proof of \eqref{eq:KSS} in \cite{Simon-79}. Together with Lemma \ref{lemma:duality-principle}, \eqref{eq:KSS} implies the following Young inequality for systems of orthonormal functions, which we have not encountered in the literature in this form. A version in terms of densities, however, is given by \cite[Lemma 1]{LewSab-13a}. 

\begin{theorem}[Young inequality for orthonormal functions]
 Let $N\ge1$, $1\le p\le 2$, and $g\in L^{2p'/(2+p')}(\R^N)$. Then, for any (possibly infinite) orthonormal system $(f_j)$ in $L^2(\R^N)$ and for any $(\nu_j)\subset\C$, we have
\begin{equation}\label{eq:young-orthonormal}
  \norm{\sum_j \nu_j\left|g*f_j\right|^2}_{L^{p'/2}(\R^N)}\le (2\pi)^{\frac{2N}{p'}}\norm{\hat{g}}_{L^{2p/(2-p)}}^2\left(\sum_j|\nu_j|^{\frac{p'}{2}}\right)^{\frac{2}{p'}}.
 \end{equation}
\end{theorem}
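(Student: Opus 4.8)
The plan is to deduce the Young inequality for orthonormal functions directly from the Kato--Seiler--Simon inequality \eqref{eq:KSS} via the duality principle of Lemma \ref{lemma:duality-principle}. First I would identify the relevant ingredients: take $\gH = L^2(\R^N)$ and let $A$ be the convolution operator $Af = g*f$. Since $g \in L^{2p'/(2+p')}(\R^N)$, Young's inequality for a single function gives that $A$ is bounded from $L^2(\R^N)$ to $L^{p'}(\R^N)$, so the hypotheses of Lemma \ref{lemma:duality-principle} are met with this value of $p$.

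Next, I would express $AA^*$ in Fourier variables. Since $\widehat{g*f} = (2\pi)^{N/2}\hat g\,\hat f$ (with the normalization of \eqref{eq:def-ES}), one has $A = (2\pi)^{N/2}\hat g(-i\nabla)$ as a Fourier multiplier, hence $AA^* = (2\pi)^{N}|\hat g(-i\nabla)|^2$. Therefore, for $W \in L^{2p/(2-p)}(\R^N,\C)$,
$$\norm{WAA^*\bar W}_{\gS^{p/(2-p)}(L^2(\R^N))} = (2\pi)^{N}\norm{W|\hat g(-i\nabla)|^2\bar W}_{\gS^{p/(2-p)}(L^2(\R^N))} \le (2\pi)^{N}(2\pi)^{N(1-2/p)}\norm{W}_{L^{2p/(2-p)}(\R^N)}^2\norm{\hat g}_{L^{2p/(2-p)}}^2,$$
where I used \eqref{eq:KSS}. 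Thus condition (i) of Lemma \ref{lemma:duality-principle} holds with $\alpha = p/(2-p)$ and constant $C = (2\pi)^{2N-2N/p}\norm{\hat g}_{L^{2p/(2-p)}}^2$.

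Then I would invoke the equivalence (i) $\Leftrightarrow$ (ii) of Lemma \ref{lemma:duality-principle}, which gives \eqref{eq:est-orthonormal-H} with the same constant $C' = C$ and with the exponent $\alpha'$ on the right-hand side. It remains to compute $\alpha'$: if $\alpha = p/(2-p)$, then $\alpha' = \alpha/(\alpha-1) = \frac{p/(2-p)}{p/(2-p)-1} = \frac{p}{p-(2-p)} = \frac{p}{2p-2} = \frac{p}{2(p-1)}$; rewriting, $\tfrac{1}{\alpha'} = \tfrac{2(p-1)}{p} = 2 - \tfrac2p = \tfrac{2}{p'}$ since $\tfrac1p + \tfrac1{p'}=1$. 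Hence $\alpha' = p'/2$, matching the exponent in \eqref{eq:young-orthonormal}, and the constant becomes $(2\pi)^{2N-2N/p}\norm{\hat g}_{L^{2p/(2-p)}}^2 = (2\pi)^{2N/p'}\norm{\hat g}_{L^{2p/(2-p)}}^2$ using once more $1-\tfrac1p = \tfrac1{p'}$. Substituting $Af_j = g*f_j$ and $p'/2$ for the Lebesgue exponent on the left gives exactly \eqref{eq:young-orthonormal}.

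This proof is essentially a bookkeeping exercise, so there is no serious obstacle; the only point requiring a little care is tracking the powers of $2\pi$ from the chosen Fourier normalization and verifying the arithmetic relating $\alpha$, $\alpha'$, $p$, and $p'$. One should also note the edge case $p=2$ (so $p'=2$, $\alpha=\infty$, $\alpha'=1$), where \eqref{eq:KSS} reduces to the statement that $\hat g(-i\nabla)$ has Hilbert--Schmidt-type behavior after multiplication, and Lemma \ref{lemma:duality-principle} still applies with $\alpha \ge 1$; the $V\equiv 1$ branch of the duality argument in the proof of that lemma covers this case, so no separate treatment is needed.
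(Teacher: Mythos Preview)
Your proposal is correct and follows exactly the route indicated in the paper: the theorem is stated immediately after \eqref{eq:KSS} with the comment that, together with Lemma~\ref{lemma:duality-principle}, \eqref{eq:KSS} implies \eqref{eq:young-orthonormal}. You have simply filled in the bookkeeping (identifying $A=g*\cdot$, computing $AA^*=(2\pi)^N|\hat g(-i\nabla)|^2$, and matching $\alpha'=p'/2$ and the constant $(2\pi)^{2N/p'}$) that the paper leaves implicit.
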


\begin{remark}
 By the Hausdorff--Young inequality, $\norm{\hat{g}}_{L^{2p/(2-p)}}$ is controlled by $\norm{g}_{L^{2p'/(2+p')}}$. For this reason, \eqref{eq:young-orthonormal}, for a single function $f$, is somewhat stronger than Young's inequality.
\end{remark}

\section{Restriction theorems}\label{sec:restriction}

\subsection{Restriction theorems for orthonormal functions}\label{sec:results-restriction}

As explained in the introduction, we consider the same surfaces as Stein \cite{Stein-86} and Strichartz \cite{Strichartz-77}. The surfaces considered by Stein are smooth, compact surfaces embedded in $\R^N$ ($N\ge2$) with non-zero Gauss curvature, endowed with their $(N-1)$-dimensional Lebesgue measure that we denote by $d\sigma$. The quadratic surfaces considered by Strichartz are split into three categories: 

\begin{itemize}[leftmargin=*]
 \item \textbf{Case I:} $S=\{\xi\in\R^N,\,\xi_N=\xi_1^2+\cdots+\xi_a^2-\xi_{a+1}^2-\cdots-\xi_{N-1}^2\}$ where $a=0,\ldots,N-1$. The model case of a surface of this kind is the paraboloid ($a=0,N-1$). In this case, the measure is chosen to be $(1+4\xi_1^2+\cdots+4\xi_{N-1}^2)^{-1/2}d\sigma(\xi)$, where $d\sigma$ is the induced $(N-1)$-dimensional Lebesgue measure on $S$.

 \item \textbf{Case II:} $S=\{\xi\in\R^N,\,\xi_1^2+\cdots+\xi_a^2-\xi_{a+1}^2-\cdots-\xi_N^2=0\}$, where $a=1,\ldots,N-1$. The model case here is the cone ($a=N-1$). The measure is chosen to be $(2|\xi|)^{-1}d\sigma(\xi)$.

 \item \textbf{Case III:} $S=\{\xi\in\R^N,\,\xi_1^2+\cdots+\xi_a^2-\xi_{a+1}^2-\cdots-\xi_N^2=-1\}$, where $a=0,\ldots,N-1$. There are two model cases here: the sphere ($a=0$) and the two-sheeted hyperboloid ($a=N-1$). The measure is chosen to be $(2|\xi|)^{-1}d\sigma(\xi)$.
\end{itemize}
 
Notice that in the case of quadratic surfaces, the measure is not the usual surface measure $d\sigma$. Writing $S$ as $S=\{\xi:\,R(\xi)=0\}$ where $R$ is the degree two polynomial appearing in the definition of $S$, we see that the chosen measure is simply $|\nabla R(\xi)|^{-1}d\xi$.

In any of these two cases, the extension operator $\cE_S$ is defined by \eqref{eq:def-ES}, and we denote $T_S=\cE_S(\cE_S)^*$. Our Schatten bounds on $T_S$ are the following. 

\begin{theorem}[Schatten properties of extension maps---compact case]\label{thm:schatten-WTSW}
 Let $N\ge2$, let $S\subset\R^N$ be a smooth, compact surface with non-zero Gauss curvature and let $1\leq q\leq (N+1)/2$. Then the inequality
 \begin{equation}\label{eq:schatten-WTSW-thm}
  \norm{W_1T_SW_2}_{\gS^{(N-1)q/(N-q)}(L^2(\R^N))}\le C\norm{W_1}_{L^{2q}(\R^N)}\norm{W_2}_{L^{2q}(\R^N)}
 \end{equation}
  holds for all $W_1,W_2$ with a constant $C>0$ independent of $W_1,W_2$.
\end{theorem}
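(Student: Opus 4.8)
The plan is to reduce \eqref{eq:schatten-WTSW-thm} to its two endpoint cases and interpolate. At $q=(N+1)/2$ the inequality reads $\norm{W_1T_SW_2}_{\gS^{N+1}}\le C\norm{W_1}_{L^{N+1}}\norm{W_2}_{L^{N+1}}$, which is the Schatten strengthening of the endpoint Stein--Tomas estimate; at $q=1$ it reads $\norm{W_1T_SW_2}_{\gS^{1}}\le C\norm{W_1}_{L^{2}}\norm{W_2}_{L^{2}}$, an essentially trivial Hilbert--Schmidt bound. For intermediate $q$ one sets $\theta=\frac{N+1}{N-1}\bigl(1-\tfrac1q\bigr)\in[0,1]$; then $\frac{1}{2q}=\frac{1-\theta}{2}+\frac{\theta}{N+1}$ and $\frac{N-q}{(N-1)q}=(1-\theta)+\frac{\theta}{N+1}$, so bilinear complex interpolation of the two endpoints (using $[L^2,L^{N+1}]_\theta=L^{2q}$ and $[\gS^1,\gS^{N+1}]_\theta=\gS^{(N-1)q/(N-q)}$) produces exactly \eqref{eq:schatten-WTSW-thm}. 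Concretely one would fix non-negative simple $W_1,W_2$, form the Stein-analytic family $F(z)=W_1^{e(z)}\,T_S\,W_2^{e(z)}$ with $e(z)=2q\bigl(\frac{1-z}{2}+\frac{z}{N+1}\bigr)$ (so that $\re e(z)>0$ on $0\le\re z\le1$ and $F(\theta)=W_1T_SW_2$), control it in $\gS^1$ on $\re z=0$ and in $\gS^{N+1}$ on $\re z=1$ by the two endpoint estimates, and conclude by complex interpolation in the Schatten scale (cf. the proof of Proposition~\ref{prop:complex-interpolation}); the positivity and simplicity restrictions are removed exactly as there. Thus it remains to establish the two endpoints.

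The endpoint $q=1$ is immediate. Since $S$ is compact, $\sigma(S)<\infty$. Writing $W_1T_SW_2=(W_1\cE_S)(\overline{W_2}\,\cE_S)^{*}$, it suffices to bound each factor $W_j\cE_S\colon L^2(S)\to L^2(\R^N)$ in $\gS^2$; its Hilbert--Schmidt norm is read off from the diagonal of the kernel of $\cE_S^{*}|W_j|^2\cE_S$, which equals $(2\pi)^{-N}\norm{W_j}_{L^2(\R^N)}^2$, so $\norm{W_j\cE_S}_{\gS^2}^2=(2\pi)^{-N}\sigma(S)\norm{W_j}_{L^2(\R^N)}^2$ and hence $\norm{W_1T_SW_2}_{\gS^1}\le(2\pi)^{-N}\sigma(S)\norm{W_1}_{L^2}\norm{W_2}_{L^2}$.

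The endpoint $q=(N+1)/2$ is where Proposition~\ref{prop:complex-interpolation} does the work, with $\lambda_0=(N+1)/2>1$. I would take for $(T_z)_{-\lambda_0\le\re z\le 0}$ the analytic family underlying Stein's proof of the Stein--Tomas theorem \cite{Stein-86}: $T_z$ is convolution with a kernel $K_z$ obtained by analytically continuing a $\Gamma$-normalized version of the family of distributions $\chi\,\rho_+^{z}$ attached to $S$, where $\rho$ is a defining function of $S$ and $\chi\in C_c^\infty$ is supported near $S$, the normalization chosen so that $K_{-1}$ is a fixed multiple of $\widehat{d\sigma}$, i.e. $T_{-1}=c_N T_S$. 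On the line $\re z=0$ the Fourier multiplier of $T_{is}$ is a bounded function, whence $\norm{T_{is}}_{L^2\to L^2}\le M_0 e^{a|s|}$, the exponential growth coming from the normalizing $\Gamma$-factor. On the line $\re z=-(N+1)/2$ the non-vanishing Gaussian curvature of $S$ forces, via stationary phase/van der Corput, the kernel $K_{-(N+1)/2+is}$ to be a bounded function with $\norm{K_{-(N+1)/2+is}}_{L^\infty}\le M_1 e^{b|s|}$, so $\norm{T_{-(N+1)/2+is}}_{L^1\to L^\infty}\le M_1e^{b|s|}$. Proposition~\ref{prop:complex-interpolation} then yields $\norm{W_1T_SW_2}_{\gS^{N+1}}\le C\norm{W_1}_{L^{N+1}}\norm{W_2}_{L^{N+1}}$.

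The main obstacle is the second boundary bound above — that $K_z$ is a \emph{bounded} function on the critical line $\re z=-(N+1)/2$ with controlled growth in $\im z$. This is the analytic core of Stein--Tomas: the relevant decay exponent of $\widehat{d\sigma}$-type kernels is $\re z+\tfrac{N+1}{2}$, which vanishes exactly on this line, so boundedness there is borderline and is the sole place where the curvature hypothesis enters. Everything else is routine bookkeeping: choosing the normalization of the analytic family so that $T_{-1}=c_NT_S$ with the stated $\Gamma$-type growth, the elementary Hilbert--Schmidt identity at $q=1$, and the interpolation arithmetic recorded above.
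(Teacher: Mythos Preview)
Your proposal is correct and follows essentially the same route as the paper: establish the two endpoints $q=1$ (via the trivial Hilbert--Schmidt bound $\norm{W_j\cE_S}_{\gS^2}^2=(2\pi)^{-N}\sigma(S)\norm{W_j}_{L^2}^2$) and $q=(N+1)/2$ (via Proposition~\ref{prop:complex-interpolation} applied to Stein's analytic family), then interpolate in the Schatten scale. The only cosmetic difference is that the paper first passes through a finite partition of unity so that each piece of $S$ is a graph before constructing the analytic family, whereas you encode this in a single cutoff $\chi$ and defining function $\rho$; either way the substance is identical.
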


We shall see later (Theorem \ref{thm:optimality-schatten-exponent}) that for any $1\leq q\leq (N+1)/2$ the Schatten exponent $(N-1)q/(N-q)$ on the left side of \eqref{eq:schatten-WTSW-thm} is optimal (that is, smallest possible). Also the condition $2q\leq N+1$ on the Lebesgue space of $W_1$ and $W_2$ is best possible, since \eqref{eq:schatten-WTSW-thm} fails for $q>(N+1)/2$ even with the operator norm on the left side. This follows from Knapp's argument; see, e.g., \cite{Strichartz-77}.

\begin{theorem}[Schatten properties of extension maps---quadratic case]\label{thm:schatten-WTSW2}
 Let $N\ge2$ and let $S\subset\R^N$ be a quadratic surface. Then the inequality
 \begin{equation}\label{eq:schatten-WTSW-thm2}
  \norm{W_1T_SW_2}_{\gS^{2q}(L^2(\R^N))}\le C\norm{W_1}_{L^{2q}(\R^N)}\norm{W_2}_{L^{2q}(\R^N)}
 \end{equation}
  holds for all $W_1,W_2$ with a constant $C>0$ independent of $W_1,W_2$, under the following assumptions on the exponent $q$:
  \begin{itemize}
  \item Case I: $q=(N+1)/2$;
  \item Case II: $q=N/2$;
  \item Case III: 

  (i) $a=0$ and $1\le q\le (N+1)/2$;
  
  (ii) $a\neq0$, $N\ge3$, and $N/2\le q\le (N+1)/2$;
  
  (iii) $a=1$, $N=2$, and $1<p\le 3/2$.  
 \end{itemize}  
\end{theorem}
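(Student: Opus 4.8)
The plan is to obtain \eqref{eq:schatten-WTSW-thm2} exactly as Theorem~\ref{thm:schatten-WTSW} is obtained in the compact case: one realizes $T_S$, up to a harmless nonzero constant, as the value at $z=-1$ of an analytic family $(T_z)$ of convolution operators satisfying the hypotheses of Proposition~\ref{prop:complex-interpolation}, and these families are precisely the ones Strichartz~\cite{Strichartz-77} used to prove the $L^p\to L^{p'}$ boundedness of $T_S$. Write the degree-two polynomial defining $S$ as $R$, so that $S=\{R=0\}$ in Cases~I and~II and $S=\{R=-1\}$ in Case~III, the chosen measure on $S$ being $|\nabla R|^{-1}\,d\sigma$. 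Let $\mu_z$ be the tempered distribution obtained by meromorphic continuation in $z$ of the suitably regularized complex power of $R$ (resp.\ of $R+1$) --- one-sided powers $R_+^z$ for the definite/parabolic surfaces, $(R\pm i0)^z$ for the indefinite ones --- normalized by $\Gamma$-factors so that $\mu_{-1}=c\,|\nabla R|^{-1}\,d\sigma$ with $c\neq0$. Setting $T_zf:=(2\pi)^{-N/2}\,\widehat{\mu_z}*f$ one has $T_{-1}=c'T_S$ with $c'\neq0$, and $(T_z)$ is analytic in the sense of Stein on every vertical strip contained in its domain of holomorphy: for simple $f,g$ the map $z\mapsto\langle g,T_zf\rangle$ is holomorphic, and on vertical lines the $\Gamma$-factors grow at most like $e^{C|\im z|}$, which is within the admissible bound.

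It then remains to verify, on a strip $-\lambda_0\le\re z\le0$ with $\lambda_0=q$, the two endpoint estimates of Proposition~\ref{prop:complex-interpolation}. On $\re z=0$ the Fourier multiplier of $T_{is}$ is, up to $\Gamma$-factors and an at most polynomial weight, a function of modulus one (for the indefinite forms it is bounded by $e^{\pi|s|}$ because of the $\pm i0$ prescription crossing the branch cut), so Plancherel yields $\|T_{is}\|_{L^2\to L^2}\le M_0 e^{a|s|}$. On $\re z=-\lambda_0$ the convolution kernel of $T_{-\lambda_0+is}$ is a bounded function with $\|\cdot\|_{L^\infty}\le M_1 e^{b|s|}$, hence $\|T_{-\lambda_0+is}\|_{L^1\to L^\infty}\le M_1 e^{b|s|}$; this is precisely Strichartz's pointwise bound for the Fourier transforms of the $\mu_z$ (Bessel-function asymptotics for the sphere-like and paraboloid-like cases, stationary phase for the indefinite ones), and the point is that it holds on the line $\re z=-q$ exactly for the values of $q$ appearing in the statement --- those for which the relevant Fourier transform has nonnegative decay exponent at infinity while staying bounded near the origin. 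Granting these two bounds, Proposition~\ref{prop:complex-interpolation} with $\lambda_0=q$ gives $W_1T_{-1}W_2\in\gS^{2q}(L^2(\R^N))$ with $\|W_1T_{-1}W_2\|_{\gS^{2q}}\le M_0^{1-1/q}M_1^{1/q}\|W_1\|_{L^{2q}}\|W_2\|_{L^{2q}}$, and dividing by $|c'|$ yields \eqref{eq:schatten-WTSW-thm2}; when $q=1$ one uses instead the remark following Proposition~\ref{prop:complex-interpolation}, where no interpolation is needed.

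For the precise ranges: Cases~I and~II claim the single values $q=(N+1)/2$ and $q=N/2$, which are exactly Strichartz's (equivalently, Stein--Tomas') sharp exponents for those surfaces, so one application of the above suffices. In Case~III(i) one obtains the whole interval $1\le q\le(N+1)/2$: the value $q=(N+1)/2$ is the sharp sphere exponent, the value $q=1$ is the elementary Hilbert--Schmidt bound $\|W_1T_SW_2\|_{\gS^2}\le\|\widehat{d\sigma}\|_{L^\infty}\|W_1\|_{L^2}\|W_2\|_{L^2}$, and the intermediate values follow either directly from Proposition~\ref{prop:complex-interpolation} with $\lambda_0=q$ (for the sphere the kernel bound above is valid on all of $\re z=-q$ whenever $q\le(N+1)/2$) or, equivalently, by complex interpolation of the bilinear map $(W_1,W_2)\mapsto W_1T_SW_2$ between those two endpoints. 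Case~III(ii), with $N\ge3$ and $N/2\le q\le(N+1)/2$, is handled the same way by interpolating the endpoint Schatten bounds (each obtained from Proposition~\ref{prop:complex-interpolation} applied to the corresponding Strichartz family); here the lower restriction $q\ge N/2$ is genuine, reflecting the directions in which the hyperboloid is flat at infinity, which is why the range does not reach down to $q=1$. Case~III(iii) is Strichartz's planar computation for the one-sheeted hyperbola, the exponent $p$ in the statement and the Schatten exponent being related in the usual way, $q=p/(2-p)$.

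The only substantive step is the kernel bound on the critical line $\re z=-q$: establishing, uniformly in $\im z$ and with at most exponential loss, that the Fourier transform of the regularized complex power $\mu_z$ is bounded there. This is where the geometry of $S$ enters and where the stated ranges of $q$ are forced --- in particular the treatment of the degenerate directions of $S$ in Cases~II, III(ii), and III(iii) --- and it is carried out by the explicit special-function and stationary-phase estimates of~\cite{Strichartz-77}. The normalization issue (choosing the $\Gamma$-factors so that $\mu_{-1}$ is exactly the prescribed measure on $S$, and not a distributional derivative of it together with lower-order terms) is routine but must be checked. The only genuinely new input beyond Strichartz's work is the observation, encapsulated in Proposition~\ref{prop:complex-interpolation}, that his interpolation data automatically upgrade the $L^p\to L^{p'}$ bound for $T_S$ to the stronger Schatten-class statement.
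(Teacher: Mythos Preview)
Your proposal is correct and takes essentially the same approach as the paper: you realize $T_S$ as $T_{-1}$ for Strichartz's analytic family of Fourier-multiplier operators, quote Strichartz's endpoint bounds on the lines $\re z=0$ and $\re z=-\lambda_0$, and then invoke Proposition~\ref{prop:complex-interpolation} with $\lambda_0=q$. The paper's own proof is in fact terser---it simply records the admissible values of $\lambda_0$ in each case and applies Proposition~\ref{prop:complex-interpolation}---whereas you give more commentary on why those particular ranges of $\lambda_0$ arise and how the intermediate $q$ in Case~III are obtained; but the mathematical content is the same.
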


These theorems have equivalent formulations as restriction estimates for systems of orthonormal functions, which we present next. Combining Theorems \ref{thm:schatten-WTSW} and \ref{thm:schatten-WTSW2} with Lemma \ref{lemma:duality-principle} for $\gH:=L^2(S,d\sigma)$, we immediately deduce the following results.

\begin{theorem}[Restriction estimates for orthonormal functions---compact case]\label{thm:restriction-orthonormal}
 Let $N\ge2$ and $S\subset\R^N$ a smooth, compact surface with non-zero Gauss curvature. Then, for any (possibly infinite) orthonormal system $(f_j)$ in $L^2(S,d\sigma)$ and for any $(\nu_j)\subset\C$, we have
\begin{equation}\label{eq:restriction-orthonormal}
  \norm{\sum_j \nu_j\left|\cE_S f_j\right|^2}_{L^{q'}(\R^N)}\le C\left(\sum_j|\nu_j|^{\frac{(N-1)q}{N(q-1)}}\right)^{\frac{N(q-1)}{(N-1)q}},
 \end{equation}
 with $C>0$ independent of $(\nu_j)$ and $(f_j)$. Here $q'=q/(q-1)$, where $q$ satisfies the same assumptions as in Theorem \ref{thm:schatten-WTSW}.
\end{theorem}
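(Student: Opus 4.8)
The plan is to deduce Theorem~\ref{thm:restriction-orthonormal} directly from Theorem~\ref{thm:schatten-WTSW} by invoking the duality principle (Lemma~\ref{lemma:duality-principle}), so no genuinely new analysis is required; the work is bookkeeping to match the exponents. First I would set $\gH := L^2(S,d\sigma)$ and take $A := \cE_S$, which by the Stein--Tomas theorem is a bounded operator from $\gH$ to $L^{p'}(\R^N)$ for $p$ in the admissible range (this boundedness is exactly what makes Lemma~\ref{lemma:duality-principle} applicable, and it is equivalent to the $L^2(S)$ restriction estimate). Then $AA^* = \cE_S(\cE_S)^* = T_S$, so Theorem~\ref{thm:schatten-WTSW} says precisely that $\|W T_S \bar W\|_{\gS^{\alpha}} \le C \|W\|_{L^{2q}}^2$ with Schatten exponent $\alpha = (N-1)q/(N-q)$, after specializing $W_1 = W$, $W_2 = \bar W$ (this specialization is harmless: the two-weight estimate trivially implies the one-weight one, and conversely by polarization / Cauchy--Schwarz, though only the easy direction is needed here).

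Next I would reconcile the Lebesgue exponent. Lemma~\ref{lemma:duality-principle} is phrased with weights in $L^{2p/(2-p)}(\R^N)$ and produces a bound in $L^{p'/2}(\R^N)$ for the density; so I set $2q = 2p/(2-p)$, i.e.\ $q = p/(2-p)$, equivalently $p = 2q/(q+1)$. One then checks $p'/2 = q'$: indeed $p' = 2q/(q-1)$ so $p'/2 = q/(q-1) = q'$, matching the left-hand side of \eqref{eq:restriction-orthonormal}. It also has to be verified that the range $1 \le q \le (N+1)/2$ corresponds, under $p = 2q/(q+1)$, to $1 \le p \le 2(N+1)/(N+3)$, which is the Stein--Tomas range guaranteeing $\cE_S : L^2(S) \to L^{p'}(\R^N)$ bounded --- a short monotonicity computation.

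Then I would apply Lemma~\ref{lemma:duality-principle} with this $p$ and $\alpha = (N-1)q/(N-q)$: statement (i) of the lemma is Theorem~\ref{thm:schatten-WTSW}, so statement (ii) holds, giving
$$
\norm{\sum_j \nu_j |\cE_S f_j|^2}_{L^{p'/2}(\R^N)} \le C' \Big(\sum_j |\nu_j|^{\alpha'}\Big)^{1/\alpha'}
$$
for every orthonormal system $(f_j)$ in $L^2(S,d\sigma)$ and every $(\nu_j) \subset \C$. Since $p'/2 = q'$, the left side already matches \eqref{eq:restriction-orthonormal}; it remains to compute the dual exponent $\alpha'$. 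With $\alpha = (N-1)q/(N-q)$, one gets $\alpha' = \alpha/(\alpha-1) = (N-1)q / \big((N-1)q - (N-q)\big) = (N-1)q/(Nq - N) = (N-1)q/\big(N(q-1)\big)$, which is exactly the exponent appearing in \eqref{eq:restriction-orthonormal}, and its reciprocal $1/\alpha' = N(q-1)/\big((N-1)q\big)$ is the outer power. This finishes the proof.

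There is no real obstacle here --- the content is entirely in Theorem~\ref{thm:schatten-WTSW} and Lemma~\ref{lemma:duality-principle}, both already established. The only thing to be careful about is the arithmetic translating between the $p$-parametrization used in the duality lemma and the $q$-parametrization used in the statement, together with checking that the endpoint cases ($q=1$, giving $\alpha' = \infty$ and hence a supremum on the right, and $q = (N+1)/2$, the Stein--Tomas endpoint) are covered; the endpoint $q=1$ corresponds to $p=1$, where the $L^1(S)$ bound on $\cE_S$ is elementary, and $q=(N+1)/2$ is the genuine Stein--Tomas endpoint, included in Theorem~\ref{thm:schatten-WTSW}. I would remark that, as noted after Theorem~\ref{thm:schatten-WTSW}, the range of $q$ and the Schatten (hence here $\alpha'$) exponent are both optimal, so \eqref{eq:restriction-orthonormal} is sharp in both the summability exponent on the right and the Lebesgue exponent on the left.
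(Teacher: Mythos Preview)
Your proposal is correct and follows exactly the paper's approach: the paper states that Theorem~\ref{thm:restriction-orthonormal} follows immediately by combining Theorem~\ref{thm:schatten-WTSW} with Lemma~\ref{lemma:duality-principle} applied with $\gH=L^2(S,d\sigma)$ and $A=\cE_S$. Your write-up simply spells out the exponent bookkeeping that the paper leaves implicit.
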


\begin{theorem}[Restriction estimates for orthonormal functions---quadratic case]\label{thm:restriction-orthonormal2}
 Let $N\ge2$ and let $S\subset\R^N$ be a quadratic surface. Then, for any (possibly infinite) orthonormal system $(f_j)$ in $L^2(S,d\sigma)$ and for any $(\nu_j)\subset\C$, we have
\begin{equation}\label{eq:restriction-orthonormal2}
  \norm{\sum_j \nu_j\left|\cE_S f_j\right|^2}_{L^{q'}(\R^N)}\le C\left(\sum_j|\nu_j|^{\frac{2q}{2q-1}}\right)^{1-\frac{1}{2q}},
 \end{equation}
 with $C>0$ independent of $(\nu_j)$ and $(f_j)$. The exponent $q$ satisfies the same assumptions as in Theorem \ref{thm:schatten-WTSW2}, according to the type of $S$.
\end{theorem}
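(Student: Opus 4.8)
The plan is to obtain Theorem~\ref{thm:restriction-orthonormal2} as a direct corollary of the Schatten bound of Theorem~\ref{thm:schatten-WTSW2} via the duality principle, Lemma~\ref{lemma:duality-principle}. Take $\gH := L^2(S,d\sigma)$ and $A := \cE_S$; by the Stein--Tomas/Strichartz theorem \eqref{eq:ES-bounded}, $A$ is bounded from $\gH$ to $L^{p'}(\R^N)$ on the relevant range, and $AA^* = \cE_S(\cE_S)^* = T_S$. Setting $W_1 = W$ and $W_2 = \bar W$ in \eqref{eq:schatten-WTSW-thm2} turns Theorem~\ref{thm:schatten-WTSW2} into
\[
  \norm{W T_S \bar W}_{\gS^{2q}(L^2(\R^N))} \le C\,\norm{W}_{L^{2q}(\R^N)}^2,
\]
which is precisely hypothesis (i) of Lemma~\ref{lemma:duality-principle} with Schatten exponent $\alpha = 2q$ and weight exponent $2p/(2-p) = 2q$.

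It then remains to match exponents. From $2p/(2-p) = 2q$ one solves $p = 2q/(q+1)$, hence $p' = 2q/(q-1)$ and $p'/2 = q/(q-1) = q'$, which is the Lebesgue exponent on the left of \eqref{eq:restriction-orthonormal2}; and the conjugate Schatten exponent is $\alpha' = (2q)' = 2q/(2q-1)$, with $1/\alpha' = 1 - 1/(2q)$, which is the power on the right. Lemma~\ref{lemma:duality-principle} thus yields \eqref{eq:restriction-orthonormal2} first for nonnegative $(\nu_j)$; the passage to complex $(\nu_j)$ is already contained in the lemma (the reduction to nonnegative coefficients via the triangle inequality in $\C$ inside the $L^{q'/2}$ norm, carried out in its proof). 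The admissible values of $q$ are inherited, case by case, from Theorem~\ref{thm:schatten-WTSW2}. In this sense the present statement is immediate once the two cited results are available, and there is essentially no obstacle at this level.

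The genuine work --- the step I expect to be the main obstacle --- sits one level below, in Theorem~\ref{thm:schatten-WTSW2} itself, which I would prove through Proposition~\ref{prop:complex-interpolation}. Following Strichartz, one embeds $T_S$ as the value $T_{-1}$ of an analytic family $(T_z)$ on a strip $-\lambda_0 \le \re z \le 0$ with $\lambda_0 = q$, obtained by analytically continuing the oscillatory integral representing $\cE_S(\cE_S)^*$ paired against homogeneous distributions built from the quadratic form $R$ defining $S$. One must then verify the two endpoint estimates required by Proposition~\ref{prop:complex-interpolation}: an $L^2 \to L^2$ bound on $\re z = 0$ (Plancherel, the multiplier being bounded there) and an $L^1 \to L^\infty$ bound on $\re z = -\lambda_0$ (a pointwise kernel bound from stationary phase and the known decay of the relevant oscillatory integrals), both with at most double-exponential growth in $\im z$. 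The delicate points are checking analyticity in the sense of Stein (boundary continuity and growth control) and, for Case~III with $a \neq 0$, dealing with the indefinite signature and the non-compactness, which is exactly what forces the restricted range $N/2 \le q \le (N+1)/2$. With Theorem~\ref{thm:schatten-WTSW2} established, Theorem~\ref{thm:restriction-orthonormal2} follows as described above.
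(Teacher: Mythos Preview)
Your proposal is correct and follows exactly the paper's approach: the paper states that Theorem~\ref{thm:restriction-orthonormal2} is obtained ``immediately'' by combining Theorem~\ref{thm:schatten-WTSW2} with Lemma~\ref{lemma:duality-principle} for $\gH:=L^2(S,d\sigma)$, which is precisely what you do (your exponent bookkeeping and the discussion of how Theorem~\ref{thm:schatten-WTSW2} is proved are accurate but go beyond what the paper spells out at this step).
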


Furthermore, according to Remark \ref{rk:density}, inequalities \eqref{eq:restriction-orthonormal} and \eqref{eq:restriction-orthonormal2} can be rewritten as
\begin{equation}\label{eq:extension-rho-schatten}
  \norm{\rho_{\cE_S\gamma(\cE_S)^*}}_{L^{q'}(\R^N)}\le C\norm{\gamma}_{\gS^{(N-1)q/(N(q-1))}(L^2(S,d\sigma))} \,,
 \end{equation}
and
\begin{equation}\label{eq:extension-rho-schatten2}
  \norm{\rho_{\cE_S\gamma(\cE_S)^*}}_{L^{q'}(\R^N)}\le C\norm{\gamma}_{\gS^{2q/(2q-1)}(L^2(S,d\sigma))},
 \end{equation}
for any $\gamma\in\gS^{(N-1)q/(N(q-1))}(L^2(S,d\sigma))$ and any $\gamma\in\gS^{2q/(2q-1)}(L^2(S,d\sigma))$, respectively. Later on the form \eqref{eq:extension-rho-schatten} will be convenient to prove optimality of the corresponding Schatten exponent.

\subsection{Proof of Theorem \ref{thm:schatten-WTSW}}

Let us first consider the case $S$ compact with non-zero Gauss curvature. Thus, let $N\ge2$ and $S\subset\R^N$ a compact hypersurface with non-zero Gauss curvature, which is endowed with its $(N-1)$-dimensional Lebesgue measure $d\sigma$. 

\noindent\emph{Step 1.} We shall prove that
\begin{equation}\label{eq:schatten-WTSW-thmproof}
  \norm{W_1T_SW_2}_{\gS^{2p/(2-p)}(L^2(\R^N))}\le C\norm{W_1}_{L^{2p/(2-p)}(\R^N)}\norm{W_2}_{L^{2p/(2-p)}(\R^N)}
 \end{equation}
for any $1\leq p \leq 2(N+1)/(N+3)$. (Later we will use this only for $p=2(N+1)/(N+3)$, but it is instructive the prove the more general inequality.) The operator $T_S$ acting on functions on $\R^N$ is a convolution operator: $T_Sf=K_S*f$ for all $f$, where $K_S$ is the function
$$K_S(x)=\frac{1}{(2\pi)^N}\int_S e^{ix\cdot\xi}d\sigma(\xi),\quad\forall x\in\R^N.$$
Using a smooth and finite partition of unity, $1=\sum_\ell \psi_\ell$ on $S$, the operator $T_S$ can be decomposed as 
$T_S=\sum_\ell T_\ell$, where $T_\ell$ is the convolution operator by the function
$$K_\ell(x)=\frac{1}{(2\pi)^N}\int_S e^{ix\cdot\xi}\psi_\ell(\xi)d\sigma(\xi),\quad\forall x\in\R^N.$$
The partition of unity is chosen in the following fashion. We assume that on the interior of the support of $\psi_\ell$, the surface $S$ is the graph of a smooth and compactly supported function $\phi:\R^{N-1}\to\R$, so that (possibly after a rotation),
$$K_\ell(x)=\frac{1}{(2\pi)^N}\int_{\R^{N-1}}e^{ix\cdot(\xi',\phi(\xi'))}\psi_\ell(\xi',\phi(\xi'))(1+|\nabla\phi(\xi')|^2)^{1/2}d\xi',\quad\forall x\in\R^N.$$
To prove \eqref{eq:schatten-WTSW-thmproof}, it is then enough to show the estimate
$$\norm{W_1T_\ell W_2}_{\gS^{2p/(2-p)}(L^2(\R^N))}\le C\norm{W_1}_{L^{2p/(2-p)}(\R^N)}\norm{W_2}_{L^{2p/(2-p)}(\R^N)}$$
for each $\ell$. Hence, from now on we drop the index $\ell$ and write $(T,K)$ instead of $(T_\ell,K_\ell)$. To prove this Schatten estimate, we use Proposition \ref{prop:complex-interpolation} by defining the same analytic family $(T_z)$ of operators as in \cite{Stein-86}. More precisely, let $T_z$ be the convolution operator with the function $K_z$ defined as
$$K_z(x):=\zeta_z(x_N)K(x),\quad\forall x\in\R^N,$$
where
$$\zeta_z(y):=\frac{1}{(2\pi)^N}\frac{e^{(z+1)^2}}{\Gamma(z+1)}\int_0^\ii e^{ity}t^z\eta(t)\,dt,\quad\forall y\in\R,$$
where $\eta$ is a smooth an compactly supported function on $\R$ such that $\eta\equiv1$ on a neighborhood of the origin. As explained in \cite[Ch. IX, Sec. 1.2.3]{Stein-93}, the family $(T_z)$ is an analytic family of operators in the strip $-\lambda_0\le\text{Re}\,z\le0$, with $1\le\lambda_0\le(N+1)/2$, which satisfies the estimate
$$\norm{T_{is}}_{L^2\to L^2}+\norm{T_{-\lambda_0+is}}_{L^1\to L^\ii}\le C(s),$$
for all $s\in\R$ and for some $C(s)$ growing exponentially in $s$. By Proposition \ref{prop:complex-interpolation} and the identity $T=T_{-1}$, we obtain \eqref{eq:schatten-WTSW-thmproof}.

\noindent\emph{Step 2.} In order to complete the proof of the theorem, we recall that $W_1 T_S W_2 = (W_1 \mathcal E_S)(\overline{W_2} \mathcal E_S)^*$. The operator $W \mathcal E_S$ acts from $L^2(S)$ to $L^2(\R^N)$ as an integral operator with integral kernel $(2\pi)^{-N/2}W(x) e^{i\xi\cdot x}$, where $\xi\in S$, $x\in\R^N$. Since $S$ is compact, it has finite surface measure $\sigma(S)$. Therefore, if $W\in L^2(\R^N)$, then $W \mathcal E_S$ is Hilbert--Schmidt with
$$
\norm{W \mathcal E_S}_{\gS^{2}(L^2(S),L^2(\R^N))}^2 = (2\pi)^{-N} \sigma(S)\norm{W}_{L^{2}(\R^N)}^2 \,.
$$
Thus, by H\"older's inequality for trace ideals,
$$
\norm{W_1T_SW_2}_{\gS^{1}(L^2(\R^N))}\le (2\pi)^{-N} \sigma(S) \norm{W_1}_{L^{2}(\R^N)}\norm{W_2}_{L^{2}(\R^N)} \,.
$$
On the other hand, in step 1 (with $p=2(N+1)/(N+3)$) we have shown that
$$
\norm{W_1T_SW_2}_{\gS^{N+1}(L^2(\R^N))}\le C\norm{W_1}_{L^{N+1}(\R^N)}\norm{W_2}_{L^{N+1}(\R^N)} \,.
$$
By complex interpolation between these two bounds \cite[Thm. 2.9]{Simon-79} we obtain the assertion of the theorem. \qed

\subsection{Proof of Theorem \ref{thm:schatten-WTSW2}}

All three kinds of quadratic surfaces considered by Strichartz \cite{Strichartz-77} are of the form $\{\xi\,:\,R(\xi)=r\}$ for some degree two polynomial $R$ and some $r\in\R$. Strichartz introduces the family of tempered  distributions $(G_z)_{z\in\C}$ on $\R^N$ as
 $$\langle G_z,\phi\rangle:=g(z)\int_{\R^N}(R(\xi)-r)_+^z\phi(\xi)d\xi:=g(z)\int_\R(a-r)_+^z\left(\int_{S_a}\phi(\xi)d\mu_a(\xi)\right)\,da,$$
 where $S_a=\{x\,:R(x)=a\}$ and $d\mu_a(\xi)=|\nabla R(\xi)|^{-1/2}d\sigma_a(\xi)$, with $d\sigma_a(\xi)$ the $(N-1)$-dimensional Lebesgue measure on $S_a$. The function $g(z)$ has adequate properties according to the type of the surface considered, but in all cases it has a simple zero at $z=-1$ to ensure that $G_{-1}\equiv\delta_S$. The family of operators $T_z$ is then defined as Fourier multipliers by $G_z$, that is 
 $$T_zf(x)=\langle G_z,\hat{f}(\xi)e^{i\xi\cdot x}\rangle,\quad\forall x\in\R^N.$$
 Strichartz then shows the bounds
 $$\norm{T_{is}}_{L^2\to L^2}+\norm{T_{-\lambda_0+is}}_{L^1\to L^\ii}\le C(s),$$
 for all $s\in\R$ and for $C(s)$ growing exponentially, with $\lambda_0=(N+1)/2$ (Case I), $\lambda_0=N/2$ (Case II), $\lambda_0\ge(N+1)/2$ (Case III(i)), $N/2\le\lambda_0\le(N+1)/2$ (Case III(ii)), and $1<\lambda_0\le3/2$ (Case III(iii)). Together with Proposition \ref{prop:complex-interpolation}, this shows \eqref{eq:schatten-WTSW-thm} in the quadratic case and the proof of Theorem \ref{thm:schatten-WTSW} is over. \qed

\begin{remark}
 Case I of a quadratic surface can also be deduced from the compact case via scaling, as in \cite[Sec. VIII.5.16]{Stein-93}.
\end{remark}

\subsection{Optimality of the Schatten exponent in the compact case}

We now prove that the Schatten space $\gS^{(N-1)q/(N-q)}(L^2(\R^N))$ in Theorem \ref{thm:schatten-WTSW} is optimal. By this we mean that the inequality fails if this space is replaced by $\gS^{s}(L^2(\R^N))$ for some $s<(N-1)q/(N-q)$. According to our duality principle, Lemma \ref{lemma:duality-principle}, this is equivalent to proving that the Schatten space $\gS^{(N-1)q/(N(q-1))}(L^2(S))$ in \eqref{eq:extension-rho-schatten} is optimal. Now optimality means that the inequality fails if this space is replaced by $\gS^{r}(L^2(S))$ for some $r>(N-1)q/(N(q-1))$. This is the content of the following theorem.

\begin{theorem}[Optimality of the Schatten exponent]\label{thm:optimality-schatten-exponent}
 Let $N\ge2$, let $S\subset\R^N$ be a smooth surface with non-zero Gauss curvature and let $1\le q\le (N+1)/2$. Then, for any $r>\frac{(N-1)q}{N(q-1)}$, we have 
 \begin{equation}
  \sup_{\substack{\gamma\in\gS^r(L^2(S)),\\ \gamma\neq0}}\frac{\norm{\rho_{\cE_S\gamma(\cE_S)^*}}_{L^{q'}(\R^N)}}{\norm{\gamma}_{\gS^r(L^2(S))}}=+\ii.
 \end{equation}
\end{theorem}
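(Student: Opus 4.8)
The plan is to test the inequality against explicitly constructed density matrices $\gamma$ whose singular values concentrate in a way that exhibits the critical Schatten exponent. Since we want to show the supremum is infinite for any $r > (N-1)q/(N(q-1))$, it suffices to produce, for each such $r$, a sequence $\gamma_n$ with $\norm{\gamma_n}_{\gS^r(L^2(S))}$ bounded (or growing slowly) while $\norm{\rho_{\cE_S\gamma_n(\cE_S)^*}}_{L^{q'}(\R^N)}$ blows up. The natural candidates come from the Knapp-type examples adapted to the orthonormal setting: localize on a small cap of $S$ of size $\delta$ in the tangential directions (hence $\delta^2$ in the normal direction, by curvature), and on that cap use a large family of orthonormal functions — for instance, a Fourier-type orthonormal basis of $L^2$ of the cap, appropriately rescaled — with roughly $\delta^{-(N-1)}$ of them. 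Taking $\gamma_n$ to be the (rescaled) projection onto the span of these $M \sim \delta^{-(N-1)}$ functions, one computes both sides as $\delta \to 0$ and optimizes.

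First I would make the reduction to the density formulation \eqref{eq:extension-rho-schatten} explicit (already granted by Lemma \ref{lemma:duality-principle} and Remark \ref{rk:density}), so that the task is purely to estimate $\rho_{\cE_S\gamma(\cE_S)^*}$ from below. Next I would set up the Knapp cap: after a rotation, write $S$ near a point as a graph $\xi_N = \phi(\xi')$ with $\phi(0)=0$, $\nabla\phi(0)=0$, and Hessian nondegenerate (nonzero Gauss curvature). On the cap $C_\delta = \{|\xi'| \le \delta\}$ the phase $x\cdot(\xi',\phi(\xi'))$ is essentially constant (up to $O(1)$) for $x$ in a dual slab $R_\delta$ of dimensions $\delta^{-1}\times\cdots\times\delta^{-1}\times\delta^{-2}$, of volume $\sim \delta^{-(N+1)}$. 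I would then take an orthonormal family $(f_j)$ in $L^2(C_\delta, d\sigma) \subset L^2(S,d\sigma)$ consisting of $M \sim \delta^{-(N-1)}$ functions — e.g. suitably normalized characters $e^{i k\cdot\xi'/\delta}$ for $k$ ranging over $\sim \delta^{-(N-1)}$ lattice points, each multiplied by a smooth bump supported in $C_\delta$ — and set $\gamma = \sum_{j=1}^M |f_j\rangle\langle f_j|$, so $\norm{\gamma}_{\gS^r(L^2(S))} = M^{1/r}$. One checks that on the slab $R_\delta$ (or a substantial fraction of it), $\sum_j |\cE_S f_j|^2 \gtrsim \delta^{-1}\cdot M$ pointwise: indeed each $|\cE_S f_j(x)|^2 \gtrsim \delta^{N-1}$ on its own dual sub-slab after tiling $R_\delta$ into $\sim M$ pieces, and summing over $j$ with the tiling gives the gain. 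Hence $\norm{\rho_{\cE_S\gamma(\cE_S)^*}}_{L^{q'}} \gtrsim (\delta^{-1}M)\cdot |R_\delta|^{1/q'} \sim \delta^{-1}M\,\delta^{(N+1)/q'}$.

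Plugging $M \sim \delta^{-(N-1)}$ and comparing with $\norm{\gamma}_{\gS^r} = M^{1/r} \sim \delta^{-(N-1)/r}$, the ratio is of order $\delta^{-1-(N-1)+(N+1)/q' + (N-1)/r} = \delta^{(N+1)/q' - N + (N-1)/r}$. Using $1/q' = 1-1/q$, the exponent is $(N+1)(1-1/q) - N + (N-1)/r = 1 - (N+1)/q + (N-1)/r$. This tends to $+\infty$ as $\delta\to 0$ precisely when $1 - (N+1)/q + (N-1)/r < 0$, i.e. $(N-1)/r < (N+1)/q - 1 = (N+1-q)/q$, i.e. $r > (N-1)q/(N+1-q)$. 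One must check this matches the claimed threshold: since $1 \le q \le (N+1)/2$ we have $N+1-q \ge (N+1)/2 > 0$; and indeed $(N-1)q/(N+1-q) = (N-1)q/(N(q-1))$ fails — so I expect the correct bookkeeping to involve a second family of test objects, or a refinement of the pointwise lower bound on $\sum_j|\cE_S f_j|^2$ (not the crude tiling bound but one reflecting cancellation/spreading on the full slab), to land exactly on $(N-1)q/(N(q-1))$.

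The main obstacle, therefore, is getting the sharp constants in the lower bound to produce exactly the exponent $(N-1)q/(N(q-1))$ rather than some nearby value: this requires choosing the optimal test configuration (the right balance between the cap scale $\delta$, the number $M$ of orthonormal functions, and possibly also dilating $S$ or combining scales), and then estimating $\rho_{\cE_S\gamma(\cE_S)^*}$ from below with the correct power of $\delta$. I would handle this by first computing the single-function Knapp example (which reproduces the sharpness of the exponent $q \le (N+1)/2$ in Stein--Tomas) and then seeing how the orthonormal sum improves it; the extremal case $q = (N+1)/2$, where $(N-1)q/(N(q-1)) = (N+1)/N$, should serve as the anchoring sanity check, and the general $q$ follows by the same construction with $M$ adjusted. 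Once the sequence $\gamma_n$ is in hand with the divergent ratio, the conclusion $\sup = +\infty$ is immediate.
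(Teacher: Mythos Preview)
Your Knapp-cap approach is natural but, as you already discovered, it does not reach the sharp threshold $r=(N-1)q/(N(q-1))$. In fact the computation you set up can be carried out cleanly, and it yields blow-up only for $r$ above a strictly larger value (for instance at the endpoint $q=(N+1)/2$ one gets $r>(N+1)/(N-1)$ rather than $r>(N+1)/N$). The reason is structural: Knapp examples are extremal for the \emph{Lebesgue} exponent (they saturate the Stein--Tomas range of $q$), but for a fixed $q$ below the endpoint the extremizers for the \emph{Schatten} exponent are not localized on a small cap. Packing $M\sim\delta^{-(N-1)}$ orthonormal functions onto a cap of measure $\delta^{N-1}$ makes the rank as large as possible, but the resulting density is supported on a highly anisotropic slab, and the $L^{q'}$-norm does not profit enough from this. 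No bookkeeping fix will repair this; a genuinely different trial family is needed.

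The paper's construction is global on $S$ rather than cap-localized. For $h>0$ it takes the operator $\gamma_h$ on $L^2(S)$ with integral kernel
\[
\gamma_h(\omega,\omega')=\int_{|k|\le h^{-1}} e^{ik\cdot(\omega-\omega')}\,dk,
\]
i.e.\ a band-limited approximation of the identity on $L^2(S)$. Two ingredients then give the sharp exponent. First, the Agmon--H\"ormander trace lemma yields the operator-norm bound $\|\gamma_h\|\lesssim h^{-1}$, while $\tr\gamma_h\sim h^{-N}$, so by interpolation $\|\gamma_h\|_{\gS^r}\lesssim h^{-(N-1)/r-1}$. Second, the density is
\[
\rho_{\cE_S\gamma_h\cE_S^*}(\xi)=\int_{|k|\le h^{-1}}|\widehat{d\sigma}(\xi-k)|^2\,dk,
\]
and the stationary-phase lower bound $|\widehat{d\sigma}(k)|\gtrsim |k|^{-(N-1)/2}$ on an open cone gives $\rho_{\cE_S\gamma_h\cE_S^*}(h^{-1}\xi)\gtrsim h^{-1}$ on a fixed region, hence $\|\rho_{\cE_S\gamma_h\cE_S^*}\|_{L^{q'}}\gtrsim h^{-N/q'-1}$. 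The resulting ratio behaves like $h^{-N(q-1)/q+(N-1)/r}$, which diverges exactly when $r>(N-1)q/(N(q-1))$. So the missing idea is to replace the cap construction by this ``semiclassical'' projector spread over the whole surface, and to use Agmon--H\"ormander together with the two-sided asymptotics of $\widehat{d\sigma}$.
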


\begin{proof}
 Let $h>0$. We construct a trial operator $\gamma_h$ on $L^2(S)$, by defining its integral kernel:
 $$\gamma_h(\omega,\omega')=\int_{\R^N}\1(k^2\le h^{-2})e^{ik\cdot(\omega-\omega')}\,dk,\,\,\forall(\omega,\omega')\in S\times S.$$
 Let $f\in L^2(S)$. Using the Agmon--H\"ormander bound \cite{AgmHor-76}, \cite[Thm. 4.2]{Ruiz-harmonic}, we have
 $$\langle f,\gamma_h f\rangle=\int_{\R^N}\1(k^2\le h^{-2})\left|\int_S f(\omega)e^{-ik\cdot\omega}\,d\sigma(\omega)\right|^2\,dk\le Ch^{-1}\int_S|f(\omega)|^2\,d\sigma(\omega).$$
 This shows that $\gamma_h$ is a non-negative bounded operator on $L^2(S)$ with 
 $$\norm{\gamma_h}_{L^2(S)\to L^2(S)}\le Ch^{-1}.$$
 We also compute its trace norm,
 $$\norm{\gamma_h}_{\gS^1(L^2(S))}=\tr\gamma_h=\int_S\gamma_h(\omega,\omega)\,d\omega=Ch^{-N}.$$
 By H\"older's inequality in Schatten spaces, we deduce that $\gamma_h\in\gS^r$ for all $1\le r\le+\ii$ and that 
 \begin{equation}\label{eq:schatten-r-trial}
    \norm{\gamma_h}_{\gS^r(L^2(S))}\le C(h^{-1})^{\frac{N+r-1}{r}}.
 \end{equation}
 Let us compute the left side of \eqref{eq:extension-rho-schatten} for $\gamma_h$. We have 
 $$\rho_{\cE_S\gamma_h (\cE_S)^*}(\xi)=\int_{\R^N}\1(k^2\le h^{-2})\left|\int_S e^{-i\omega\cdot \xi}e^{ik\cdot\omega}\,d\sigma(\omega)\right|^2=\int_{\R^N}\1(k^2\le h^{-2})\left|\hat{d\sigma}(\xi-k)\right|^2\,dk,$$
 for all $\xi\in\R^N$. First, let us use the lower bound 
 \begin{align*}
    \norm{\rho_{\cE_S\gamma_h (\cE_S)^*}}_{L^{q'}(\R^N)} &\ge\left(\int_{|\xi|\le h^{-1}R}\rho_{\cE_S\gamma_h (\cE_S)^*}(\xi)^{q'}\,d\xi\right)^{1/q'} \\
    &=h^{-N/q'}\left(\int_{|\xi|\le R}\rho_{\cE_S\gamma_h (\cE_S)^*}(h^{-1}\xi)^{q'}\,d\xi\right)^{1/q'},
 \end{align*}
 for some $R>0$ to be chosen later on. Next, we use a lower bound on $\hat{d\sigma}$ which can be found, for instance, in \cite[p. 51]{Sogge-book}: there exists a non-empty open cone $\Gamma\subset\R^N$ such that for all $k\in\Gamma$ with $|k|\ge R'$ for some $R'>0$ large enough, we have
 \begin{equation}\label{eq:bound-fourier-measure}
    |\hat{d\sigma}(k)|\ge\frac{C}{(1+|k|)^{\frac{N-1}{2}}}.
 \end{equation}
 Notice that in the case of the sphere, \eqref{eq:bound-fourier-measure} can be proved directly using that $\hat{d\sigma}$ is explicitly expressed in terms of a Bessel function. In any case, this implies that, if $h\le1$,
 \begin{align*}
    \rho_{E\gamma_h E^*}(h^{-1}\xi) &\ge \int_{\substack{k\in\Gamma \\ |k|\ge h^{-1}R'}}\1(|k-h^{-1}\xi|\le h^{-1})\left|\hat{d\sigma}(k)\right|^2\,dk\\
    &\ge h^{-N}\int_{\substack{k\in\Gamma \\ |k|\ge R'}}\1(|k-\xi|\le 1)\left|\hat{d\sigma}(h^{-1}k)\right|^2\,dk \\
    &\ge Ch^{-1}\int_{\substack{k\in\Gamma \\ |k|\ge R'}}\frac{\1(|k-\xi|\le 1)}{(1+|k|)^{N-1}}\,dk.
 \end{align*}
 Choosing $R=R'$, we infer that 
 $$\norm{\rho_{\cE_S\gamma_h (\cE_S)^*}}_{L^{q'}(\R^N)}\ge Ch^{-\frac{N}{q'}-1}\left(\int_{|\xi|\le R'}\left(\int_{\substack{k\in\Gamma \\ |k|\ge R'}}\frac{\1(|k-\xi|\le 1)}{(1+|k|)^{N-1}}\,dk\right)^{q'}\,d\xi\right)^{1/q'}.$$
 The double integral on the right side is easily seen to be finite. Combining this estimate with \eqref{eq:schatten-r-trial}, we obtain
 $$\frac{\norm{\rho_{\cE_S\gamma_h (\cE_S)^*}}_{L^{q'}(\R^N)}}{\norm{\gamma_h}_{\gS^r(L^2(S))}}\ge c(h^{-1})^{\frac{N}{q'}+1-\frac{N+r-1}{r}},$$
 which diverges as $h\to0$ if and only if $r>\frac{(N-1)q}{N(q-1)}$, as claimed. 
 \end{proof}

\section{Strichartz inequalities}\label{sec:strichartz}

\subsection{Laplacian case}

An important application of the restriction estimates for quadratic surfaces proved in \cite{Strichartz-77} concerns space-time decay estimates for solutions to evolution equations, which are known as \emph{Strichartz inequalities} and are a widely used tool to study nonlinear versions of these equations. For instance, when $S$ is the paraboloid
$$S=\{(\omega,\xi)\in\R\times\R^d,\,\omega=-|\xi|^2\},$$
one has the identity for all $f\in L^1(S,d\mu)$ and for all $(t,x)\in\R\times\R^d$,
$$\cE_Sf(t,x)=\frac{1}{(2\pi)^{d+1}}\int_S e^{i(t,x)\cdot(\omega,\xi)}f(\omega,\xi)d\mu(\omega,\xi)=\frac{1}{(2\pi)^{d+1}}\int_{\R^d}e^{-it|\xi|^2}e^{ix\cdot\xi}f(-|\xi|^2,\xi)\,d\xi,$$
where $d\mu$ is the measure defined at the beginning of Section \ref{sec:results-restriction}, which in the case of the paraboloid (Case I) is simply $d\mu(\omega,\xi)=(1+4|\xi|^2)^{-1/2}d\sigma(\omega,\xi)$. Hence, choosing $f(\omega,\xi)=\hat{\phi}(\xi)$ for some $\phi:\R^d\to\C$, one deduces that 
$$\cE_Sf(t,x)=\frac{1}{2\pi}\left(e^{it\Delta}\phi\right)(x),\quad\forall(t,x)\in\R\times\R^d.$$
Using that $\cE_S$ is bounded from $L^2(S,d\mu)$ to $L^{2+4/d}(\R^{d+1})$, Strichartz obtains his famous bound
\begin{equation}
 \norm{e^{it\Delta}f}_{L^{2+4/d}_{t,x}(\R\times\R^d)}\le C\norm{f}_{L^2(\R^d)},\quad\forall f\in L^2(\R^d),\quad\forall d\ge1,
\end{equation}
where $C>0$ is independent of $f$. In the same fashion, applying Theorem \ref{thm:restriction-orthonormal} with $N=d+1$ and $S$ a paraboloid (Case I), we recover Strichartz's bound for orthonormal functions \cite{FraLewLieSei-14}:

\begin{theorem}[Strichartz estimates for orthonormal functions---diagonal case]\label{thm:strichartz-orthonormal-diagonal}
 Assume that $d\ge1$. Then, for any (possibly infinite) orthonormal system $(f_j)$ in $L^2(\R^d)$ and for any $(\nu_j)\subset\C$, we have
 \begin{equation}
  \norm{\sum_j \nu_j\left|e^{it\Delta}f_j\right|^2}_{L^{1+2/d}_{t,x}(\R\times\R^d)}\le C\left(\sum_j|\nu_j|^{\frac{d+2}{d+1}}\right)^{\frac{d+1}{d+2}},
 \end{equation}
 with $C>0$ independent of $(\nu_j)$ and $(f_j)$.
\end{theorem}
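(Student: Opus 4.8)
The plan is to deduce the theorem from the restriction estimate for the quadratic surface of Case~I, rewriting the free Schr\"odinger propagator as an extension operator exactly as in Strichartz's original argument. Concretely, I would apply Theorem~\ref{thm:restriction-orthonormal2} with $N=d+1$, with $S=\{(\omega,\xi)\in\R\times\R^d:\ \omega=-|\xi|^2\}$ the paraboloid, and with $q=(N+1)/2=(d+2)/2$, which is the only admissible exponent in Case~I and for which $q'=(d+2)/d=1+2/d$. (One could equally reduce the paraboloid to the compact setting of Theorem~\ref{thm:restriction-orthonormal} by parabolic scaling, as noted in the Remark after the proof of Theorem~\ref{thm:schatten-WTSW2}; both routes give the same exponents.)

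First I would set up the dictionary between the two sides, most of which appears already in the discussion preceding the statement. Parametrising $S$ by $\xi\in\R^d$, its $(N-1)$-dimensional Lebesgue measure is $d\sigma=(1+4|\xi|^2)^{1/2}\,d\xi$, so the weighted measure $d\mu=(1+4|\xi|^2)^{-1/2}\,d\sigma$ attached to $S$ in Case~I is simply $d\xi$. Hence, by Plancherel's theorem, the map $\phi\mapsto g$ given by $g(\omega,\xi):=\hat\phi(\xi)$ for $(\omega,\xi)\in S$ is an isometry from $L^2(\R^d)$ onto $L^2(S,d\mu)$, and the computation recalled just before the statement gives $\cE_S g(t,x)=c_d\,(e^{it\Delta}\phi)(x)$ for a positive constant $c_d$.

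The deduction is then immediate. Given an orthonormal system $(f_j)$ in $L^2(\R^d)$, put $g_j(\omega,\xi):=\hat{f_j}(\xi)$ on $S$; since $\phi\mapsto g$ is an isometry, $\pscal{g_j,g_k}_{L^2(S,d\mu)}=\pscal{f_j,f_k}_{L^2(\R^d)}=\delta_{jk}$, so $(g_j)$ is orthonormal in $L^2(S,d\mu)$. Applying Theorem~\ref{thm:restriction-orthonormal2} with the exponents above gives
\begin{equation*}
 \norm{\sum_j\nu_j\bigl|\cE_S g_j\bigr|^2}_{L^{1+2/d}_{t,x}(\R\times\R^d)}\le C\Bigl(\sum_j|\nu_j|^{\frac{2q}{2q-1}}\Bigr)^{1-\frac{1}{2q}}=C\Bigl(\sum_j|\nu_j|^{\frac{d+2}{d+1}}\Bigr)^{\frac{d+1}{d+2}},
\end{equation*}
and substituting $\cE_S g_j=c_d\,e^{it\Delta}f_j$ and dividing by $c_d^{2}$ yields precisely the claimed bound.

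A word on where the real work sits: all of the analytic content is contained in Theorem~\ref{thm:schatten-WTSW2} (or Theorem~\ref{thm:schatten-WTSW}), proved via the complex-interpolation Proposition~\ref{prop:complex-interpolation} together with the duality Lemma~\ref{lemma:duality-principle}; the present statement is a corollary. The points that need checking in the deduction are all bookkeeping: that $d\mu$ on the paraboloid is genuinely Lebesgue measure, which is what lets orthonormal systems pass to orthonormal systems with no loss of constant and hence keeps the exponent $\frac{d+2}{d+1}$ sharp; that $q=(d+2)/2$ is the admissible endpoint exponent in Case~I and that $N\ge2$, i.e.\ $d\ge1$ (the parabola $d=1$ included), is allowed; and the elementary identities $\frac{2q}{2q-1}=\frac{d+2}{d+1}$, $1-\frac{1}{2q}=\frac{d+1}{d+2}$, $q'=1+2/d$. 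I do not expect any genuine obstacle here; the statement is essentially Theorem~\ref{thm:restriction-orthonormal2} for the paraboloid translated into PDE language.
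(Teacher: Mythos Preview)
Your proposal is correct and follows exactly the route indicated in the paper: the theorem is obtained by applying the restriction estimate for the paraboloid (Case~I) with $N=d+1$ and $q=(d+2)/2$, after identifying $e^{it\Delta}$ with the extension operator via the isometry $\phi\mapsto\hat\phi$ from $L^2(\R^d)$ to $L^2(S,d\mu)$. The paper states this in one line without spelling out the bookkeeping; your verification of the measure identity $d\mu=d\xi$, the preservation of orthonormality, and the exponent arithmetic is precisely what is needed to turn that line into a complete proof.
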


Equivalently, according to Remark \ref{rk:density}, for any $\gamma\in\gS^{(d+2)/(d+1)}(L^2(\R^d))$, the inequality 
 \begin{equation}
  \norm{\rho_{e^{it\Delta}\gamma e^{-it\Delta}}}_{L^{1+2/d}_{t,x}(\R\times\R^d)}\le C\norm{\gamma}_{\gS^{(d+2)/(d+1)}(L^2(\R^d))},
 \end{equation}
 holds with $C>0$ independent of $\gamma$. As explained in the introduction, this result was proved for the first time in \cite{FraLewLieSei-14}, using a different method. We recover it as a consequence of more general restriction estimates for orthonormal functions, hence providing a different proof. Our method actually allows to go further and to answer a question left open in \cite{FraLewLieSei-14}:

\begin{theorem}[Strichartz estimates for orthonormal functions---general case]\label{thm:strichartz-orthonormal-general}
 Assume that $d\ge1$ and that $p,q\ge1$ are such that 
 $$\frac2p+\frac dq=d,\quad 1\le q<1+\frac{2}{d-1}.$$
 Then, for any (possibly infinite) orthonormal system $(f_j)$ in $L^2(\R^d)$ and for any $(\nu_j)\subset\C$, we have
 \begin{equation}
  \norm{\sum_j \nu_j\left|e^{it\Delta}f_j\right|^2}_{L^p_tL^q_x(\R\times\R^d)}\le C\left(\sum_j|\nu_j|^{\frac{2q}{q+1}}\right)^{\frac{q+1}{2q}},
 \end{equation}
 with $C>0$ independent of $(\nu_j)$ and $(f_j)$. 
\end{theorem}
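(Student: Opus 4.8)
The plan is to reduce the asserted bound on the density $\rho:=\sum_j\nu_j|e^{it\Delta}f_j|^2$ to a Schatten estimate via an anisotropic form of the duality principle, and then to prove that Schatten estimate by Strichartz's complex-interpolation argument while keeping track of Schatten, rather than merely operator, norms. Writing $\gamma:=\sum_j\nu_j|f_j\rangle\langle f_j|$, the left-hand side is $\norm{\rho_{e^{it\Delta}\gamma e^{-it\Delta}}}_{L^p_tL^q_x}$, and since $e^{it\Delta}$ is bounded from $L^2(\R^d)$ to $L^{2p}_tL^{2q}_x(\R\times\R^d)$ --- the hypothesis $q<1+\tfrac2{d-1}$ keeps us strictly below the Sobolev endpoint, so this is classical Strichartz --- a straightforward variant of Lemma~\ref{lemma:duality-principle} with target $L^{2p}_tL^{2q}_x$ (the final use of $L^p$-duality being replaced by duality in $L^{p'}_tL^{q'}_x$) shows that the theorem is equivalent to
\begin{equation*}
  \norm{W\,\mathcal{U}\,\bar W}_{\gS^{2q/(q-1)}(L^2(\R\times\R^d))}\le C\norm{W}_{L^{2p'}_tL^{2q'}_x}^2,
\end{equation*}
where $(\mathcal{U}F)(t,\cdot):=\int_\R e^{i(t-s)\Delta}F(s,\cdot)\,ds$. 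Up to a constant, $\mathcal{U}=\cE_S(\cE_S)^*$ for $S$ the paraboloid $\{\omega=-|\xi|^2\}\subset\R^{1+d}$, and for $q=1+\tfrac2d$ the displayed bound is precisely Theorem~\ref{thm:schatten-WTSW2} (Case~I, $N=d+1$), so in particular we recover Theorem~\ref{thm:strichartz-orthonormal-diagonal}.

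For the range $1+\tfrac2d\le q<1+\tfrac2{d-1}$ I would prove the displayed bound directly, imitating the proof of Proposition~\ref{prop:complex-interpolation} but with anisotropic weights. Take Strichartz's analytic family $(T_z)$ on $\R^{1+d}$ --- the Fourier multipliers $\widehat{T_zF}=g(z)(\omega+|\xi|^2)_+^z\,\widehat F$, normalized so that $T_{-1}=c\,\mathcal{U}$ --- which satisfies $\norm{T_{i\sigma}}_{L^2\to L^2}\lesssim e^{a|\sigma|}$. The new point is that each $T_z$ is a space-time convolution operator whose kernel obeys $|K_z(t,x)|\lesssim c(z)\,|t|^{-\re z-1-d/2}$ with $c(z)$ of at most exponential growth on vertical lines, so that on $\re z=-\lambda_0$ with the choice $\lambda_0:=q'$ one has $|K_{-\lambda_0+i\sigma}(t,x)|\lesssim c(\sigma)\,|t|^{-\beta}$ where $\beta:=\tfrac{d+2}2-\lambda_0$; here $\beta<\tfrac12$ is exactly $q<1+\tfrac2{d-1}$ and $\beta\ge0$ is exactly $q\ge1+\tfrac2d$. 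Hence, for non-negative simple $W_1,W_2$, the operator $W_1^{\lambda_0}T_{-\lambda_0+i\sigma}W_2^{\lambda_0}$ is Hilbert--Schmidt, and the Hardy--Littlewood--Sobolev inequality in the time variable yields
\begin{equation*}
  \norm{W_1^{\lambda_0}T_{-\lambda_0+i\sigma}W_2^{\lambda_0}}_{\gS^2}^2\lesssim c(\sigma)^2\iint_{\R^2}|t-s|^{-2\beta}\,\norm{W_1(t,\cdot)}_{L^{2\lambda_0}_x}^{2\lambda_0}\norm{W_2(s,\cdot)}_{L^{2\lambda_0}_x}^{2\lambda_0}\,dt\,ds\lesssim c(\sigma)^2\norm{W_1}_{L^r_tL^{2\lambda_0}_x}^{2\lambda_0}\norm{W_2}_{L^r_tL^{2\lambda_0}_x}^{2\lambda_0}
\end{equation*}
with $r:=2\lambda_0/(1-\beta)$.

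Interpolating this $\gS^2$-bound on $\re z=-\lambda_0$ with the $L^2_{t,x}$-bound on $\re z=0$ along the family $S_z:=W_1^{-z}T_zW_2^{-z}$, exactly as in the proof of Proposition~\ref{prop:complex-interpolation} (Stein interpolation, \cite[Thm.~2.9]{Simon-79}), and evaluating at $z=-1$ (i.e. $\theta=1/\lambda_0$), gives $\norm{W_1T_{-1}W_2}_{\gS^{2\lambda_0}}\lesssim\norm{W_1}_{L^r_tL^{2\lambda_0}_x}\norm{W_2}_{L^r_tL^{2\lambda_0}_x}$; a direct computation using $\tfrac2p+\tfrac dq=d$ shows $2\lambda_0=\tfrac{2q}{q-1}=2q'$ and $r=2p'$, which is exactly the estimate we want, and the non-negativity and simplicity of $W_1,W_2$ are removed as in Proposition~\ref{prop:complex-interpolation}. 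The remaining range $1\le q\le1+\tfrac2d$ then follows by complex interpolation of the linear map $\gamma\mapsto\rho_{e^{it\Delta}\gamma e^{-it\Delta}}$ between the diagonal estimate just obtained ($\gS^{(d+2)/(d+1)}\to L^{1+2/d}_{t,x}$) and the trivial conservation identity $\norm{\rho_{e^{it\Delta}\gamma e^{-it\Delta}}}_{L^\infty_tL^1_x}\le\norm{\gamma}_{\gS^1}$ ($\gS^1\to L^\infty_tL^1_x$); since $\{\gS^\alpha\}$ and $\{L^p_tL^q_x\}$ are complex interpolation scales one obtains all intermediate bounds, and one checks that both $\tfrac2p+\tfrac dq=d$ and the Schatten exponent $\tfrac{2q}{q+1}$ are preserved along the interpolation. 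The hard part is the kernel estimate for $T_{-\lambda_0+i\sigma}$ together with its control in $\sigma$ --- this is where Strichartz's normalization $g(z)$ and the stationary-phase/$\Gamma$-function bounds enter, and it is precisely this step that forces the range $0\le\beta<\tfrac12$, i.e. $1+\tfrac2d\le q<1+\tfrac2{d-1}$; the verification of the anisotropic duality principle and of the exponent bookkeeping is routine but must be carried out with care.
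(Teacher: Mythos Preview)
Your proposal is correct and follows essentially the same route as the paper. The paper reduces to the anisotropic Schatten bound (its Theorem~\ref{thm:schatten-strichartz-mixed}) via the duality principle, then proves that bound by inserting Strichartz's analytic family $G_z$, using the explicit kernel formula to get $|\check G_{-\lambda_0+is}(t,x)|\lesssim |t|^{\lambda_0-1-d/2}$, applying Hardy--Littlewood--Sobolev in time to obtain the $\gS^2$ endpoint, and interpolating with the $L^2\to L^2$ bound on $\re z=0$ --- exactly your argument with $\lambda_0=q'$. The only difference is that for the range $1\le q\le 1+2/d$ the paper simply quotes \cite{FraLewLieSei-14}, whereas you recover it by interpolating the diagonal estimate against the trivial $\gS^1\to L^\infty_tL^1_x$ bound; this makes your write-up self-contained but is not a new idea.
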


Equivalently, according to Remark \ref{rk:density}, for any $\gamma\in\gS^{2q/(q+1)}(L^2(\R^d))$, the inequality 
 \begin{equation}\label{eq:strichartz-rho-general}
  \norm{\rho_{e^{it\Delta}\gamma e^{-it\Delta}}}_{L^p_tL^q_x(\R\times\R^d)}\le C\norm{\gamma}_{\gS^{2q/(q+1)}(L^2(\R^d))},
 \end{equation}
 holds with $C>0$ independent of $\gamma$. In \cite{FraLewLieSei-14}, this result was proved only for the range $1\le q\le 1+2/d$, and was shown to \emph{fail} for $q\ge1+2/(d-1)$. Hence, Theorem \ref{thm:strichartz-orthonormal-general} provides the \emph{full range} of exponents of Strichartz estimates for orthonormal functions. Notice that this range is significantly smaller than the range for a single function which is $1\le q\le 1+2/(d-2)$ for $d\ge3$ \cite{KeeTao-98}. In Section \ref{sec:hartree}, we give an application of these inequalities to the well-posedness of the non-linear Hartree equation in Schatten spaces, in the spirit of \cite{LewSab-13a}. 

Theorem \ref{thm:strichartz-orthonormal-general} follows again from a Schatten bound coupled to Lemma \ref{lemma:duality-principle}:

\begin{theorem}[Schatten bound with space-time norms]\label{thm:schatten-strichartz-mixed}
  Let $d\ge1$ and $S$ be the paraboloid
  $$S:=\{(\omega,\xi)\in\R\times\R^d,\,\omega=-|\xi|^2\}.$$
  Then, for all exponents $p,q\ge1$ satisfying the relations 
  $$\frac2p+\frac dq=1,\qquad q>d+1,$$
  we have the Schatten bound
  $$\norm{W_1T_SW_2}_{\gS^q(L^2(\R^{d+1}))}\le C\norm{W_1}_{L^p_tL^q_x(\R\times\R^d)}\norm{W_2}_{L^p_tL^q_x(\R\times\R^d)},$$
  with $C>0$ independent of $W_1,W_2$.
\end{theorem}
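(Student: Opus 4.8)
The plan is to run the complex-interpolation scheme of Proposition \ref{prop:complex-interpolation} with Strichartz's analytic family for the paraboloid, but where the $L^1\to L^\infty$ endpoint input is replaced by a Hilbert--Schmidt bound obtained through Hardy--Littlewood--Sobolev fractional integration in the \emph{time} variable. This treats the range $d+1<q<d+2$ directly; the endpoint $q=d+2$ is Theorem \ref{thm:schatten-WTSW2} (Case I), and the remaining range $q>d+2$ is obtained by interpolating with a trivial endpoint at $q=\infty$.

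First I would make the family explicit. Writing points of $\R^{d+1}=\R_t\times\R^d_x$ with dual variables $(\tau,\xi)$, we have $S=\{\tau+|\xi|^2=0\}$, and Strichartz's family $T_z$ is the Fourier multiplier by $g(z)(\tau+|\xi|^2)_+^z$, with $g$ having a simple zero at $z=-1$ so that $T_{-1}=T_S$ (up to an irrelevant nonzero constant). Performing the one-dimensional $\tau$-integration in the inverse Fourier transform and then the Gaussian $\xi$-integration produces the closed form for the kernel of $T_z$,
\begin{equation*}
 K_z\big((t,x),(s,y)\big)=c(z)\,(t-s)^{-z-1-d/2}\,e^{i|x-y|^2/(4(t-s))},
\end{equation*}
with $c(z)$ entire and $c(-1)\neq0$. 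In particular $|K_{-\lambda+is_0}((t,x),(s,y))|=|c(-\lambda+is_0)|\,|t-s|^{\lambda-1-d/2}$ is independent of $(x,y)$.

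Now fix $q\in(d+1,d+2)$ and set $\lambda_0:=q/2$; since $q>d+1>2$ we have $\lambda_0>1$, and $\beta:=1+\tfrac d2-\tfrac q2=\tfrac{d+2-q}{2}\in(0,\tfrac12)$. Following the proof of Proposition \ref{prop:complex-interpolation}, put $S_z:=W_1^{-z}T_zW_2^{-z}$ for $W_1,W_2\ge0$ simple. On $\mathrm{Re}\,z=0$, $\|S_{is_0}\|_{L^2\to L^2}\le C(s_0)$ as before, the multiplier $(\tau+|\xi|^2)_+^{is_0}$ being bounded. On $\mathrm{Re}\,z=-\lambda_0$, using that $|K_{-\lambda_0+is_0}|$ is a multiple of $|t-s|^{-\beta}$,
\begin{equation*}
 \|S_{-\lambda_0+is_0}\|_{\gS^2}^2\le C(s_0)^2\int_\R\int_\R\big\|W_1(t,\cdot)\big\|_{L^q_x}^{q}\,|t-s|^{-2\beta}\,\big\|W_2(s,\cdot)\big\|_{L^q_x}^{q}\,dt\,ds,
\end{equation*}
and since $2\beta\in(0,1)$ the one-dimensional Hardy--Littlewood--Sobolev inequality bounds the right-hand side by $C(s_0)^2\|W_1\|_{L^p_tL^q_x}^{q}\|W_2\|_{L^p_tL^q_x}^{q}$, the Lebesgue exponents on $\|W_i(t,\cdot)\|_{L^q_x}$ closing to $L^p_t$ precisely because $\tfrac2p+\tfrac dq=1$. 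Stein interpolation between these two lines (as in \cite[Thm.~2.9]{Simon-79}, exactly as in Proposition \ref{prop:complex-interpolation}) then gives $\|W_1T_{-1}W_2\|_{\gS^{2\lambda_0}}=\|W_1T_{-1}W_2\|_{\gS^{q}}\le C\|W_1\|_{L^p_tL^q_x}\|W_2\|_{L^p_tL^q_x}$, and the positivity and simplicity of $W_i$ are removed exactly as in Proposition \ref{prop:complex-interpolation}.

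For $q=d+2$ (so $p=d+2$) the claimed bound is Theorem \ref{thm:schatten-WTSW2}, Case I. For $q=\infty$, $p=2$: write $W_1T_SW_2=(W_1\cE_S)(\overline{W_2}\cE_S)^*$ and use that $\cE_S$ is, up to a constant, $\phi\mapsto e^{it\Delta}\phi$ under the identification $L^2(S)\cong L^2(\R^d)$, so that $\|W\cE_Sf\|_{L^2(\R^{d+1})}\le\big\|\,\|W(t,\cdot)\|_{L^\infty_x}\,\|e^{it\Delta}\phi\|_{L^2_x}\big\|_{L^2_t}\le C\|W\|_{L^2_tL^\infty_x}\|f\|_{L^2(S)}$, whence $\|W_1T_SW_2\|_{\gS^\infty}\le C\|W_1\|_{L^2_tL^\infty_x}\|W_2\|_{L^2_tL^\infty_x}$. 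Bilinear complex interpolation of the map $(W_1,W_2)\mapsto W_1T_SW_2$ between the endpoints $q=d+2$ and $q=\infty$ (using $[\gS^{s_0},\gS^{s_1}]_\theta=\gS^{s_\theta}$ and $[L^{p_0}_tL^{q_0}_x,L^{p_1}_tL^{q_1}_x]_\theta=L^{p_\theta}_tL^{q_\theta}_x$) then covers all $q\in[d+2,\infty)$, the relation $\tfrac{2}{p_\theta}+\tfrac{d}{q_\theta}=1$ and the identity $s_\theta=q_\theta$ being preserved along the line. I expect the main obstacle to be in the third paragraph: producing the closed form of $K_z$ with the correct $\Gamma$-bookkeeping (so that $c(-1)\neq0$, i.e.\ $T_{-1}$ is a nonzero multiple of $T_S$ and not $0$, and so that the growth in $s_0$ is at most exponential, as Stein's theorem requires), and then checking that the Hardy--Littlewood--Sobolev step closes exactly under $\tfrac2p+\tfrac dq=1$, with the singularity exponent $2\beta=d+2-q$ landing in $(0,1)$ precisely when $d+1<q<d+2$ --- this is where the hypothesis $q>d+1$ enters, and it also explains why the condition cannot be relaxed.
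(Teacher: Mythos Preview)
Your approach is essentially the paper's: both embed $T_S$ in Strichartz's analytic family for the paraboloid, keep the $L^2\to L^2$ bound on ${\rm Re}\,z=0$, and on ${\rm Re}\,z=-\lambda_0$ replace the $L^1\to L^\infty$ endpoint by a Hilbert--Schmidt bound obtained from the explicit kernel $|\check G_{-\lambda_0+is}(t,x)|\le C(s)|t|^{\lambda_0-1-d/2}$ together with one-dimensional Hardy--Littlewood--Sobolev in $t$; the scaling relation $2/p+d/q=1$ closes the HLS step exactly as you say, and the constraint $2\beta=d+2-q\in(0,1)$ is precisely $d+1<q<d+2$.

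One difference worth noting: the paper's proof, as written, only establishes the range $d+1<q\le d+2$ (their condition $0\le d+2-2\lambda_0<1$), even though the statement asserts all $q>d+1$. You go further and handle $q\ge d+2$ separately --- the endpoint $q=d+2$ via Theorem~\ref{thm:schatten-WTSW2} (Case~I), and $q>d+2$ by bilinear interpolation against the easy $\gS^\infty$ bound at $(p,q)=(2,\infty)$. This extra step is correct (the interpolation preserves both the scaling line and the identity $s_\theta=q_\theta$ between Schatten and spatial exponents) and makes your proof more complete on the full stated range. For the paper's intended application (Theorem~\ref{thm:strichartz-orthonormal-general}), only $d+1<q\le d+2$ is actually needed, since the complementary range of the density exponent was already covered in \cite{FraLewLieSei-14}; this is presumably why the authors did not spell out the case $q>d+2$.
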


\begin{proof}[Proof of Theorem \ref{thm:schatten-strichartz-mixed}]
We investigate more precisely the bounds on the family $G_z$ introduced in the proof of Theorem \ref{thm:schatten-WTSW} when $S$ is the paraboloid. Strichartz \cite{Strichartz-77} uses the definition
$$G_z(\omega,\xi)=\frac{1}{\Gamma(z+1)}(\omega-|\xi|^2)^z_+,\,\forall(\omega,\xi)\in\R\times\R^d,$$
which ensures that the Fourier multiplication operator with $G_{-1}$ coincides with the operator $T_S$. As before, we have a first bound
$$\norm{T_{is}}_{L^2(\R^{d+1})\to L^2(\R^{d+1})}=\norm{G_{is}}_{L^\ii(\R^{d+1})}\le\left|\frac{1}{\Gamma(1+is)}\right|\le Ce^{\pi|s|/2}.$$
To prove that $T_{-\lambda_0+is}$ is bounded from $L^1$ to $L^\ii$, Strichartz computes explicitly the (inverse) Fourier transform of $G_z$ and obtains
$$\check{G}_z(t,x)=\pi^{-\frac{d+1}{2}}ie^{iz\pi/2}e^{-i\pi d/4}e^{-i|x|^2/4t}|t|^{-d/2}(-t-i0)^{-z-1},\,\forall (t,x)\in\R\times\R^d.$$
He deduces from this formula that $\check{G}_z$ belongs to $L^\ii_{t,x}$ when $\text{Re}\,z=-1-d/2$. We now explain how to obtain better results from this expression than the one obtained in Theorem \ref{thm:schatten-WTSW}. To do so, recall that the distribution $(-t-i0)^\lambda$ on $\R$ satisfies the identity 
$$(-t-i0)^\lambda=t_-^\lambda+e^{-i\pi\lambda}t^\lambda_+$$
for $\text{Re}\,\lambda>-1$ \cite[Ch. I, Sec. 3.6]{GelShi-64}, where $t_\pm^\lambda$ are the distributions given by the $L^1_\text{loc}$-functions 
$$t_+^\lambda=\begin{cases}
               t^\lambda &\text{for }t>0 \\
		0  &\text{for } t\le0,
              \end{cases}
  \qquad t_-^\lambda=\begin{cases}
                0 &\text{for }t\ge0 \\
		(-t)^\lambda  &\text{for } t<0.
              \end{cases}$$
In particular, the distribution $(-t-i0)^\lambda$ is also given by a $L^1_\text{loc}$-function, and we deduce the bound
$$\left|(-t-i0)^\lambda\right|\le\max\left(1,e^{\pi\text{Im}\,\lambda}\right)|t|^{\text{Re}\,\lambda},\,\quad\forall t\in\R,$$
valid for all $\text{Re}\,\lambda>-1$. In our context, we have $\lambda=-z-1$ with $z=-\lambda_0+is$, so that $\text{Re}\,\lambda=\lambda_0-1>0$. We thus deduce the bound
$$\left|\check{G}_{-\lambda_0+is}(t,x)\right|\le C\max\left(1,e^{-3\pi s/2}\right)|t|^{\lambda_0-1-d/2},\quad\forall(t,x)\in\R\times\R^d,$$
valid for all $s\in\R$ and for all $\lambda_0>1$. We now go back to the proof of Proposition \ref{prop:complex-interpolation} and provide another estimate for $\norm{W_1^{-z}T_zW_2^{-z}}_{\gS^2}$ when $z=-\lambda_0+is$ using the Hardy--Littlewood--Sobolev inequality:
\begin{align*}
 \norm{W_1^{\lambda_0-is}T_{-\lambda_0+is}W_2^{\lambda_0-is}}_{\gS^2}^2 &= \hspace{-0.2cm}\int\limits_{\R^{2(d+1)}}W_1(t,x)^{2\lambda_0}\left|\check{G}_{-\lambda_0+is}(t-t',x-x')\right|^2W_2(t',x')^{2\lambda_0}\,dxdx'dtdt'\\
  &\le C\max\left(1,e^{-3\pi s/2}\right)\int_\R\int_\R\frac{\norm{W_1(t)}_{L^{2\lambda_0}_x(\R^d)}^{2\lambda_0}\norm{W_2(t')}_{L^{2\lambda_0}_x(\R^d)}^{2\lambda_0}}{|t-t'|^{d+2-2\lambda_0}}\,dtdt'\\
 &\le C\max\left(1,e^{-3\pi s/2}\right)\norm{W_1}_{L^{\frac{4\lambda_0}{2\lambda_0-d}}_tL^{2\lambda_0}_x(\R\times\R^d)}^{2\lambda_0}\norm{W_2}_{L^{\frac{4\lambda_0}{2\lambda_0-d}}_tL^{2\lambda_0}_x(\R\times\R^d)}^{2\lambda_0},
\end{align*}
provided that $0\le d+2-2\lambda_0<1$, that is $(d+1)/2<\lambda_0\le1+d/2$. For this range of $\lambda_0$, we conclude as in the proof of Proposition \ref{prop:complex-interpolation} that
$$\norm{W_1T_{-1}W_2}_{\gS^{2\lambda_0}(L^2(\R^{d+1}))}\le C\norm{W_1}_{L^{\frac{4\lambda_0}{2\lambda_0-d}}_tL^{2\lambda_0}_x(\R\times\R^d)}\norm{W_2}_{L^{\frac{4\lambda_0}{2\lambda_0-d}}_tL^{2\lambda_0}_x(\R\times\R^d)},$$
which is the desired estimate. 
\end{proof}

\begin{remark}
 The same proof actually gives the full range of Strichartz estimates for a single function, except for the endpoints. Hence, all Strichartz estimates (and not only the diagonal ones) are implicitly contained in Strichartz's original article, except the endpoints. We are not aware that this has been observed before. 
\end{remark}

\subsection{Square root of the Laplacian case}

When $S$ is the cone $S=\{(\omega,\xi)\in\R\times\R^d,\,\omega^2=|\xi|^2\}$ endowed with the measure $d\mu(\omega,\xi)=(2|(\omega,\xi)|)^{-1}d\sigma(\omega,\xi)$, one has the identity for all $f\in L^1(S,d\mu)$ and for all $(t,x)\in\R\times\R^d$,
\begin{multline*}
  \cE_Sf(t,x)=\frac{1}{(2\pi)^{d+1}}\int_Se^{i(t,x)\cdot(\omega,\xi)}f(\omega,\xi)\,d\mu(\omega,\xi)\\
  =\frac{1}{(2\pi)^{d+1}}\int_{\R^d}e^{it|\xi|}e^{ix\cdot\xi}f(|\xi|,\xi)\frac{d\xi}{2\sqrt{2}|\xi|}+\frac{1}{(2\pi)^{d+1}}\int_{\R^d}e^{-it|\xi|}e^{ix\cdot\xi}f(-|\xi|,\xi)\frac{d\xi}{2\sqrt{2}|\xi|}.
\end{multline*}
In particular, when one chooses $f(\omega,\xi)=2\sqrt{2}|\xi|\hat{\phi}(\xi)$ if $\omega>0$ and $f(\omega,\xi)=0$ if $\omega<0$, we have the identity
$$\cE_Sf(t,x)=\frac{1}{2\pi}\left(e^{it(-\Delta)^{1/2}}\phi\right)(x),\quad\forall(t,x)\in\R\times\R^d.$$
Since $\cE_S$ is bounded from $L^2(S,d\mu)$ to $L^{2(d+1)/(d-1)}(\R^{d+1})$, we deduce the following Strichartz inequality
$$\norm{e^{it(-\Delta)^{1/2}}\phi}_{L^{2(d+1)/(d-1)}_{t,x}(\R\times\R^d)}\le C\norm{\phi}_{\dot{H}^{1/2}(\R^d)},$$
with $C>0$ independent of $\phi$. We obtain the corresponding version of this result for orthonormal functions.

\begin{theorem}[Strichartz estimates for orthonormal functions---fractional Laplacian case]\label{thm:strichartz-orthonormal-fractional}
 Assume that $d\ge1$. Then, for any (possibly infinite) orthonormal system $(f_j)$ in $\dot{H}^{1/2}(\R^d)$ and for any $(\nu_j)\subset\C$, we have
 \begin{equation}
  \norm{\sum_j \nu_j\left|e^{it(-\Delta)^{1/2}}f_j\right|^2}_{L^{\frac{d+1}{d-1}}_{t,x}(\R\times\R^d)}\le C\left(\sum_j|\nu_j|^{1+\frac{1}{d}}\right)^{\frac{d}{d+1}},
 \end{equation}
 with $C>0$ independent of $(\nu_j)$ and $(f_j)$.
\end{theorem}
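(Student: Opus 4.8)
The plan is to deduce this theorem from the restriction estimate for orthonormal functions on quadratic surfaces, Theorem~\ref{thm:restriction-orthonormal2}, applied with $N=d+1$ and $S$ the cone $\{(\omega,\xi)\in\R\times\R^d:\ \omega^2=|\xi|^2\}$, in the same way Theorem~\ref{thm:strichartz-orthonormal-diagonal} was obtained from the paraboloid. The cone is a surface of type Case~II, for which Theorem~\ref{thm:schatten-WTSW2}, and hence Theorem~\ref{thm:restriction-orthonormal2}, applies with $q=N/2=(d+1)/2$. For this value one computes $q'=(d+1)/(d-1)$, matching the space-time Lebesgue exponent in the statement, and $2q/(2q-1)=1+1/d$ together with $1-1/(2q)=d/(d+1)$, matching the $\ell^{1+1/d}$ summability on the right-hand side. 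Thus the only task is to transport an orthonormal system in $\dot{H}^{1/2}(\R^d)$ to an orthonormal system in $L^2(S,d\mu)$ supported on one sheet of the cone, along the correspondence used in the paragraph preceding the statement, and to keep track of a universal constant.

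Concretely, I would consider the linear map $U$ sending $\phi\colon\R^d\to\C$ to the function $f$ on $S$ supported on the upper sheet $\{\omega=|\xi|\}$ with $f(|\xi|,\xi)=2\sqrt{2}\,|\xi|\,\hat\phi(\xi)$, for which the computation before the statement gives $\cE_S(U\phi)=\tfrac1{2\pi}e^{it(-\Delta)^{1/2}}\phi$ (the lower-sheet term drops out since $f$ vanishes there). The first claim to verify is that $U$ is, up to a fixed multiplicative constant $c>0$, an isometry from $\dot{H}^{1/2}(\R^d)$ onto the closed subspace of $L^2(S,d\mu)$ of functions supported on the upper sheet. Parametrizing that sheet by $\xi\in\R^d$, the density $d\mu=(2|(\omega,\xi)|)^{-1}d\sigma(\omega,\xi)$ becomes a constant multiple of $|\xi|^{-1}\,d\xi$ (the factor $|(\omega,\xi)|=\sqrt2\,|\xi|$ on the cone and the graph Jacobian of $\omega=|\xi|$ combine into a constant), so that
\begin{equation*}
  \norm{U\phi}_{L^2(S,d\mu)}^2=c^2\int_{\R^d}|\xi|^2\,|\hat\phi(\xi)|^2\,\frac{d\xi}{|\xi|}=c^2\int_{\R^d}|\xi|\,|\hat\phi(\xi)|^2\,d\xi=c^2\,\norm{\phi}_{\dot{H}^{1/2}(\R^d)}^2.
\end{equation*}
The mechanism here is that the factor $|\xi|^{+1}$ forced on $f$ in order to match $\cE_S$ with $e^{it(-\Delta)^{1/2}}$ combines with the factor $|\xi|^{-1}$ in $d\mu$ to produce exactly the Fourier weight $|\xi|^{+1}$ defining $\dot{H}^{1/2}$. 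In particular $U$ maps orthonormal systems in $\dot{H}^{1/2}(\R^d)$ to systems that become orthonormal in $L^2(S,d\mu)$ after dividing by $c$.

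Granting this, the conclusion follows at once: given an orthonormal system $(f_j)$ in $\dot{H}^{1/2}(\R^d)$ and coefficients $(\nu_j)\subset\C$, the functions $(c^{-1}Uf_j)$ form an orthonormal system in $L^2(S,d\mu)$, and invoking Theorem~\ref{thm:restriction-orthonormal2} for them with coefficients $(\nu_j)$, together with $\cE_S(c^{-1}Uf_j)=\tfrac1{2\pi c}\,e^{it(-\Delta)^{1/2}}f_j$, gives the claimed inequality with a constant absorbing $(2\pi c)^2$. I do not anticipate a genuine obstacle; the only point requiring care is the bookkeeping of $d\mu$ on the upper sheet (so that the $\dot{H}^{1/2}$-norm, taken as the completion under $\norm{\,|\xi|^{1/2}\hat\phi\,}_{L^2}$, really appears), together with the harmless mild singularity of $d\mu$ at the vertex of the cone — compensated by the factor $|\xi|$ in the definition of $U$ — and, when $d=1$, the usual reading of the endpoint exponents ($q'=\infty$ and $\ell^2$ on the right-hand side).
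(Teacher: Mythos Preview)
Your proposal is correct and follows essentially the same route as the paper: transport the orthonormal system from $\dot{H}^{1/2}(\R^d)$ to $L^2(S,d\mu)$ on the upper sheet of the cone via $\phi\mapsto \1(\omega>0)\,2\sqrt{2}\,|\xi|\,\hat\phi(\xi)$, and then invoke the restriction estimate for orthonormal functions on quadratic surfaces of Case~II (Theorem~\ref{thm:restriction-orthonormal2}) with $N=d+1$ and $q=(d+1)/2$. Your write-up is in fact more careful than the paper's, which omits the verification of the isometry and the bookkeeping of the constant $c$.
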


We also have the operator version of this inequality
 \begin{equation}
  \norm{\rho_{e^{-it(-\Delta)^{1/2}}\gamma e^{it(-\Delta)^{1/2}}}}_{L^{\frac{d+1}{d-1}}_{t,x}(\R\times\R^d)}\le C\norm{(-\Delta)^{1/4}\gamma(-\Delta)^{1/4}}_{\gS^{1+\frac1d}(L^2(\R^d))},
 \end{equation}
 which holds with $C>0$ independent of $\gamma$.

\begin{proof}[Proof of Theorem \ref{thm:strichartz-orthonormal-fractional}]
 If $(f_j)$ is an orthonormal system in $\dot{H}^{1/2}(\R^d)$, the functions
 $$(\omega,\xi)\mapsto g_j(\omega,\xi)=\1(\omega>0)2\sqrt{2}|\xi|f_j(\xi)$$
 are orthonormal in $L^2(S,d\mu)$ as explained in the beginning of this section. We then apply Theorem \ref{thm:restriction-orthonormal} to this system, with $S$ being a cone (Case II).
\end{proof}

\subsection{Pseudo-relativistic case} Finally, when the surface $S$ is the two-sheeted hyperboloid $S=\{(\omega,\xi)\in\R\times\R^d,\,\omega^2=1+|\xi|^2\}$, with the measure $d\mu(\omega,\xi)=(2|(\omega,\xi)|)^{-1}d\sigma(\omega,\xi)$, one has the identity for all $f\in L^1(S,d\mu)$ and for all $(t,x)\in\R\times\R^d$,
\begin{align*}
  \cE_Sf(t,x) &= \frac{1}{(2\pi)^{d+1}}\int_Se^{i(t,x)\cdot(\omega,\xi)}f(\omega,\xi)\,d\mu(\omega,\xi)\\
  &=\frac{1}{(2\pi)^{d+1}}\int_{\R^d}e^{it\sqrt{1+|\xi|^2}}e^{ix\cdot\xi}f(\sqrt{1+|\xi|^2},\xi)\frac{d\xi}{2\sqrt{1+|\xi|^2}}\\
  &+\frac{1}{(2\pi)^{d+1}}\int_{\R^d}e^{-it\sqrt{1+|\xi|^2}}e^{ix\cdot\xi}f(-\sqrt{1+|\xi|^2},\xi)\frac{d\xi}{2\sqrt{1+|\xi|^2}}.
\end{align*}
In particular, when one chooses $f(\omega,\xi)=2\1(\omega>0)\sqrt{1+|\xi|^2}\hat{\phi}(\xi)$, we have the identity
$$\cE_Sf(t,x)=\frac{1}{2\pi}\left(e^{it\sqrt{1-\Delta}}\phi\right)(x),\quad\forall(t,x)\in\R\times\R^d.$$
Since $\cE_S$ is bounded from $L^2(S,d\mu)$ to $L^q(\R^{d+1})$, with $2+4/d\le q\le 2+4/(d-1)$ ($d\ge2$) and $6\le q<\ii$ ($d=1$), we deduce the following Strichartz inequality
$$\norm{e^{it\sqrt{1-\Delta}}\phi}_{L^q_{t,x}(\R\times\R^d)}\le C\norm{\phi}_{H^{1/2}(\R^d)},$$
with $C>0$ independent of $\phi$. We obtain the corresponding version of this result for orthonormal functions.

\begin{theorem}[Strichartz estimates for orthonormal functions---pseudo-relativistic case]\label{thm:strichartz-orthonormal-pseudo-relativistic}
 Assume that $d\ge1$. Let $1+2/d\le q\le 1+2/(d-1)$ if $d\ge2$ and $3\le q<\ii$ if $d=1$. Then, for any (possibly infinite) orthonormal system $(f_j)$ in $H^{1/2}(\R^d)$, and for any $(\nu_j)\subset\C$, we have
 \begin{equation}
  \norm{\sum_j \nu_j\left|e^{it\sqrt{1-\Delta}}f_j\right|^2}_{L^q_{t,x}(\R\times\R^d)}\le C\left(\sum_j|\nu_j|^{\frac{2q}{q+1}}\right)^{\frac{q+1}{2q}},
 \end{equation}
 with $C>0$ independent of $(\nu_j)$ and $(f_j)$.
\end{theorem}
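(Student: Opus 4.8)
The plan is to deduce the theorem from the restriction estimate for orthonormal functions in the quadratic case, Theorem~\ref{thm:restriction-orthonormal2}, following the scheme of the proof of Theorem~\ref{thm:strichartz-orthonormal-fractional}. Put $N=d+1$ and let $S=\{(\omega,\xi)\in\R\times\R^d:\omega^2=1+|\xi|^2\}$ equipped with $d\mu(\omega,\xi)=(2|(\omega,\xi)|)^{-1}d\sigma(\omega,\xi)$; this is a Case~III surface with $a=N-1$, a two-sheeted hyperboloid.

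First I would transport the orthonormal system onto $S$. Given $(f_j)$ orthonormal in $H^{1/2}(\R^d)$, set $g_j(\omega,\xi):=2\,\1(\omega>0)\sqrt{1+|\xi|^2}\,\hat f_j(\xi)$, supported on the upper sheet. On that sheet, parametrized by $\xi$ through $\omega=\sqrt{1+|\xi|^2}$, one has $|(\omega,\xi)|^2=1+2|\xi|^2$ and $d\sigma=\sqrt{(1+2|\xi|^2)/(1+|\xi|^2)}\,d\xi$, so $d\mu=\frac{d\xi}{2\sqrt{1+|\xi|^2}}$, and a short computation gives $\langle g_j,g_k\rangle_{L^2(S,d\mu)}=2\langle f_j,f_k\rangle_{H^{1/2}(\R^d)}$ (up to the Fourier normalization). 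Hence $(2^{-1/2}g_j)$ is an orthonormal system in $L^2(S,d\mu)$; the fact that it vanishes on the lower sheet is harmless, since Theorem~\ref{thm:restriction-orthonormal2} applies to arbitrary orthonormal systems in $L^2(S,d\mu)$. Moreover, as recalled at the beginning of this subsection, $\cE_S g_j=\frac1{2\pi}e^{it\sqrt{1-\Delta}}f_j$.

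Next I would match exponents and invoke Theorem~\ref{thm:restriction-orthonormal2}. Writing $\tilde q$ for the exponent there, the choice $\tilde q=q'$ makes the left-hand side $L^{\tilde q'}=L^q$ and turns the coefficient exponents $2\tilde q/(2\tilde q-1)$, $1-1/(2\tilde q)$ into $2q/(q+1)$, $(q+1)/(2q)$, exactly as in the statement. So it only remains to check that the stated range of $q$ corresponds to an admissible $\tilde q$ for a Case~III surface with $a=N-1$. For $d\ge2$ this is Case~III(ii) ($a=N-1\neq0$, $N\ge3$), whose condition $N/2\le\tilde q\le(N+1)/2$, i.e. $(d+1)/2\le q'\le(d+2)/2$, is equivalent to $1+2/d\le q\le1+2/(d-1)$, the lower endpoint $q=1+2/d$ corresponding to $\tilde q=(N+1)/2$ as in the Stein--Tomas case. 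For $d=1$ this is Case~III(iii) ($a=1$, $N=2$), whose condition translates to $3\le q<\ii$. In either case, Theorem~\ref{thm:restriction-orthonormal2} applied to $(2^{-1/2}g_j)$ together with $\cE_S g_j=\frac1{2\pi}e^{it\sqrt{1-\Delta}}f_j$ (the constants being absorbed into $C$) yields the claim. I do not expect any genuine difficulty here: the analysis is all contained in Theorem~\ref{thm:restriction-orthonormal2}, and the only delicate points are the measure/Fourier bookkeeping in $\langle g_j,g_k\rangle_{L^2(S,d\mu)}\propto\langle f_j,f_k\rangle_{H^{1/2}}$ and the verification that Case~III(ii)/(iii) delivers exactly the range $1+2/d\le q\le1+2/(d-1)$ (resp. $3\le q<\ii$ for $d=1$), and nothing more.
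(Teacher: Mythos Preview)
Your proposal is correct and follows essentially the same approach as the paper: transport the orthonormal system onto the upper sheet of the hyperboloid via $g_j(\omega,\xi)=2\,\1(\omega>0)(1+|\xi|^2)^{1/2}\hat f_j(\xi)$ and apply the quadratic restriction estimate for Case~III(ii)/(iii). Your bookkeeping is in fact slightly more careful than the paper's own proof --- you correctly track the normalization $2^{-1/2}$ making $(g_j)$ orthonormal, and you invoke Theorem~\ref{thm:restriction-orthonormal2} (the quadratic case) whereas the paper's proof text cites Theorem~\ref{thm:restriction-orthonormal} (the compact case), which is evidently a typo there.
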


We also have the operator version of this inequality
 \begin{equation}
  \norm{\rho_{e^{-it\sqrt{1-\Delta}}\gamma e^{it\sqrt{1-\Delta}}}}_{L^q_{t,x}(\R\times\R^d)}\le C\norm{(1-\Delta)^{1/4}\gamma(1-\Delta)^{1/4}}_{\gS^{\frac{2q}{q+1}}(L^2(\R^d))},
 \end{equation}
 which holds with $C>0$ independent of $\gamma$.

\begin{proof}[Proof of Theorem \ref{thm:strichartz-orthonormal-pseudo-relativistic}]
 If $(f_j)$ is an orthonormal system in $H^{1/2}(\R^d)$, the functions
 $$(\omega,\xi)\mapsto g_j(\omega,\xi)=\1(\omega>0)2(1+|\xi|^2)^{1/2}f_j(\xi)$$
 are orthonormal in $L^2(S,d\mu)$ as explained in the beginning of this section. We then apply Theorem \ref{thm:restriction-orthonormal} to this system, with $S$ being a two-sheeted hyperboloid (Case III(ii-iii)).
\end{proof}

\section{Uniform Sobolev estimates and the Limiting Absorption Principle}\label{sec:sobolev}

Our last result concerns Schatten class properties of the resolvent $(-\Delta-z)^{-1}$ of the Laplace operator on $\R^N$.

\begin{theorem}[Uniform resolvent bounds in Schatten spaces]\label{thm:uniform-sobolev-schatten}
Let $N\ge2$ and assume that
 $$\begin{cases}
  \frac43\le q\le \frac32 & \text{if } N=2,\\
  \frac{N}{2}\le q\le \frac{N+1}{2} & \text{if } N\ge3.
 \end{cases}$$
 Then, for all $z\in\C\setminus[0,\ii)$, we have the estimate
 \begin{equation}\label{eq:kenig-ruiz-sogge-schatten}
  \norm{W_1(-\Delta-z)^{-1}W_2}_{\gS^{(N-1)q/(N-q)}(L^2(\R^N))}\le C|z|^{-1+\frac{N}{2q}}\norm{W_1}_{L^{2q}(\R^N)}\norm{W_2}_{L^{2q}(\R^N)},
 \end{equation}
 where $C>0$ is independent of $W_1$, $W_2$ and $z$. 
\end{theorem}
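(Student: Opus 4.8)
The plan is to follow the scheme of the proof of Theorem~\ref{thm:schatten-WTSW}: first reduce to $|z|=1$ by dilation invariance, then split $(-\Delta-z)^{-1}$ into a part localized in frequency near the sphere $\{|\xi|=1\}$ and a smooth remainder of order $-2$, treat the localized part by combining the analytic-family argument of Kenig--Ruiz--Sogge \cite{KenRuiSog-87} with Proposition~\ref{prop:complex-interpolation}, obtain a Hilbert--Schmidt bound at a second endpoint, and interpolate. For the reduction I would use the unitary dilation $U_\lambda f(x)=\lambda^{N/2}f(\lambda x)$, which satisfies $U_\lambda(-\Delta)U_\lambda^{-1}=\lambda^{-2}(-\Delta)$, hence $(-\Delta-\lambda^2 z)^{-1}=\lambda^{-2}U_\lambda(-\Delta-z)^{-1}U_\lambda^{-1}$, while $U_\lambda^{-1}WU_\lambda$ is multiplication by $W(\cdot/\lambda)$ with $\norm{W(\cdot/\lambda)}_{L^{2q}}=\lambda^{N/(2q)}\norm{W}_{L^{2q}}$. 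Taking $\lambda=|z|^{1/2}$, inequality \eqref{eq:kenig-ruiz-sogge-schatten} for general $z$ reduces to the same inequality for $z/|z|$ with the prefactor $|z|^{-1+N/(2q)}$ removed. From now on $|z|=1$ and, by complex conjugation, $\im z\ge0$; all bounds below are uniform over this set. Fix $\chi\in C^\infty_c(\R^N)$ with $\chi\equiv1$ near $\{|\xi|=1\}$ and $\supp\chi\subset\{1/2<|\xi|<2\}$, and write $(-\Delta-z)^{-1}=A_z+B_z$, where $A_z$, $B_z$ are the Fourier multipliers with symbols $a_z(\xi):=\chi(\xi)(|\xi|^2-z)^{-1}$ and $(1-\chi(\xi))(|\xi|^2-z)^{-1}$.

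The localized operator $A_z$ is the heart of the matter. At the Stein--Tomas exponent $q=(N+1)/2$ I would apply Proposition~\ref{prop:complex-interpolation} to the family $T_\zeta$ of Fourier multipliers with symbols $m_\zeta(\xi)=\gamma(\zeta)\,\chi(\xi)\,(|\xi|^2-z)^{\zeta}$ on the strip $-\tfrac{N+1}{2}\le\re\zeta\le0$, where $\gamma$ is a nonvanishing entire function of at most exponential growth on vertical lines, normalized so that $T_{-1}$ is a fixed nonzero multiple of $A_z$. This family is analytic in the sense of Stein; on the line $\re\zeta=0$ one has $\norm{T_{is}}_{L^2\to L^2}=\norm{m_{is}}_{L^\infty}\le|\gamma(is)|\,e^{\pi|s|}$ since $|(|\xi|^2-z)^{is}|=e^{-s\arg(|\xi|^2-z)}\le e^{\pi|s|}$ and $\chi$ is bounded; on the line $\re\zeta=-\tfrac{N+1}{2}$ the convolution kernel of $T_{-(N+1)/2+is}$ is a bounded function with $L^\infty$-norm $\le Ce^{C|s|}$ \emph{uniformly in $z$ with $|z|=1$}. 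This last bound is exactly the oscillatory-integral (stationary-phase) estimate underlying the Kenig--Ruiz--Sogge uniform Sobolev inequality — it uses the nonvanishing Gauss curvature of the sphere and the regularizing effect of $\im z$ — and is the one genuinely hard input. Proposition~\ref{prop:complex-interpolation} with $\lambda_0=(N+1)/2$ then gives $\norm{W_1A_zW_2}_{\gS^{N+1}}\le C\norm{W_1}_{L^{N+1}}\norm{W_2}_{L^{N+1}}$, i.e.\ \eqref{eq:kenig-ruiz-sogge-schatten} at $q=(N+1)/2$ for $A_z$.

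For the second endpoint I would use that, $a_z$ being compactly supported, stationary phase on the sphere together with the cancellation of the principal-value part near $|\xi|=1$ gives the uniform kernel bound $|\check a_z(w)|\le C(1+|w|)^{-(N-1)/2}$ for $|z|=1$. By the (weak-type) Hardy--Littlewood--Sobolev inequality,
$$\norm{W_1A_zW_2}_{\gS^2}^2=\iint|W_1(x)|^2\,|\check a_z(x-y)|^2\,|W_2(y)|^2\,dx\,dy\le C\,\norm{W_1}_{L^{4N/(N+1)}}^2\norm{W_2}_{L^{4N/(N+1)}}^2,$$
which is \eqref{eq:kenig-ruiz-sogge-schatten} for $A_z$ at $q_-=2N/(N+1)$ (where indeed $(N-1)q_-/(N-q_-)=2$). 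The pairs $\big(q_-,2\big)$ and $\big(\tfrac{N+1}{2},N+1\big)$ both lie on the curve $s=(N-1)q/(N-q)$, so complex interpolation of these two bounds in the weights $W_i$ (in the style of Step~2 of the proof of Theorem~\ref{thm:schatten-WTSW}, via \cite[Thm.~2.9]{Simon-79}) yields \eqref{eq:kenig-ruiz-sogge-schatten} for $A_z$ for all $q\in[2N/(N+1),(N+1)/2]$, which contains $[N/2,(N+1)/2]$ when $N\ge3$ and equals $[4/3,3/2]$ when $N=2$.

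It remains to treat $B_z$ and assemble. The symbol of $B_z$ is a classical symbol of order $-2$, bounded near the origin and away from $\{|\xi|=1\}$ and comparable to $|\xi|^{-2}$ at infinity, uniformly in $|z|=1$; splitting it into two factors of equal modulus and invoking the Kato--Seiler--Simon inequality \eqref{eq:KSS} together with H\"older's inequality in Schatten spaces (and Cwikel's inequality, resp.\ a Hilbert--Schmidt estimate in low dimension, for the lower endpoint) gives \eqref{eq:kenig-ruiz-sogge-schatten} for $B_z$ throughout the stated range of $q$; this step is routine. Adding the estimates for $A_z$ and $B_z$ and undoing the initial dilation finishes the proof. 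The only delicate point in the whole argument is the uniform-in-$z$ $L^1\to L^\infty$ bound for the analytic family on the critical line (equivalently, the uniform boundedness of $\check a_z$ off the diagonal together with the decay $|\check a_z(w)|\lesssim(1+|w|)^{-(N-1)/2}$), which is precisely the oscillatory estimate of Kenig--Ruiz--Sogge; everything else is interpolation and bookkeeping.
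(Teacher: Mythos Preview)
Your argument is correct, but it takes a route different from the paper's. The paper does not split the resolvent: it works directly with the single analytic family $\zeta\mapsto(-\Delta-z)^{-\zeta}$ on the strip $-a\le\re\zeta\le0$, using the Kenig--Ruiz--Sogge pointwise kernel bound $|(-\Delta-z)^{-a+it}(x,x')|\le M_{N,a}e^{C t^2}|x-x'|^{-(N+1)/2+a}$ (valid for $(N-1)/2\le a\le(N+1)/2$, resp.\ $1\le a\le3/2$ when $N=2$) to get a Hilbert--Schmidt bound on the line $\re\zeta=-a$ via Hardy--Littlewood--Sobolev, then interpolates against the trivial $L^2\to L^2$ bound on $\re\zeta=0$ exactly as in Proposition~\ref{prop:complex-interpolation}; varying $a$ in that interval produces the full $q$-range in one stroke, with $2a=(N-1)q/(N-q)$. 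By contrast you localize in frequency, obtain two fixed Schatten endpoints for the sphere-localized piece $A_z$ (the $\gS^{N+1}$ one still via the analytic family at $a=(N+1)/2$, the $\gS^2$ one from the $|\check a_z(w)|\lesssim(1+|w|)^{-(N-1)/2}$ decay), interpolate between those in a second step, and then treat $B_z$ separately by Kato--Seiler--Simon/Cwikel. Both proofs rest on the same oscillatory estimate of \cite{KenRuiSog-87}; the paper's version is shorter because it needs neither the decomposition $A_z+B_z$ nor the second interpolation, while your version has the minor conceptual advantage of cleanly isolating the restriction-theoretic core from the elliptic remainder and mirrors the structure of the proof of Theorem~\ref{thm:schatten-WTSW} more closely.
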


 The estimate \eqref{eq:kenig-ruiz-sogge-schatten} with the Schatten norm replaced by the operator norm was proved by Kenig, Ruiz, and Sogge \cite{KenRuiSog-87}. Their result is only stated for $|z|\ge1$ and for $N\geq 3$ in \cite{KenRuiSog-87}. The boundedness for all $z\in\C\setminus[0,\infty)$ with the same dependence on $z$ as on the right side of \eqref{eq:kenig-ruiz-sogge-schatten} follows easily by scaling. The case $N=2$ can be treated along the same lines \cite{Frank-11}.

Similary bounds, but with pointwise assumptions $|W_j(x)| \leq C (1+|x|^2)^{-\alpha/2}$ instead of integral assumptions, can be found for instance in \cite[Prop.~7.1.22]{Yafaev-10}. We recover these results for $N/(N+1)<\alpha\leq 1$ (in dimensions $N\geq 3$, for simplicity).

\begin{remark}
 We do not know whether the restriction $q\ge 4/3$ (instead of $q>1$) in $N=2$ is technical or not; see Proposition \ref{resbound2d} below for some results in the case $1<q<4/3$.
\end{remark}

\begin{proof}
By scaling it suffices to prove the bound only for $z\neq 1$ with $|z|=1$. For such $z$ we shall prove the bounds
\begin{equation}
\label{eq:resboundint1}
\left\| W_1 (-\Delta-z)^{it} W_2 \right\|_{L^2(\R^N)\to L^2(\R^N)} \le \|W_1\|_{L^\ii(\R^N)} \|W_2\|_{L^\ii(\R^N)},\, \forall t\in\R,
\end{equation}
and
\begin{equation}
\label{eq:resboundint2}
\left\| W_1 (-\Delta-z)^{-a+it} W_2 \right\|_{\gS^2(L^2(\R^N))} \leq M_{N,a} e^{C_{N,a}'t^2} \|W_1\|_{L^\frac{4N}{N-1+2a}(\R^N)} \|W_2\|_{L^\frac{4N}{N-1+2a}(\R^N)},\,\forall t\in\R,
\end{equation}
where $a$ is an arbitrary parameter satisfying $1\leq a \leq 3/2$ if $N=2$ and $(N-1)/2\leq a\leq (N+1)/2$ if $N\geq 3$. Complex interpolation, similarly as in the proof of Proposition \ref{prop:complex-interpolation}, for the family $W_1^{\zeta} (-\Delta-z)^{-\zeta} W_2^{\zeta}$ then implies
$$
\left\| W_1 (-\Delta-z)^{-1} W_2 \right\|_{\gS^{2a}(L^2(\R^N))} \leq C_{N,a} \|W_1\|_{L^\frac{4aN}{N-1+2a}(\R^N)} \|W_2\|_{L^\frac{4aN}{N-1+2a}(\R^N)} \,.
$$
This is the claimed inequality up to the change of variables $a = q(N-1)/(2(N-q))$. If $N=2,3$ and $a=1$, then \eqref{eq:resboundint2} is already the desired bound and no complex interpolation is necessary. (We note that the bound in \eqref{eq:resboundint2} grows superexponentially, but still sub-double-exponentially with $t$. This growth can be dealt with, for example, as in \cite{Stein-56}.)

Inequality \eqref{eq:resboundint1} is straightforward by Plancherel's theorem.

For the proof of \eqref{eq:resboundint2} we note the pointwise bound, uniformly in $|z|=1$, $z\neq 1$,
\begin{equation}
\label{eq:resboundptw}
\left| (-\Delta-z)^{-a+it}(x,x') \right| \le M_{N,a} e^{C_{N,a}'t^2} |x-x'|^{-\frac{N+1}{2}+a}
\end{equation}
for $(N-1)/2 \leq a\leq (N+1)/2$ and $N\geq 2$. This bound is essentially contained in \cite{KenRuiSog-87} (see equations (2.23) and (2.25) there), and was also used in \cite{ChaSaw-90} and \cite{Frank-11}. Notice that, in \cite{KenRuiSog-87}, only the case $N\ge3$ is considered. The bound in the case $N=2$ follows from the same methods, since the dimension only enters throught the order of the modified Bessel functions $K_{N/2-a+it}$. Inequality \eqref{eq:resboundint2} then follows immediately from \eqref{eq:resboundptw} and the Hardy--Littlewood--Sobolev inequality.
\end{proof}

\begin{proposition}\label{resbound2d}
Assume that $N=2$ and that $1<q\le 4/3$. Then there is a constant $C_{q,2}'$ such that for all $z\in\C\setminus[0,\infty)$
$$
\left\| W_1 (-\Delta-z)^{-1} W_2 \right\|_{\gS^2(L^2(\R^2))} \le C_{q,2}' |z|^{-1+1/q} \|W_1\|_{L^{2q}(\R^2)} \|W_2\|_{L^{2q}(\R^2)} \,.
$$
\end{proposition}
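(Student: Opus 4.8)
The plan is to exploit the fact that the target space is the Hilbert--Schmidt class $\gS^2$: for $\gS^2$ one can compute the kernel of the operator explicitly and bound its $L^2$ norm by the Hardy--Littlewood--Sobolev inequality, and no complex interpolation is needed. This is exactly why the lower restriction $q\ge 4/3$ of Theorem \ref{thm:uniform-sobolev-schatten} can be dropped, the price being that one lands in $\gS^2$ rather than in the smaller space $\gS^{(N-1)q/(N-q)}$ when $q<4/3$.

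First I would write $W_1(-\Delta-z)^{-1}W_2$ as the integral operator with kernel $(x,y)\mapsto W_1(x)G_z(x-y)W_2(y)$, where $G_z$ is the convolution kernel of $(-\Delta-z)^{-1}$ on $\R^2$, so that
$$\norm{W_1(-\Delta-z)^{-1}W_2}_{\gS^2(L^2(\R^2))}^2=\int_{\R^2}\int_{\R^2}|W_1(x)|^2\,|G_z(x-y)|^2\,|W_2(y)|^2\d{x}\d{y} \,,$$
it being enough to treat $W_1,W_2$ bounded and compactly supported, the general case following by density and Fatou's lemma. The second ingredient is a pointwise bound on $G_z$. In $N=2$ one has the exact homogeneity $G_z(w)=G_{z/|z|}\big(|z|^{1/2}w\big)$ (the factor $|z|^{(N-2)/2}$ present in general dimension is trivial when $N=2$), so it suffices to bound $|G_z|$ for $|z|=1$, $z\neq 1$ — which is the relevant case here since $z\notin[0,\infty)$ forces $z/|z|\neq 1$. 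For such $z$ one has, uniformly, the far-field bound $|G_z(w)|\le C|w|^{-1/2}$ for $|w|\ge 1$ (this is precisely \eqref{eq:resboundptw} with $N=2$, $a=1$, $t=0$) together with the diagonal bound $|G_z(w)|\le C(1+|\log|w||)$ for $|w|\le 1$ (the logarithmic singularity of the two--dimensional resolvent, coming from the same modified Bessel function analysis that underlies \eqref{eq:resboundptw}, via $G_z(w)=\tfrac i4H_0^{(1)}(\sqrt z\,|w|)$ and its $\text{Im}\,z<0$ analogue). Combining the two, for every $\beta\in(0,1/2]$ there is $C_\beta$ with $|G_z(w)|\le C_\beta|w|^{-\beta}$ for all $w\neq 0$ and all $|z|=1$, $z\neq 1$; by the homogeneity, $|G_z(w)|^2\le C_\beta^2\,|z|^{-\beta}\,|w|^{-2\beta}$ for general $z\notin[0,\infty)$.

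I would then choose $\beta:=2-\tfrac2q$, so that $\beta\in(0,1/2]$ precisely because $1<q\le 4/3$, and set $\lambda:=2\beta=4-\tfrac4q\in(0,1]\subset(0,2)$. Inserting the kernel bound into the displayed identity and applying the Hardy--Littlewood--Sobolev inequality of homogeneity $\lambda$ with equal exponents $q$ on both factors — admissible since $\tfrac2q+\tfrac\lambda2=2$, $0<\lambda<2$ and $1<q<\ii$ — gives
\begin{align*}
\norm{W_1(-\Delta-z)^{-1}W_2}_{\gS^2}^2 &\le C_\beta^2\,|z|^{-\beta}\,C_{\mathrm{HLS}}\,\norm{\,|W_1|^2\,}_{L^q(\R^2)}\norm{\,|W_2|^2\,}_{L^q(\R^2)}\\
&= C_\beta^2 C_{\mathrm{HLS}}\,|z|^{2(1/q-1)}\,\norm{W_1}_{L^{2q}(\R^2)}^2\,\norm{W_2}_{L^{2q}(\R^2)}^2 \,,
\end{align*}
using $\beta=2(1-1/q)$. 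Taking square roots yields the assertion with $C_{q,2}'=C_\beta C_{\mathrm{HLS}}^{1/2}$.

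I do not expect a serious obstacle here. The only point requiring care is the uniformity of the pointwise resolvent bounds in Step 2 as $z$ approaches the positive half-line (the limiting absorption regime): the far-field part is exactly \eqref{eq:resboundptw}, and the near-diagonal logarithmic bound follows from the explicit Bessel representation, both uniformly in $|z|=1$, $z\neq 1$. Granting that, the remaining work — the kernel identity for $\gS^2$, the homogeneity of $-\Delta$, and the exponent bookkeeping in the Hardy--Littlewood--Sobolev step — is routine.
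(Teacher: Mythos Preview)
Your proposal is correct and follows essentially the same route as the paper: the paper's proof is a two-sentence sketch that invokes the pointwise bound $|(-\Delta-z)^{-1}(x,x')|\le C_\beta|x-x'|^{-\beta}$ for $0<\beta\le 1/2$ and $|z|=1$, $z\neq 1$ (citing \cite{Frank-11}), and then says the assertion follows from the Hardy--Littlewood--Sobolev inequality. You carry out exactly this argument, filling in the Hilbert--Schmidt kernel identity, the scaling reduction to $|z|=1$, the explicit choice $\beta=2-2/q$, and the HLS exponent bookkeeping that the paper leaves implicit.
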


\begin{proof}
We use the bound
$$
\left| (-\Delta-z)^{-1}(x,x') \right| \leq C_\beta |x-x'|^{-\beta}
$$
for any $0<\beta\leq 1/2$ and all $|z|=1$, $z\neq 1$; see \cite{Frank-11}, where it is pointed out that this bounds follows from those of Kenig--Ruiz--Sogge \cite{KenRuiSog-87}. The assertion now follows from the Hardy--Littlewood--Sobolev inequality.
\end{proof}

\begin{remark}\label{kerusoopt}
 The Schatten space $\gS^{(N-1)q/(N-q)}$ is optimal in \eqref{eq:kenig-ruiz-sogge-schatten}. Indeed, it is well-known that 
 $$W_1\left(\frac{1}{-\Delta-1-it}-\frac{1}{-\Delta-1+it}\right)W_2\xrightarrow[t\to0_+]{}2\pi iW_1T_{\Sph^{N-1}}W_2$$
 weakly in the sense of operators on $L^2(\R^N)$, where we recall that $T_{\Sph^{N-1}}$ was defined in Section \ref{sec:restriction}. In particular, the bound \eqref{eq:kenig-ruiz-sogge-schatten} implies a Schatten bound on $W_1T_{\Sph^{N-1}}W_2$ by the non-commutative Fatou lemma \cite[Thm. 2.7d)]{Simon-79}, for which we know the optimal exponent by Theorem \ref{thm:optimality-schatten-exponent}. A similar argument (based on Knapp's example on the sphere) shows that not even the operator norm of $W_1(-\Delta-z)^{-1} W_2$ can be bounded in terms of $\|W_1\|_{L^{2q}(\R^N)} \|W_2\|_{L^{2q}(\R^N)}$ if $2q>N+1$.
\end{remark}

\begin{remark}\label{keruso1d}
There is also a uniform resolvent bound in dimension $N=1$, namely,
 \begin{equation}\label{eq:kenig-ruiz-sogge-schatten1d}
  \norm{W_1(-\Delta-z)^{-1}W_2}_{\gS^{2}(L^2(\R))}\le \frac 12\,|z|^{-\frac{1}{2}}\norm{W_1}_{L^{2}(\R)}\norm{W_2}_{L^{2}(\R)},
 \end{equation}
for all $z\in\C\setminus[0,\infty)$. Indeed, this follows immediately from the explicit expression of the integral kernel of $(-\Delta-z)^{-1}$ in $N=1$,
 $$(-\Delta-z)^{-1}(x,y)=-\frac{e^{i\sqrt{z}|x-y|}}{2i\sqrt{z}},\quad\forall(x,y)\in\R\times\R,$$
 for all $z\in\C\setminus[0,\ii)$, where the square root is chosen such that ${\rm Im}\,\sqrt{z}>0$. Optimality of the Hilbert--Schmidt space $\gS^2$ can be proved similarly as in Remark \ref{kerusoopt}.
\end{remark}

As an application of Theorem \ref{thm:uniform-sobolev-schatten}, we prove a Limiting Absorption Principle in Schatten spaces.

\begin{theorem}[Limiting Absorption Principle in Schatten spaces]\label{thm:LAP}
Let $N\ge 2$ and assume that $V\in L^q(\R^N,\R)$ with
$$
\begin{cases}
1 < q \le 3/2 & \text{if}\ N=2 \\
\frac N 2 \le q \le \frac{N+1}{2} & \text{if}\ N\ge 3 \,.
\end{cases}
$$
Define $\alpha_q:=\max(2,(N-1)q/(N-q))$. Then $\sqrt{V} (-\Delta+V-z)^{-1} \sqrt{|V|} \in \mathfrak S^{\alpha_q}(L^2(\R^{N}))$ for every $z\in \C\setminus[0,\infty)$, where we used the notation $\sqrt{V}:=V/\sqrt{|V|}$ (with $\sqrt{V}:=0$ if $V=0$). The mapping $\C\setminus [0,\infty) \ni z\mapsto \sqrt{V} (-\Delta+V-z)^{-1} \sqrt{|V|}\in\gS^{\alpha_q}$ is analytic and extends continuously to the open interval $(0,\infty)$ (with possibly different boundary values from above and below). Moreover, under the additional assumption $q>N/2$, there is a constant $C_{N,q}$ such that if $|z|^{-1+N/{2q}} \|V\|_{L^q(\R^N)} \le C_{N,q}$ then
\begin{equation}
\label{eq:lapperturbed}
\left\| \sqrt{V} (-\Delta+V-z)^{-1} \sqrt{|V|} \right\|_{\mathfrak S^{\alpha_q}(L^2(\R^{N}))} \le 2C_{N,q} |z|^{-1+N/{2q}} \|V\|_{L^q(\R^N)} \,.
\end{equation}
If $q=N/2$ and $N\ge 3$, the bound \eqref{eq:lapperturbed} holds provided 
$|z| \ge C(V)$ for some constant $C(V)$ depending on $V$.
\end{theorem}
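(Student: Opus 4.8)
The plan is to reduce everything to the free uniform resolvent bound of Theorem~\ref{thm:uniform-sobolev-schatten} (supplemented by Proposition~\ref{resbound2d} when $N=2$, $1<q<4/3$) via the Birman--Schwinger formalism. First I would note that, for $V\in L^q(\R^N,\R)$ in the stated range, $|V|$ is infinitesimally form-bounded relative to $-\Delta$ (Sobolev when $N\ge3$; Gagliardo--Nirenberg together with Young's inequality when $N=2$), so $-\Delta+V$ is self-adjoint and bounded below with form domain $H^1(\R^N)$. Write $R_0(z)=(-\Delta-z)^{-1}$, $R_V(z)=(-\Delta+V-z)^{-1}$, and set $u:=|V|^{1/2}$, $v:=\sqrt V=(\sgn V)\,u$, so that $uv=vu=V$ as multiplication operators. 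Put $B_z:=v\,R_0(z)\,u=(\sgn V)\,u\,R_0(z)\,u$; since $\norm{u}_{L^{2q}}^2=\norm{V}_{L^q}$, multiplication by $\sgn V$ does not increase any Schatten norm, and $\gS^{(N-1)q/(N-q)}\subseteq\gS^{\alpha_q}$, Theorem~\ref{thm:uniform-sobolev-schatten} (resp.\ Proposition~\ref{resbound2d}) gives $\norm{B_z}_{\gS^{\alpha_q}}\le C\,|z|^{-1+N/(2q)}\,\norm{V}_{L^q}$ on $\C\setminus[0,\infty)$. Sandwiching the two forms of the second resolvent identity between $v$ on the left and $u$ on the right yields $(I+B_z)(vR_V(z)u)=B_z=(vR_V(z)u)(I+B_z)$, so whenever $z$ lies in the resolvent set of $-\Delta+V$ (which contains $\C\setminus[0,\infty)$ except the discrete set of negative eigenvalues, where $R_V(z)$ itself is undefined) the operator $I+B_z$ is invertible with $(I+B_z)^{-1}=I-vR_V(z)u$, hence $\sqrt V R_V(z)\sqrt{|V|}=(I+B_z)^{-1}B_z\in\gS^{\alpha_q}$. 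For $z\notin\R$ this invertibility is automatic from self-adjointness.

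Second, I would treat analyticity and the boundary values on $(0,\infty)$. Operator-analyticity of $z\mapsto R_0(z)$ on $\C\setminus[0,\infty)$, together with the uniform bound above controlling difference quotients, shows $z\mapsto B_z$ is analytic into $\gS^{\alpha_q}$, hence so is $z\mapsto(I+B_z)^{-1}B_z$ where $I+B_z$ is invertible. To reach $(0,\infty)$ I would first establish that $z\mapsto B_z$ extends continuously to $(0,\infty)$ from each half-plane in $\gS^{\alpha_q}$-norm: for $u$ replaced by a Schwartz function this is the classical weighted limiting absorption principle for $-\Delta$, and the general $u\in L^{2q}$ case follows by Schwartz approximation, the bilinearity of the bound in Theorem~\ref{thm:uniform-sobolev-schatten}, and the fact that a uniform limit (uniform for $z$ near any given $\lambda_0\in(0,\infty)$, where $|z|$ is bounded above and below) of continuous functions is continuous. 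It remains to see that $I+B_{\lambda\pm i0}$ is invertible for every $\lambda>0$: a solution of $\psi+B_{\lambda\pm i0}\psi=0$ produces, via $\phi:=-R_0(\lambda\pm i0)u\psi$, an outgoing (resp.\ incoming) solution of $(-\Delta+V-\lambda)\phi=0$ with $v\phi=\psi$, and it is known that real potentials of this integrability have neither positive eigenvalues nor positive resonances (Koch--Tataru; Ionescu--Jerison; the Rellich--Vekua/Agmon--Kato--Kuroda uniqueness theory), forcing $\phi=0$ and hence $\psi=v\phi=0$. Continuity of operator inversion then yields the claimed continuous, side-dependent extension of $\sqrt V R_V(z)\sqrt{|V|}=(I+B_z)^{-1}B_z$ to $(0,\infty)$.

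Third, the quantitative estimate. If $q>N/2$ (in particular for every admissible $q$ when $N=2$), take $C_{N,q}:=1/(2C)$; then the hypothesis $|z|^{-1+N/(2q)}\norm{V}_{L^q}\le C_{N,q}$ forces $\norm{B_z}_{L^2\to L^2}\le\norm{B_z}_{\gS^{\alpha_q}}\le\tfrac12$, so $(I+B_z)^{-1}=\sum_{k\ge0}(-B_z)^k$ has operator norm $\le2$, whence $\norm{\sqrt V R_V(z)\sqrt{|V|}}_{\gS^{\alpha_q}}=\norm{(I+B_z)^{-1}B_z}_{\gS^{\alpha_q}}\le2\norm{B_z}_{\gS^{\alpha_q}}\le2C|z|^{-1+N/(2q)}\norm{V}_{L^q}$, which is \eqref{eq:lapperturbed}.

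The main obstacle is the remaining endpoint $q=N/2$, $N\ge3$, where $-1+N/(2q)=0$ and the bound on $B_z$ no longer improves as $|z|\to\infty$, so one cannot gain from large $|z|$ by brute force while also keeping $\norm{V}_{L^{N/2}}$ unrestricted. Here I would split $u=|V|^{1/2}=u_1+u_2$ with $u_1:=u\,\1(|V|\le M)\1(|x|\le\rho)$ bounded with compact support and $\norm{u_2}_{L^{N}}$ as small as desired (choosing $M,\rho$ large; recall $2q=N$), expand $B_z=(\sgn V)(u_1+u_2)R_0(z)(u_1+u_2)$ into four pieces, bound the three pieces containing a factor $u_2$ by $C\norm{u_2}_{L^N}(\norm{V}_{L^{N/2}}^{1/2}+\norm{u_2}_{L^N})$ uniformly in $z$ via Theorem~\ref{thm:uniform-sobolev-schatten} with $2q=N$, and bound the remaining piece $(\sgn V)u_1R_0(z)u_1$ in operator norm by applying Theorem~\ref{thm:uniform-sobolev-schatten} with an exponent $q'\in(N/2,(N+1)/2]$ (legitimate since $u_1\in L^{2q'}$), which gives $\le C_{q'}|z|^{-1+N/(2q')}\norm{u_1}_{L^{2q'}}^2\to0$ as $|z|\to\infty$. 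Thus for $\norm{u_2}_{L^N}$ small enough (a choice depending on $V$) and $|z|\ge C(V)$ large enough, $\norm{B_z}_{L^2\to L^2}<\tfrac12$, and the Neumann argument of the previous paragraph --- using the already-established uniform bound $\norm{B_z}_{\gS^{N-1}}\le C\norm{V}_{L^{N/2}}$ --- yields \eqref{eq:lapperturbed} with $|z|^{-1+N/(2q)}=1$. This splitting, together with the (nontrivial but standard) absence-of-eigenvalues/resonances input used for the boundary invertibility in the second step, are the two points requiring more than routine bookkeeping.
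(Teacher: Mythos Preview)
Your proposal is correct and follows essentially the same route as the paper: the identity $\sqrt V\,R_V(z)\,\sqrt{|V|}=(I+B_z)^{-1}B_z$ with $B_z=\sqrt V\,R_0(z)\,\sqrt{|V|}$, the free Schatten bound from Theorem~\ref{thm:uniform-sobolev-schatten} (and Proposition~\ref{resbound2d}), a density argument for the $\gS^{\alpha_q}$-continuity of $B_z$ up to $(0,\infty)$, Koch--Tataru to exclude $-1\in\sigma(B_{\lambda\pm i0})$, the Neumann series for the quantitative bound when $q>N/2$, and the same splitting $u=u_1+u_2$ to handle the endpoint $q=N/2$. The paper packages the free-operator facts in a separate Lemma~\ref{lem:extension-resolvent} and, to obtain the $L^2$-decay needed for Koch--Tataru, invokes the Ionescu--Schlag machinery (their space $X$ and decay lemma) rather than citing Agmon--Kato--Kuroda/Rellich--Vekua directly, but this is a matter of presentation rather than a different argument.
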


We expect that this theorem has applications in the context of Lieb--Thirring inequalities at positive density (see \cite{FraLewLieSei-13}) using some tools developed in \cite{FraPus-14}.

Before proving Theorem \ref{thm:LAP}, we need more information about the Birman--Schwinger operator $\sqrt{V}(-\Delta-z)^{-1}\sqrt{|V|}$.

\begin{lemma}\label{lem:extension-resolvent}
 Let $N\ge2$ and assume that $V\in L^q(\R^N,\R)$, where $q$ satisfies the assumptions of Theorem \ref{thm:LAP}. Let $\delta\subset(0,\ii)$ be a compact interval. Then, the family $A(z):=\sqrt{V}(-\Delta-z)^{-1}\sqrt{|V|}\in\gS^{\alpha_q}(L^2(\R^N))$ is analytic on the half-strips $S_\pm:=\{z\in\C,\,{\rm Re}\,z\in\mathring{\delta},\,\pm{\rm Im}\, z>0\}$. On each $S_\pm$, it is continuous up to $\bar{S_\pm}$ and we denote by $\sqrt{V}(-\Delta-\lambda\pm i0)^{-1}\sqrt{|V|}$ its extensions at $\lambda>0$. For all $z\in\bar{S_\pm}$, we have the estimate
 \begin{equation}\label{eq:BS-schatten-boundary}
    \norm{A(z)}_{\gS^{\alpha_q}}\le C|z|^{-1+\frac{N}{2q}}\norm{V}_{L^q},
 \end{equation}
 with $C$ as in \eqref{eq:kenig-ruiz-sogge-schatten} (and in particular independent of $\delta$). Finally, for all $z\in\bar{S_\pm}$, the operator $1+A(z)$ is invertible and the map $S_\pm\ni z\mapsto(1+A(z))^{-1}$ is an analytic family of bounded operators on $L^2(\R^N)$, which is continuous on $\bar{S_\pm}$. 
\end{lemma}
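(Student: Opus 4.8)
The plan is to establish, in order, the Schatten bound \eqref{eq:BS-schatten-boundary} on the open half-strips $S_\pm$, holomorphy of $A$ there, continuity up to $\bar{S_\pm}$ (which also extends the bound to the closure), and finally the invertibility of $1+A(z)$. For $z\in\C\setminus[0,\ii)$ one has $(N-1)q/(N-q)\le\alpha_q$, so Theorem \ref{thm:uniform-sobolev-schatten} gives \eqref{eq:BS-schatten-boundary} on $S_\pm$ at once, using $\norm{\cdot}_{\gS^{\alpha_q}}\le\norm{\cdot}_{\gS^{(N-1)q/(N-q)}}$ and $\norm{\sqrt V}_{L^{2q}}=\norm{\sqrt{|V|}}_{L^{2q}}=\norm{V}_{L^q}^{1/2}$; in the exceptional range $N=2$, $1<q<4/3$ (where $\alpha_q=2$) one invokes Proposition \ref{resbound2d} instead. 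Since $2\le\alpha_q\le N+1<\ii$, the space $\gS^{\alpha_q}(L^2(\R^N))$ is the dual of the separable space $\gS^{\alpha_q/(\alpha_q-1)}(L^2(\R^N))$; as $z\mapsto A(z)$ is then locally bounded in $\gS^{\alpha_q}$ by the bound just shown, and $z\mapsto\tr(FA(z))$ is holomorphic for every finite-rank $F$ (which reduces to the elementary holomorphy of $z\mapsto\pscal{\varphi,(-\Delta-z)^{-1}\psi}$ on $\C\setminus[0,\ii)$), it follows that $A$ is holomorphic on $S_\pm$ as a $\gS^{\alpha_q}$-valued map.

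For the continuity up to $\bar{S_\pm}$ I would argue by density in $V$. For $V$ bounded with compact support, $z\mapsto\sqrt{|V|}(-\Delta-z)^{-1}\sqrt{|V|}$ is continuous up to $\bar{S_\pm}$ in $\gS^{\alpha_q}$: writing the explicit (Bessel-function) kernel of $(-\Delta-z)^{-1}$ as the sum of finitely many terms $c_j\,z^j\,|x-y|^{-(N-2)+2j}$, each a scalar multiple of a fixed function of $x-y$, plus a remainder that is locally square-integrable and jointly continuous in $(x,y,z)$ up to $z\in(0,\ii)$, one sees that $\sqrt{|V|}(-\Delta-z)^{-1}\sqrt{|V|}$ equals a finite combination $\sum_j c_j z^j K_j$ of fixed $\gS^{\alpha_q}$-operators plus a term continuous in the Hilbert--Schmidt norm (dominated convergence), hence continuous in $\gS^{\alpha_q}$. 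For general $V\in L^q$, pick $V_n\to V$ in $L^q$ of this type; combining the bilinearity of \eqref{eq:kenig-ruiz-sogge-schatten} (upgraded to $\gS^{\alpha_q}$ as above) with the elementary inequality $|\sgn(a)\sqrt{|a|}-\sgn(b)\sqrt{|b|}|\le\sqrt2\,|a-b|^{1/2}$ gives
\[
\norm{A_n(z)-A(z)}_{\gS^{\alpha_q}}\le C\,|z|^{-1+\frac{N}{2q}}\,\norm{V_n-V}_{L^q}^{1/2}\bigl(\norm{V_n}_{L^q}^{1/2}+\norm{V}_{L^q}^{1/2}\bigr),
\]
so $A_n\to A$ in $\gS^{\alpha_q}$ uniformly on each bounded subset of $\bar{S_\pm}$ (there $|z|$ is bounded above and away from $0$, since $\mathring\delta\subset(0,\ii)$). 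A locally uniform limit of maps continuous up to the boundary is continuous up to the boundary, and \eqref{eq:BS-schatten-boundary} passes to the limit on $\bar{S_\pm}$.

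It remains to treat $1+A(z)$. For every $z\in\bar{S_\pm}$ the operator $A(z)\in\gS^{\alpha_q}$ is compact, so $1+A(z)$ is a compact perturbation of the identity and is invertible if and only if it is injective. Suppose $(1+A(z))\psi=0$ and set $v:=(-\Delta-z)^{-1}(\sqrt{|V|}\,\psi)$; then $\psi=-\sqrt V\,v$, whence $\sqrt{|V|}\,\psi=-Vv$ and $(-\Delta+V-z)v=0$. If $z\in S_\pm$ then $\im z\neq0$, $v\in L^2(\R^N)$ (by boundedness of $(-\Delta-z)^{-1}$ off the spectrum and interpolation of $L^r$-bounds on $\sqrt{|V|}\psi$), and since the self-adjoint operator $-\Delta+V$ has no non-real eigenvalue, $v=0$. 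If $z=\lambda\pm i0$ with $\lambda\in\mathring\delta$, then $v$ satisfies the outgoing (resp.\ incoming) radiation condition, and a bootstrap using the integrability of $V$ together with the uniform Sobolev bounds shows $v\in L^2(\R^N)$, so $v$ would be an eigenfunction of $-\Delta+V$ at the positive energy $\lambda$, which is impossible for real potentials in this class; hence $v=0$. In both cases $\psi=-\sqrt V\,v=0$, so $1+A(z)$ is invertible throughout $\bar{S_\pm}$. Finally, $z\mapsto A(z)$ being operator-norm holomorphic on $S_\pm$ and continuous on $\bar{S_\pm}$, analyticity of operator inversion yields that $z\mapsto(1+A(z))^{-1}$ is holomorphic on $S_\pm$ and continuous on $\bar{S_\pm}$.

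The main obstacle is the behaviour at the real axis: proving that $A(z)$ extends continuously to $(0,\ii)$ in the Schatten norm (the density step reduces this to nice $V$, but one must still control the $z$-dependence of the free resolvent kernel near the diagonal, which becomes more delicate as $N$ grows), and, above all, the invertibility of $1+A(\lambda\pm i0)$, which rests on the absence of positive eigenvalues and resonances of $-\Delta+V$ --- precisely the kind of conclusion for which uniform Sobolev estimates were introduced.
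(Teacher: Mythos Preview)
Your overall strategy matches the paper's: establish the Schatten bound from Theorem~\ref{thm:uniform-sobolev-schatten}, prove analyticity, pass to the boundary by a density argument in $V$, and reduce invertibility of $1+A(z)$ to absence of positive eigenvalues. Two technical choices differ harmlessly: you get holomorphy from weak holomorphy plus local boundedness, whereas the paper writes out a resolvent power series and bounds it via the Kato--Seiler--Simon inequality; and for the continuity of $A(z)$ with compactly supported bounded $V$ you sketch a Bessel-kernel decomposition, while the paper simply cites \cite[Prop.~VII.1.22]{Yafaev-10}.

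There is, however, a genuine gap at the point you yourself flag. For $z=\lambda\pm i0$ you assert that ``a bootstrap using the integrability of $V$ together with the uniform Sobolev bounds shows $v\in L^2(\R^N)$'' and that positive eigenvalues are ``impossible for real potentials in this class''. Neither claim is elementary. The paper fills the first by invoking the Ionescu--Schlag machinery: one checks (via \cite[Lem.~3.5]{FraPus-14}) that $V$ is admissible in the sense of \cite{IonSch-06}, and then their decay lemma \cite[Lem.~4.4]{IonSch-06} gives $(1+|x|^2)^M g\in X^*$ for all $M$, whence $g\in L^2$. The second claim is precisely the Koch--Tataru theorem \cite[Thm.~3]{KocTat-06} on absence of embedded eigenvalues for $-\Delta+V$ with $V\in L^q$, $q\le(N+1)/2$; it is a deep Carleman-estimate result, not a consequence of uniform Sobolev bounds alone. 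Your last paragraph correctly identifies this as the crux, but as written the proof is incomplete until those two ingredients are supplied. A smaller issue: for non-real $z$, the claim $v\in L^2$ does not follow from ``interpolation of $L^r$-bounds on $\sqrt{|V|}\psi$'' since $\sqrt{|V|}\psi$ is only in $L^{2q/(q+1)}$; the paper instead uses a direct imaginary-part argument (as in \cite[Lem.~4.6]{IonSch-06}) exploiting that $V$ is real-valued, which avoids ever needing $v\in L^2$.
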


The proof of Lemma \ref{lem:extension-resolvent} relies on a deep theorem of Koch and Tataru \cite{KocTat-06} about the absence of embedded eigenvalues for Schr\"odinger operators.

 \begin{proof}[Proof of Lemma \ref{lem:extension-resolvent}] 
  The family $\C\setminus[0,\ii)\ni z\mapsto\sqrt{V}(-\Delta-z)^{-1}\sqrt{|V|}=A(z)\in\gS^{\alpha_q}$ is analytic: indeed, by the resolvent formula we have for any $z,z_0\in\C\setminus[0,\ii)$,
  \begin{multline*}
    \sqrt{V}(-\Delta-z)^{-1}\sqrt{|V|}-\sum_{n=0}^N(z-z_0)^n\sqrt{V}(-\Delta-z_0)^{-n-1}\sqrt{|V|}\\
    =\sqrt{V}(-\Delta-z)^{-1}(z-z_0)^{N+1}(-\Delta-z_0)^{-N-1}\sqrt{|V|}.
  \end{multline*}
  The right side of this equality goes to zero in $\gS^{\alpha_q}$ as $N\to\ii$ if $|z-z_0|$ is small enough by the Kato--Seiler--Simon inequality \cite[Thm. 4.1]{Simon-79} and the constraint $q\ge N/2$,
  \begin{multline*}
    \norm{\sqrt{V}(-\Delta-z)^{-1}(-\Delta-z_0)^{-N-1}\sqrt{|V|}}_{\gS^{\alpha_q}}\\
    \le\norm{\sqrt{|V|}(-\Delta-z_0)^{-1}}_{\gS^{2\alpha_q}}^2\norm{(-\Delta-z_0)^{-1}}^{N-1}\norm{(-\Delta-z)^{-1}}\le C^N\norm{V}_{L^q}^2.
  \end{multline*}
  The same estimate shows that the entire series we found has a nonzero radius of convergence, showing the desired analyticity. Next, let us notice that we can use the results of \cite{IonSch-06}, since the argument given in the proof of Lemma 3.5 in \cite{FraPus-14} shows that $V$ is an admissible perturbation in the sense of \cite{IonSch-06} and that $L^{2q/(q+1)}(\R^N)\subset X$, where $X$ is the Banach space defined in the introduction of \cite{IonSch-06}. Using \cite[Lemma 4.1 b)]{IonSch-06}, for each $\lambda>0$ there exists an operator $(-\Delta-\lambda\pm i0)^{-1}$ bounded from $L^{2q/(q+1)}$ to $L^{2q/(q-1)}$ such that $z\mapsto A(z)$ can be extended as a continuous family on the strips $\bar{S_\pm}$, for the weak topology of operators. Let us show that this family is actually continuous for the Schatten topology $\gS^{\alpha_q}(L^2(\R^N))$. To do so, let $z\in\bar{S_\pm}$ and $(z_n)\subset S_\pm$ such that $z_n\to z$ as $n\to\ii$. We show that $(A(z_n))$ is a Cauchy sequence for the Schatten norm, which then implies the Schatten norm continuity of $A(z)$ up to the real axis. Thus, let $\epsilon>0$. Let $W_1,\tilde{W_1}$ be bounded, compactly supported measurable functions such that we have the decompositions
  $$\sqrt{V}=W_1+W_2,\quad\sqrt{|V|}=\tilde{W_1}+\tilde{W_2},\quad\norm{W_2}_{L^{q/2}}+\norm{\tilde{W_2}}_{L^{q/2}}\le\epsilon.$$
  Using \eqref{eq:kenig-ruiz-sogge-schatten}, we may estimate
  $$\norm{A(z_n)-A(z_m)}_{\gS^{\alpha_q}}\le\norm{W_1((-\Delta-z_n)^{-1}-(-\Delta-z_m)^{-1})\tilde{W_1}}_{\gS^{\alpha_q}}+C\epsilon.$$
  By \cite[Prop. VII.1.22]{Yafaev-10}, the family $z\mapsto W_1(-\Delta-z)^{-1}\tilde{W_1}$ is continuous on $\bar{S_\pm}$ for the $\gS^{\alpha_q}$-topology, and hence for $n,m$ large enough, we have 
  $$\norm{W_1((-\Delta-z_n)^{-1}-(-\Delta-z_m)^{-1})\tilde{W_1}}_{\gS^{\alpha_q}}\le\epsilon.$$
  This shows that $(A(z_n))$ is a Cauchy sequence for the $\gS^{\alpha_q}$-topology, and hence $z\mapsto A(z)\in\gS^{\alpha_q}$ is continuous up to the boundary. We conclude, in particular, that $\sqrt{V}(-\Delta-\lambda\pm i0)^{-1}\sqrt{|V|}$ belongs to $\gS^{\alpha_q}$ for all $\lambda>0$. This also shows that the estimate \eqref{eq:BS-schatten-boundary} carries over to the real axis. We then apply analytic Fredholm theory \cite[Thm. I.4.2 \& I.4.3]{Yafaev-92} to the family $(A(z))$ in the strips $\bar{S_\pm}$ to infer that $z\mapsto(1+A(z))^{-1}$ is a meromorphic family of bounded operators on $S_\pm$ with poles at the points $z$ where $-1\in\sigma(A(z))$. Furthermore, this family is continuous up to the real axis, except at the points $\lambda\in\delta$ such that $-1\in\sigma(A(\lambda))$.
  
  It thus only remains to show that $-1\notin\sigma(A(z))$ for all $z\in\bar{S_\pm}$. When ${\rm Im}\,z\neq0$, this follows from a simple argument similar to the beginning of the proof of Lemma 4.6 in \cite{IonSch-06} based on the fact that $V$ is real-valued. We now consider real positive $z$. Assume that there are $\lambda>0$, a sign $\pm$, and an $f\in L^2(\R^N)$ satisfying 
  $$\sqrt{V}R_0(\lambda)\sqrt{|V|}f=-f,$$
  where we used the notation $R_0(\lambda)=(-\Delta-\lambda\pm i0)^{-1}$. We need to show that $f\equiv 0$. Let us define $g:=R_0(\lambda)\sqrt{|V|}f$. Since $f\in L^2$ and $V\in L^q$, H\"older's inequality implies that $\sqrt{|V|}f\in L^{2q/(q+1)}(\R^N)$ and therefore, by the boundedness of $R_0(\lambda)$ from $L^{2q/(q+1)}$ to $L^{2q/(q-1)}$ (this is the Kenig--Ruiz--Sogge bound mentioned after Theorem \ref{thm:uniform-sobolev-schatten}), we have $g\in L^{2q/(q-1)}(\R^N)$. Moreover, again by H\"older's inequality, $Vg\in L^{2q/(q+1)}(\R^N)$. We note that the equation for $f$ implies that
\begin{equation}\label{eq:eq-on-g}
    R_0(\lambda)Vg=-g.
 \end{equation}  
  Using this equation together with the integrability properties of $g$ and $Vg$, we can show that $g$ satisfies the Schrödinger equation $(-\Delta+V)g=zg$ in the sense of distributions on $\R^N$. Since $g\in L^{2q/(q-1)}(\R^N)$ and $Vg\in L^{2q/(q+1)}(\R^N)$, we deduce that $g\in W^{2,2q/(q+1)}_{\text{loc}}(\R^N)\subset H^1_\text{loc}(\R^N)$. If we can show that $g\in L^2(\R^N)$ (or $|x|^{-1/2+\epsilon} g\in L^2(\R^N)$ for some $\epsilon>0$), then \cite[Thm. 3]{KocTat-06} will imply that $g\equiv 0$. Therefore also $f= -\sqrt{V}R_0(\lambda)\sqrt{|V|}f = -\sqrt{V} g\equiv 0$, and $-1\notin\sigma(A(\lambda))$.
  
Thus, it remains to prove that $g\in L^2(\R^N)$. As mentioned before, $Vg\in L^{2q/(q+1)}(\R^N)$, and therefore $Vg\in X$, the space introduced in \cite{IonSch-06}. By \cite[Lemma 4.1 a,b)]{IonSch-06}, the resolvent $R_0(\lambda)$ is a bounded operator from $X$ to $X^*$, showing that $g=-R_0(\lambda)Vg\in X^*$. Using now \eqref{eq:eq-on-g} and \cite[Lemma 4.4]{IonSch-06}, we obtain the decay estimate
 $$\norm{(1+|x|^2)^Mg}_{X^*}<\ii$$
 for all $M\ge0$. Since $X^*\subset L^{2q/(q-1)}(\R^N)$ and writing for $M$ large enough
 $$g=(1+|x|^2)^{-M}\times(1+|x|^2)^Mg,$$
 we deduce from Hölder's inequality that $g\in L^2(\R^N)$. This completes the proof.
\end{proof}

We are now ready to prove Theorem \ref{thm:LAP}.

\begin{proof}[Proof of Theorem \ref{thm:LAP}]
  We make use of the identity
  \begin{equation}\label{eq:decomp-resolvent}
    \sqrt{V}(-\Delta+V-z)^{-1}\sqrt{|V|}=\frac{1}{1+\sqrt{V}(-\Delta-z)^{-1}\sqrt{|V|}}\sqrt{V}(-\Delta-z)^{-1}\sqrt{|V|}.
  \end{equation}
  By Lemma \ref{lem:extension-resolvent}, we know that the maps $$z\mapsto\frac{1}{1+\sqrt{V}(-\Delta-z)^{-1}\sqrt{|V|}}\in\B(L^2),\quad z\mapsto\sqrt{V}(-\Delta-z)^{-1}\sqrt{|V|}\in\gS^{\alpha_q}$$
  are analytic on $\C\setminus[0,\ii)$ and extend continuously to $(0,\ii)$ with possibly different boundary values from above and from below. We are thus left to prove \eqref{eq:lapperturbed}. First assume that $q>N/2$. Then it follows from Theorem \ref{thm:uniform-sobolev-schatten} that
  $$
  \left\| \sqrt{V} (-\Delta-z)^{-1} \sqrt{|V|} \right\|_{L^2(\R^N)\to L^2(\R^N)} \le \frac 12
  $$
  provided $C|z|^{-1+N/{2q}}\|V\|_{L^q(\R^N)} \le 1/2$. Thus, for such $z$,
  $$
  \left\| \left( 1+ \sqrt{V} (-\Delta-z)^{-1} \sqrt{|V|} \right)^{-1} \right\|_{L^2(\R^N)\to L^2(\R^N)} \le  2 \,.
  $$
  The claimed bound then follows from the identity \eqref{eq:decomp-resolvent} and the inequality \eqref{eq:BS-schatten-boundary}.

  Now assume that $q=N/2$ and $N\geq 3$. In this case we write $\sqrt V = W_1 + W_2$ and $\sqrt{|V|}=\tilde W_1 + \tilde W_2$ as in the proof of Lemma \ref{lem:extension-resolvent}. Then using again Theorem \ref{thm:uniform-sobolev-schatten}
  $$
  \left\| \sqrt{V} (-\Delta-z)^{-1} \sqrt{|V|} \right\|_{L^2(\R^N)\to L^2(\R^N)} \le \left\| W_1 (-\Delta-z)^{-1} \tilde W_1 \right\|_{L^2(\R^N)\to L^2(\R^N)} + C\epsilon
  $$
  for all $z\in\C\setminus[0,\infty)$. Since
  $$
  \left\| W_1 (-\Delta-z)^{-1} \tilde W_1 \right\|_{L^2(\R^N)\to L^2(\R^N)} \to 0
  $$
  as $|z|\to\infty$, as we have just shown (note that $W_1,\tilde W_1 \in L^q(\R^N)$ with $q>N/2$), we can argue as before and obtain the result in this case as well. This finishes the proof of Theorem \ref{thm:LAP}.
 \end{proof}

\section{Application of Strichartz estimates: global well-posedness for the Hartree equation in Schatten spaces}\label{sec:hartree}

We illustrate the usefulness of the Strichartz estimates obtained in Theorem \ref{thm:strichartz-orthonormal-general} by showing well-posedness results in Schatten spaces for the non-linear Hartree equation. The main point here is that we can consider a system of \emph{infinitely many} equations and that we do not even need a trace class assumption. This is of crucial importance when studying the dynamics of quantum gases at positive density \cite{LewSab-13a,LewSab-13b}. Using our improved set of Strichartz exponents from Theorem \ref{thm:strichartz-orthonormal-general} we can extend the previous work of one of us \cite[Chap. 4, App. A]{Sabin-thesis}. Our exposition here is somehow sketchy. Details as well as similar consequences of Strichartz inequalities for the fractional Laplacian and for the pseudo-relativistic operator will be addressed in a forthcoming work of the second author.

\begin{theorem}\label{thm:gwp-hartree}
 Let $d\ge1$, $1\le q<1+2/(d-1)$, $p\ge1$ such that $2/p+d/q=d$ and $w\in L^{q'}_x(\R^d)$. Then, for any $\gamma_0\in\gS^{2q/(q+1)}$, there exists a unique $\gamma\in C^0_t(\R,\gS^{2q/(q+1)})$ satisfying $\rho_\gamma\in L^p_{{\rm loc},t}(\R,L^q_x(\R^d))$ and
 $$\begin{cases}
    i\partial_t\gamma = [-\Delta+w*\rho_\gamma,\gamma],\\
    \gamma_{|t=0} = \gamma_0.
   \end{cases}
 $$
\end{theorem}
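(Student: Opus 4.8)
The plan is to prove global well-posedness by a fixed point argument in the spirit of \cite{LewSab-13a,Sabin-thesis}, using the Strichartz estimate \eqref{eq:strichartz-rho-general} from Theorem \ref{thm:strichartz-orthonormal-general} as the principal tool and its dual form. First I would reformulate the equation in Duhamel form: writing $U(t)=e^{it\Delta}$, a solution should satisfy
\[
  \gamma(t) = U(t)\gamma_0 U(t)^* - i\int_0^t U(t-s)\left[ w*\rho_{\gamma(s)}, \gamma(s)\right] U(t-s)^* \, ds,
\]
and the natural space in which to run the contraction is $X_T := C^0_t([-T,T],\gS^{2q/(q+1)})$ with the auxiliary control $\rho_\gamma \in L^p_t([-T,T],L^q_x(\R^d))$. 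The two estimates I need are: (a) the homogeneous and inhomogeneous Strichartz bounds for densities, i.e. $\|\rho_{U(\cdot)\gamma_0 U(\cdot)^*}\|_{L^p_t L^q_x} \le C\|\gamma_0\|_{\gS^{2q/(q+1)}}$ from \eqref{eq:strichartz-rho-general}, together with its retarded/dual counterpart $\|\rho_{\int_0^t U(t-s) R(s) U(t-s)^* ds}\|_{L^p_t L^q_x} \le C \| R \|_{L^{p'}_t \gS^{(2q/(q+1))'}}$, obtained by combining \eqref{eq:strichartz-rho-general} with its adjoint via the usual $TT^*$ and Christ--Kiselev argument (the non-endpoint hypothesis $q<1+2/(d-1)$, equivalently $p>1$ [or $p=1$ only in a trivial case], guarantees the Christ--Kiselev lemma applies); and (b) the boundedness of the Schrödinger group on $\gS^{2q/(q+1)}$, which is trivial since $U(t)$ is unitary on $L^2$.

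Next I would set up the nonlinear estimates. Given the potential $V(s):=w*\rho_{\gamma(s)}$, Young's inequality gives $\|V(s)\|_{L^{q'}_x} \le \|w\|_{L^{q'}_x}\|\rho_{\gamma(s)}\|_{L^1}$? — no, more carefully: since $w\in L^{q'}_x$ and $\rho_{\gamma(s)} \in L^q_x$, we get $\|V(s)\|_{L^\infty_x} \le \|w\|_{L^{q'}_x}\|\rho_{\gamma(s)}\|_{L^q_x}$ by Hölder. Then the commutator term is estimated in $\gS^{(2q/(q+1))'}$ by
\[
  \left\| \left[V(s),\gamma(s)\right] \right\|_{\gS^{(2q/(q+1))'}} \le 2 \|V(s)\|_{L^\infty_x} \|\gamma(s)\|_{\gS^{(2q/(q+1))'}},
\]
but here one must be careful: the Schatten index on the right should match, and in fact the cleanest route (as in \cite{LewSab-13a}) is to keep the commutator in $\gS^{2q/(q+1)}$ and use that multiplication by $V(s)\in L^\infty_x$ is bounded on every Schatten class with norm $\|V(s)\|_{L^\infty_x}$. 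Integrating in time and using Hölder in $t$ with the relation $2/p + d/q = d$ (which is exactly what makes the time exponents balance) yields the contraction estimate
\[
  \|\Phi(\gamma_1) - \Phi(\gamma_2)\|_{X_T} \le C\, T^\theta \|w\|_{L^{q'}_x}\left( \|\gamma_1\|_{X_T} + \|\gamma_2\|_{X_T}\right) \|\gamma_1 - \gamma_2\|_{X_T}
\]
for some $\theta>0$, where $\Phi$ is the Duhamel map; combined with the corresponding bound on $\rho_{\Phi(\gamma)}$ via the retarded Strichartz estimate, $\Phi$ is a contraction on a ball of $X_T$ for $T$ small. Iterating gives a maximal solution on $(-T_{\max}, T_{\max})$.

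Finally, to get global existence I would use the conservation law: since the flow is of the form $\gamma(t) = \mathcal U(t)\gamma_0 \mathcal U(t)^*$ for the unitary two-parameter family $\mathcal U(t)$ generated by the self-adjoint (time-dependent) Hamiltonian $-\Delta + V(t)$ with $V(t)$ real-valued, \emph{all} Schatten norms of $\gamma(t)$ are conserved: $\|\gamma(t)\|_{\gS^{2q/(q+1)}} = \|\gamma_0\|_{\gS^{2q/(q+1)}}$ for all $t$. Hence the $\gS^{2q/(q+1)}$-norm cannot blow up, and since the local existence time $T$ depends only on this norm and on $\|w\|_{L^{q'}_x}$, the solution extends to all of $\R$; uniqueness follows from the contraction estimate applied on small intervals and a continuation argument. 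I expect the main obstacle to be the careful justification of the retarded/dual Strichartz estimate for densities at the level of Schatten-valued (rather than scalar) data — in particular verifying that the Christ--Kiselev lemma can be applied in this operator-valued setting and that the duality $(\gS^{\alpha})^* = \gS^{\alpha'}$ is used correctly when passing from \eqref{eq:strichartz-rho-general} to its adjoint — together with making sure the bilinear estimate on $[V,\gamma]$ closes in precisely the Schatten class $\gS^{2q/(q+1)}$ rather than a neighboring one; the rest is a standard Picard iteration.
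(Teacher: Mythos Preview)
Your proposal is correct and follows the same route as the paper: Duhamel formulation, contraction in $C^0_t\gS^{2q/(q+1)}$ coupled with Strichartz control of $\rho_\gamma$, the commutator handled via $\|w*\rho\|_{L^\infty_x}\le\|w\|_{L^{q'}}\|\rho\|_{L^q}$, and globalization by conservation of the Schatten norm under the unitary propagator (the paper cites Yajima \cite{Yajima-87} to justify the existence of the latter). The one adjustment concerns the point you yourself flagged: the retarded Strichartz estimate the paper actually uses (Theorem~\ref{thm:strichartz}, derived from Theorem~\ref{thm:schatten-strichartz-mixed} as in \cite{FraLewLieSei-14}) carries $\bigl\|\int_\R e^{-is\Delta}|R(s)|e^{is\Delta}\,ds\bigr\|_{\gS^{2q/(q+1)}}$ on the right --- effectively an $L^1_t\gS^{2q/(q+1)}$ norm of $R$ --- rather than the $L^{p'}_t\gS^{(2q/(q+1))'}$ norm you conjectured by analogy with the scalar case; this is exactly the norm your computation uses once you keep $[V,\gamma]$ in $\gS^{2q/(q+1)}$ and H\"older in $t$ to produce the factor $T^{1/p'}$, so the argument closes without further difficulty.
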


This result is a consequence of homogeneous and inhomogeneous Strichartz estimates. The homogeneous part is the content of Theorem \ref{thm:strichartz-orthonormal-general}, while the inhomogeneous one can be deduced from Theorem \ref{thm:schatten-strichartz-mixed} using the same method as the proof of Corollary 1 in \cite{FraLewLieSei-14}:

\begin{theorem}\label{thm:strichartz}
 Let $d\ge1$, $1\le q<1+2/(d-1)$, $p\ge1$ such that $2/p+d/q=d$, and $\gamma_0\in\gS^{2q/(q+1)}$. Let $\gamma=\gamma(t)$ be the solution to the equation
 $$\begin{cases}
    i\partial_t\gamma = [-\Delta,\gamma] + R(t),\\
    \gamma_{|t=0} = \gamma_0.
   \end{cases}$$
 Then, the inequality
 $$\norm{\rho_{\gamma(t)}}_{L^p_t(\R,L^q_x(\R^d))}\le C_{{\rm Stri}}\left(\norm{\gamma_0}_{\gS^{2q/(q+1)}}+\norm{\int_{\R}e^{-is\Delta}|R(s)|e^{is\Delta}\,ds}_{\gS^{2q/(q+1)}}\right)$$
 holds for some constant $C_{{\rm Stri}}>0$ independent of $\gamma_0$ and $R$. 
\end{theorem}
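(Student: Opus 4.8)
The plan is to run a Duhamel argument, exactly as in the proof of Corollary~1 of \cite{FraLewLieSei-14}. Writing the mild solution as
$$
\gamma(t) = e^{it\Delta}\gamma_0 e^{-it\Delta} - i\int_0^t e^{i(t-s)\Delta} R(s) e^{-i(t-s)\Delta}\,\d s =: e^{it\Delta}\gamma_0 e^{-it\Delta} - i\,Q(t),
$$
taking densities (which depend linearly on the operator) and using the triangle inequality in $L^p_t L^q_x$, we get
$$
\norm{\rho_{\gamma(\cdot)}}_{L^p_tL^q_x} \le \norm{\rho_{e^{it\Delta}\gamma_0 e^{-it\Delta}}}_{L^p_tL^q_x} + \norm{\rho_{Q(\cdot)}}_{L^p_tL^q_x}.
$$
The first term is bounded by $C\norm{\gamma_0}_{\gS^{2q/(q+1)}}$ directly from the operator form \eqref{eq:strichartz-rho-general} of Theorem~\ref{thm:strichartz-orthonormal-general}, valid in exactly the stated range $2/p+d/q=d$, $q<1+2/(d-1)$. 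Hence everything reduces to the inhomogeneous bound $\norm{\rho_{Q(\cdot)}}_{L^p_tL^q_x}\le C\norm{\int_\R e^{-is\Delta}|R(s)| e^{is\Delta}\,\d s}_{\gS^{2q/(q+1)}}$.

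The heart of the matter is a positivity observation that turns this inhomogeneous estimate into a direct consequence of the homogeneous one, bypassing the Christ--Kiselev lemma (which would be awkward here because $p$ can be $<2$ in our range). First I would reduce to the case $R(s)\ge 0$: split $R(s)$ into its Hermitian and anti-Hermitian parts and each of those as a difference $R_\pm(s)\ge 0$ of its positive and negative parts; by linearity of $Q$ in $R$ and of $\rho$, one gets a pointwise bound $|\rho_{Q(t)}(x)|\le\sum\rho_{Q_\pm(t)}(x)$, and since $0\le\int_\R e^{-is\Delta}R_\pm(s)e^{is\Delta}\,\d s\le\int_\R e^{-is\Delta}|R(s)|e^{is\Delta}\,\d s$ (using $R_\pm(s)\le|R(s)|$, which holds whenever $R(s)$ is normal, in particular in the application to Theorem~\ref{thm:gwp-hartree}, where $R(t)=[w*\rho_\gamma,\gamma]$ is anti-self-adjoint), operator monotonicity of the $\gS^{2q/(q+1)}$-norm on positive operators transfers the bound. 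Now for $R(s)\ge 0$, each $e^{i(t-s)\Delta}R(s)e^{-i(t-s)\Delta}$ is a nonnegative operator with nonnegative density, so that
$$
0\le \rho_{Q(t)}(x) = \int_0^t \rho_{e^{i(t-s)\Delta}R(s)e^{-i(t-s)\Delta}}(x)\,\d s \le \int_\R \rho_{e^{i(t-s)\Delta}R(s)e^{-i(t-s)\Delta}}(x)\,\d s = \rho_{e^{it\Delta}\Gamma e^{-it\Delta}}(x),
$$
where $\Gamma:=\int_\R e^{-is\Delta}R(s)e^{is\Delta}\,\d s\ge 0$ is a single, $t$-independent operator lying in $\gS^{2q/(q+1)}$ by hypothesis. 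Applying \eqref{eq:strichartz-rho-general} to $\Gamma$ and using $R(s)=|R(s)|$ gives
$$
\norm{\rho_{Q(\cdot)}}_{L^p_tL^q_x} \le \norm{\rho_{e^{it\Delta}\Gamma e^{-it\Delta}}}_{L^p_tL^q_x} \le C\norm{\Gamma}_{\gS^{2q/(q+1)}} = C\norm{\int_\R e^{-is\Delta}|R(s)|e^{is\Delta}\,\d s}_{\gS^{2q/(q+1)}},
$$
which is the claim.

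I do not expect a genuine obstacle here; the only real work is bookkeeping, and that is where I would be careful: that $Q(t)\in\gS^{2q/(q+1)}$ (immediate from Duhamel, since $\gamma(t)$ and $e^{it\Delta}\gamma_0 e^{-it\Delta}$ are), that $s\mapsto e^{-is\Delta}|R(s)|e^{is\Delta}$ is Bochner integrable into $\gS^{2q/(q+1)}$, and that the density commutes with the $s$-integral (legitimate for nonnegative operators, e.g.\ by testing against nonnegative $V\in L^{(q/2)'}_x$ and using Tonelli, cf.\ Remark~\ref{rk:density}). The reduction to $R(s)\ge 0$ is completely clean when $R(s)$ is a phase times a self-adjoint operator, which covers the application to Theorem~\ref{thm:gwp-hartree}; for genuinely non-normal $R(s)$ the right-hand side should be read with $|R(s)|$ replaced by a symmetrized quantity also involving $|R(s)^*|$, but this plays no role in what follows. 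Alternatively one could follow \cite{FraLewLieSei-14} and deduce the inhomogeneous bound from the mixed-norm Schatten estimate of Theorem~\ref{thm:schatten-strichartz-mixed} via a $TT^*$/duality argument, but the positivity route above is shorter and works uniformly over the whole range of $p$.
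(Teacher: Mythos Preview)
Your proposal is correct and is essentially the approach the paper indicates: the paper does not write out a proof but simply says the inhomogeneous bound ``can be deduced from Theorem~\ref{thm:schatten-strichartz-mixed} using the same method as the proof of Corollary~1 in \cite{FraLewLieSei-14}'', and that method \emph{is} precisely the positivity trick you describe---reduce to $R(s)\ge 0$, dominate the truncated Duhamel integral by the full-line one, rewrite the latter as $e^{it\Delta}\Gamma e^{-it\Delta}$ with $\Gamma=\int_\R e^{-is\Delta}R(s)e^{is\Delta}\,\d s$, and invoke the homogeneous estimate \eqref{eq:strichartz-rho-general}. Your remark at the end that ``alternatively one could follow \cite{FraLewLieSei-14} via a $TT^*$/duality argument'' is therefore slightly misleading: the positivity route you give \emph{is} the Corollary~1 argument, not an alternative to it; the only new input relative to \cite{FraLewLieSei-14} is that the homogeneous estimate now holds in the larger range $q<1+2/(d-1)$, which is exactly what Theorem~\ref{thm:schatten-strichartz-mixed} (equivalently Theorem~\ref{thm:strichartz-orthonormal-general}) supplies.
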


\begin{proof}[Proof of Theorem \ref{thm:gwp-hartree}]
 We use a standard fixed point method on the Duhamel formulation of the Hartree equation, see \cite{LewSab-13a} for details. Let $R>0$ such that $\norm{\gamma_0}_{\gS^{2q/(q+1)}}\le R$, and let $T=T(R)>0$ to be chosen later on. We define a map 
 $$\Phi(\gamma,\rho)=\left(\Phi_1(\gamma,\rho),\rho[\Phi_1(\gamma,\rho)]\right),$$
 and we show that, for a suitable $T$, $\Phi$ is a contraction on the space 
 \begin{multline*}
    X:=\left\{(\gamma,\rho)\in C^0_t([0,T],\gS^{2q/(q+1)})\times L^p_t([0,T],L^q_x(\R^d)),\right.\\
    \left.\norm{\gamma}_{C^0\gS^{2q/(q+1)}}+\norm{\rho}_{L^p_tL^q_x}\le 4\max(1,C_{\text{Stri}})R\right\}.
 \end{multline*}
 The map $\Phi_1$ is defined as 
 $$\Phi_1(\gamma,\rho)(t)=e^{it\Delta}\gamma_0e^{-it\Delta}-i\int_0^te^{i(t-s)\Delta}[w*\rho(s),\gamma(s)]e^{i(s-t)\Delta}\,ds.$$
 For all $(\gamma,\rho)\in X$, we have
 \begin{align*}
    \norm{\Phi_1(\gamma,\rho)}_{C^0_t\gS^{2q/(q+1)}} &\le R + 2\int_0^T\norm{w*\rho(s)}_{L^\ii_x}\norm{\gamma(s)}_{\gS^{2q/(q+1)}}\,ds\\
    &\le R + 2T^{1/p'}\norm{w}_{L^{q'}_x}\norm{\rho}_{L^p_tL^q_x}\norm{\gamma}_{C^0_t\gS^{2q/(q+1)}}\\
    &\le R + 8T^{1/p'}\norm{w}_{L^{q'}_x}\max(1,C_{\text{Stri}}^2)R^2.
 \end{align*}
 By Theorem \ref{thm:strichartz}, we also have 
 $$
   \norm{\rho[\Phi_1(\gamma,\rho)]}_{L^p_tL^q_x}\le C_\text{Stri}R+8C_\text{Stri}T^{1/p'}\norm{w}_{L^{q'}_x}\max(1,C_{\text{Stri}}^2)R^2.
 $$
Hence, for $T=T(R)>0$ small enough, $\Phi$ maps $X$ to itself. A similar argument shows that $\Phi$ is a contraction on $X$, and thus has a unique fixed point on $X$ which is a solution to the Hartree equation on $[0,T]$. We may extend it to a maximal solution on some interval $[0,T_{\text{max}})$, and we have the blow-up criterion given by the local theory
$$T_{\text{max}}<\ii\Longrightarrow \norm{\gamma(t)}_{\gS^{2q/(q+1)}}\xrightarrow[t\to T_{\text{max}}]{}+\ii.$$
By the result of Yajima \cite{Yajima-87}, knowing that the potential $w*\rho_\gamma$ belongs to  $L^p_{\text{loc},t}L^\ii_x$ implies that there exists a unitary operator $U(t)$ on $L^2_x(\R^d)$ such that $\gamma(t)=U(t)\gamma_0U(t)^*$ for all $t$. In particular, the $\gS^{2q/(q+1)}$ norm of $\gamma(t)$ is a conserved quantity and cannot blow-up, thus leading to global solutions.
\end{proof}

\section{Application of uniform Sobolev estimates: eigenvalues of Schrödinger operators with complex-valued potentials}\label{sec:LT}

As an application of the uniform Sobolev estimates obtained in Theorem \ref{thm:uniform-sobolev-schatten}, we prove Lieb--Thirring-type inequalities for the discrete spectrum of a Schrödinger operator $-\Delta+V$ where $V$ is a complex potential belonging to $L^q(\R^N,\C)$ and $q$ satisfies the assumptions of Theorem \ref{thm:LAP}. It was noticed in \cite{Frank-11} that the uniform Sobolev estimates of Kenig, Ruiz, and Sogge could be used to control the size of eigenvalues of non self-adjoint Schrödinger operators. More precisely, if $\lambda\in\C\setminus[0,\ii)$ is an eigenvalue of $-\Delta+V$, then
\begin{equation}\label{eq:complex-frank}
  |\lambda|^\gamma\le D_{\gamma,N}\int_{\R^N}|V(x)|^{\gamma+N/2}\,dx,
\end{equation}
for some constant $D_{\gamma,N}>0$, for all $0<\gamma\le1/2$ and for all $N\ge2$; see \cite{AbrAslDav-01} for the earlier result for $\gamma=1/2$ and $N=1$. Since the Kenig--Ruiz--Sogge bound implies a control on the size of a single eigenvalue, it is natural to expect that the Schatten bounds of Theorem \ref{thm:uniform-sobolev-schatten} would provide bounds on sums of such eigenvalues. This is the content of the following result.

\begin{theorem}\label{thm:LT-complex}
  Let $N\ge 1$ and assume that $V\in L^q(\R^N,\C)$ with 
 $$\begin{cases}
 q=1 & \text{if } N=1,\\
  1<q\le 3/2 & \text{if } N=2,\\
  \frac{N}{2}\le q\le \frac{N+1}{2} & \text{if } N\ge3.
 \end{cases}$$ 
 Denote by $Z$ the (discrete) set of eigenvalues of $-\Delta+V$ in $\C\setminus[0,\ii)$, and for $\lambda\in Z$ let $m_\lambda$ be the corresponding \emph{algebraic} multiplicity. Then, we have the following bounds:
 \begin{itemize}
  \item If $N/2<q\le (N+1)/2$, then
   \begin{equation}\label{eq:LT-1}
      \sum_{\lambda\in Z}m_\lambda \,\frac{d(\lambda,[0,\ii))}{|\lambda|^{(1-\epsilon)/2}} \le A_{N,q,\epsilon}\, \|V\|_{L^q(\R^N,\C)}^{(1+\epsilon)q/(2q-N)} \,,
   \end{equation}
   where 
   $$\begin{cases}
   \epsilon>1 & \text{if}\ N=1,\\
   \epsilon\geq 0 & \text{if}\ N\ge2\ \text{and}\ N/2<q<N^2/(2N-1),\\
   \epsilon>\frac{(2N-1)q-N^2}{N-q} & \text{if}\ N\ge2\ \text{and}\ N^2/(2N-1)\le q\le (N+1)/2.
  \end{cases}$$
  \item If $q=N/2$ and $N\geq 3$, then
  \begin{equation}\label{eq:LT-3}
    \sum_{\lambda\in Z}m_\lambda\,\frac{{\rm Im}\,\sqrt{\lambda}}{1+|\lambda|}<\ii \,,
  \end{equation}
  where the branch of the square root is chosen to have positive imaginary part. 
 \end{itemize}
Here, $d(\lambda,[0,\ii))$ denotes the distance of $\lambda$ to $[0,\ii)$. The constants $A_{N,q,\epsilon}$ are independent of $V$.
\end{theorem}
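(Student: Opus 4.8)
The plan is to reduce the eigenvalue count to a count of zeros of a regularized perturbation determinant via the Birman--Schwinger principle, to bound that determinant from above using the uniform Sobolev bounds of Theorem~\ref{thm:uniform-sobolev-schatten} (supplemented by Proposition~\ref{resbound2d} and Remark~\ref{keruso1d} in the low cases), and then to invoke a quantitative Blaschke-type theorem for holomorphic functions on the disc, followed by a scaling optimization to extract the sharp exponents.

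\emph{Determinant setup.} Set $K(\lambda):=\sqrt V(-\Delta-\lambda)^{-1}\sqrt{|V|}$ with $\sqrt V=V/\sqrt{|V|}$. By Theorem~\ref{thm:uniform-sobolev-schatten} when $N\ge3$ or $N=2,\ q\ge 4/3$, by Proposition~\ref{resbound2d} when $N=2,\ 1<q<4/3$, and by Remark~\ref{keruso1d} when $N=1$, one has $K(\lambda)\in\gS^{\alpha_q}(L^2(\R^N))$ with $\norm{K(\lambda)}_{\gS^{\alpha_q}}\le C\,|\lambda|^{-1+N/(2q)}\norm V_{L^q}$ for all $\lambda\in\C\setminus[0,\ii)$, and, arguing as in the proof of Lemma~\ref{lem:extension-resolvent} through the resolvent expansion, $\lambda\mapsto K(\lambda)$ is an analytic $\gS^{\alpha_q}$-valued map on $\C\setminus[0,\ii)$. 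I would then form the regularized perturbation determinant $d(\lambda):=\det\nolimits_{\lceil\alpha_q\rceil}\!\big(1+K(\lambda)\big)$, holomorphic on $\C\setminus[0,\ii)$ and normalized by $d(\lambda)\to1$ as $\lambda\to\ii$ (since $K(\lambda)\to0$ in $\gS^{\alpha_q}$). By the Birman--Schwinger principle and the standard properties of regularized determinants, the zero set of $d$ in $\C\setminus[0,\ii)$ equals $Z$, and the order of the zero of $d$ at $\lambda$ equals the algebraic multiplicity $m_\lambda$.

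\emph{Upper bound, transplantation, and distribution of zeros.} From $|\det_{\lceil p\rceil}(1+A)|\le\exp(\Gamma_p\norm A_{\gS^p}^p)$ I obtain, writing $\beta:=\alpha_q(2q-N)/q=2\alpha_q\big(1-N/(2q)\big)\ge0$ (with $\beta=0$ precisely when $q=N/2$),
$$\log|d(\lambda)|\le\Gamma_{\alpha_q}\norm{K(\lambda)}_{\gS^{\alpha_q}}^{\alpha_q}\le C\,\norm V_{L^q}^{\alpha_q}\,|\lambda|^{-\beta/2},\qquad\lambda\in\C\setminus[0,\ii).$$
Mapping $\C\setminus[0,\ii)$ conformally onto $\mathbb{D}$ by $z=(\sqrt\lambda-i)/(\sqrt\lambda+i)$ (so $\lambda=0\leftrightarrow z=-1$, $\lambda=\ii\leftrightarrow z=1$) and putting $h:=d\circ\lambda(\cdot)$, the estimate becomes, using $|\lambda(z)|=|1+z|^2/|1-z|^2$, the bound $\log|h(z)|\le C\norm V_{L^q}^{\alpha_q}\big(|1-z|/|1+z|\big)^{\beta}$ on $\mathbb{D}$: thus $h$ is bounded when $\beta=0$, and otherwise has at most a boundary singularity at $z=-1$ together with a genuine vanishing factor at $z=1$ (reflecting the decay of the resolvent as $\lambda\to\ii$). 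I would also record the dictionary $1-|z_\lambda|\asymp{\rm Im}\sqrt\lambda/(1+|\lambda|)$, $|z_\lambda+1|\asymp(|\lambda|/(1+|\lambda|))^{1/2}$, $|z_\lambda-1|\asymp(1+|\lambda|)^{-1/2}$ and ${\rm Im}\sqrt\lambda\asymp d(\lambda,[0,\ii))/|\lambda|^{1/2}$. For $q=N/2$, $N\ge3$, $h$ is bounded with $\log\norm h_\infty\le C\norm V_{L^q}^{\alpha_q}$, and the classical Blaschke condition $\sum m_\lambda(1-|z_\lambda|)\le\log\norm h_\infty$ together with the dictionary gives \eqref{eq:LT-3} directly. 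For $q>N/2$ I would apply a Blaschke-type inequality for holomorphic functions with controlled boundary growth and an additional vanishing factor --- in the spirit of Borichev--Golinskii--Kupin and its refinements (or a direct Jensen-formula argument) --- to get, for every small $\tau>0$, a bound $\sum_\lambda m_\lambda(1-|z_\lambda|)^{1+\tau}|z_\lambda-1|^{-s}|z_\lambda+1|^{t}\le C_\tau\norm V_{L^q}^{\alpha_q}$ with $s,t\ge0$ depending on $\beta$ and $\tau$. Feeding this into the dilated potentials $V_\mu(x)=\mu^{-2}V(x/\mu)$ --- whose eigenvalues are $\mu^{-2}$ times those of $-\Delta+V$ and with $\norm{V_\mu}_{L^q}=\mu^{N/q-2}\norm V_{L^q}$ --- and optimizing over $\mu>0$ and $\tau$, then translating back via the dictionary, yields \eqref{eq:LT-1} with the stated power $\norm V_{L^q}^{(1+\epsilon)q/(2q-N)}$ on the right. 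The two admissible regimes for $\epsilon$ come from the position of $\beta$ relative to $1$: one has $\beta\le1\iff q\le N^2/(2N-1)$, in which case $\epsilon\ge0$ is admissible, whereas $\beta>1$ forces $\epsilon>\beta-1=\big((2N-1)q-N^2\big)/(N-q)$, and $N=1$ is the case $\beta=2$, giving $\epsilon>1$.

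\emph{Main obstacle.} Two points demand care. First, the identification of the order of the zero of $d$ at $\lambda$ with the algebraic multiplicity $m_\lambda$: since the Birman--Schwinger operator is not self-adjoint, one must use the precise correspondence between the Jordan structure of $\lambda$ as an eigenvalue of $-\Delta+V$ and the order of vanishing of the perturbation determinant (and handle the base-point normalization of $h$). The more substantial difficulty is the quantitative complex analysis combined with the exponent bookkeeping: producing exactly the weight $d(\lambda,[0,\ii))\,|\lambda|^{-(1-\epsilon)/2}$ with the sharp power of $\norm V_{L^q}$ requires a Blaschke-type inequality that is sensitive simultaneously to the singularity of $\log|h|$ at $z=-1$ and to its \emph{decay} at $z=1$, after which the scaling optimization must be carried out carefully; the borderline cases $q=N/2$ and $q=N^2/(2N-1)$, and the dimension $N=1$, have to be treated separately.
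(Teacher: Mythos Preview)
Your outline matches the paper's proof: form the regularized determinant via the Birman--Schwinger operator, control $\log|h|$ through the Schatten bound of Theorem~\ref{thm:uniform-sobolev-schatten} (together with Proposition~\ref{resbound2d} and Remark~\ref{keruso1d}), transplant to the disc, and invoke a Borichev--Golinskii--Kupin inequality. The only substantive difference in execution is that the paper builds the scale directly into the conformal map, taking $\psi(w)=a\big((1+w)/(1-w)\big)^2$ with a free parameter $a<0$, applies the \emph{basic} BGK bound with a single boundary singularity at $w=-1$ and exponent~$1$ on $(1-|w|)$, translates back via Koebe's distortion theorem, and finally chooses $a$ so that $C_{N,q}^{\alpha_q}|a|^{(N-2q)\alpha_q/(2q)}\|V\|_{L^q}^{\alpha_q}$ equals a fixed small constant. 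This is exactly your dilation $V\mapsto V_\mu$ in disguise (the parameter $a$ plays the role of $\mu^2$), but it sidesteps the need for a refined BGK inequality that simultaneously tracks a singularity at one boundary point and decay at another.

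One point to watch: the version of BGK you quote, with $(1-|z_\lambda|)^{1+\tau}$ for $\tau>0$, would after your dictionary produce $d(\lambda,[0,\infty))^{1+\tau}$ on the left of~\eqref{eq:LT-1}, and the scaling in $\mu$ does not remove that extra exponent (it rescales $|\lambda|$ and $\|V\|_{L^q}$, but the power of $d(\lambda,[0,\infty))$ is dilation-invariant in the final inequality). The paper obtains the first power of $d(\lambda,[0,\infty))$ precisely because the BGK inequality it uses already has exponent~$1$ on $(1-|w|)$; you should use that simpler form as well, and then the decay factor $|1-z|^\beta$ you identified becomes unnecessary.
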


Eigenvalues of Schr\"odinger operators with complex-valued potentials may in principle accumulate at any point in $[0,\ii)$ and at infinity, and \eqref{eq:LT-1} and \eqref{eq:LT-3} give quantitative information on the accumulation rate. (Note that \eqref{eq:complex-frank} excludes accumulation at infinity if $q>N/2$; for $q=N/2$, the bound \eqref{eq:LT-3} controls such a possible divergence.) Concerning accumulation at a point in $(0,\infty)$, our bounds are better than the bounds of \cite{LapSaf-09,DemHanKat-09}. Indeed, \eqref{eq:LT-1} shows that the sequence of imaginary parts of eigenvalues accumulating at a point in $(0,\infty)$ is in $\ell^1$, while \eqref{eq:LT-3} even shows that it is in $\ell^{1/2}$. In \cite{LapSaf-09,DemHanKat-09}, the best result gives that such a sequence is only in $\ell^p$ for some larger exponent $p$. Concerning accumulation of eigenvalues at zero, sometimes \eqref{eq:LT-1} and sometimes \cite{DemHanKat-09} give better results, depending on $q$ and $N$. We emphasize, however, that \cite{LapSaf-09,DemHanKat-09} require a lower bound on the real part of $V$ or a bound on the numerical range of the operator. In Theorem \ref{thm:LT-complex} we are able to remove those conditions for $N/2\leq q\le (N+1)/2$.

Note that, in view of \eqref{eq:complex-frank}, inequality \eqref{eq:LT-1} is stronger the smaller $\epsilon$ is.

The bounds from Theorem \ref{thm:LT-complex} should be compared with the usual Lieb--Thirring inequality for eigenvalues of self-adjoint Schrödinger operators, which states that when $V\in L^q(\R^N,\R)$, then
\begin{equation}\label{eq:usual-LT}
 \sum_{\lambda\in Z}m_\lambda|\lambda|^{\gamma}=\sum_{\lambda\in Z}m_\lambda d(\lambda,[0,\ii))^{\gamma}\le K_{\gamma,N}\int_{\R^N}|V(x)|^{\gamma+N/2}\,dx
\end{equation}
for all $\gamma\ge0$ when $N\ge3$, $\gamma>0$ when $N=2$, $\gamma\ge1/2$ when $N=1$. Our assumptions on $V$ corresponds to the range $0\le\gamma\le1/2$ in the previous inequality and, in fact, one can rewrite \eqref{eq:LT-1} as
\begin{equation}\label{eq:LT-1b}
 \left( \sum_{\lambda\in Z}m_\lambda \frac{d(\lambda,[0,\ii))}{|\lambda|^{(1-\epsilon)/2}}   \right)^{2\gamma/(1+\epsilon)}
 \le A_{N,\gamma+N/2,\epsilon}^{2\gamma/(1+\epsilon)} \, \int_{\R^N} |V(x)|^{\gamma+N/2} \,dx \,.
\end{equation}
Since $\left( \sum_{\lambda\in Z}m_\lambda \frac{d(\lambda,[0,\ii))}{|\lambda|^{(1-\epsilon)/2}}   \right)^{2\gamma/(1+\epsilon)} = \left( \sum_{\lambda\in Z}m_\lambda |\lambda|^{(1+\epsilon)/2} \right)^{2\gamma/(1+\epsilon)}\leq \sum_{\lambda\in Z}m_\lambda|\lambda|^{\gamma}$ if $Z\subset(-\ii,0)$ and $0<\gamma\leq 1/2$, \eqref{eq:LT-1b} is weaker than \eqref{eq:usual-LT} in the self-adjoint case.

Eigenvalues bounds for Schrödinger operators with complex-valued potentials have been the topic of many works, see for instance \cite{AbrAslDav-01,FraLapLieSei-06,LapSaf-09,DemHanKat-09,Frank-11} and references therein. To prove Theorem \ref{thm:LT-complex}, we use a method developed by Demuth, Hansmann and Katriel relying on estimates on zeros of holomorphic functions and nicely exposed in the review \cite{DemHanKat-13}. The basic idea is that any eigenvalue $z_0\in\C\setminus[0,\ii)$ of the operator $-\Delta+V$ corresponds to a ``zero'' of the analytic function $z\mapsto1+\sqrt{V}(-\Delta-z)^{-1}\sqrt{|V|}$. Hence, estimates on sums of eigenvalues of $-\Delta+V$ amount to estimates on sums of zeros of holomorphic functions. The first result of this kind is Jensen's inequality \cite[Sec. II.2]{Garnett-81}, which states that the zeros $(z_n)$ of a bounded analytic function on the unit disk satisfy the bound
$$\sum_n(1-|z_n|)<\ii,$$
which may be compared to be bounds obtained in Theorem \ref{thm:LT-complex}. As explained in \cite{DemHanKat-13}, one may obtain similar results for analytic functions that blow up at the boundary of the unit disk, according to the rate of this blow-up. This is the content of a theorem of Borichev, Golinskii, and Kupin \cite{BorGolKup-09} which generalizes Jensen's inequality to functions that may blow up at the boundary; see also \cite{FavGol-09} for a more general bound and an alternative proof. We use these versions of Jensen's inequality to prove Theorem \ref{thm:LT-complex}, in the spirit of \cite{DemHanKat-13}. The reason why we are able to significantly improve upon the results from \cite{DemHanKat-13} is that we have the uniform resolvent bounds in Schatten spaces from Theorem \ref{thm:uniform-sobolev-schatten}.

Finally, let us mention that we are not able to obtain an explicit bound in terms of $\norm{V}_{L^{N/2}}$ in \eqref{eq:LT-3} because we do not have any quantitative control on the size of $\sqrt{V}(-\Delta-z)^{-1}\sqrt{|V|}$ 
when $z$ is large, as reflected by the bound \eqref{eq:kenig-ruiz-sogge-schatten} which is uniform in $z$ when $q=N/2$. We will see in the following proof that it is a crucial input in the case $q>N/2$, which is lacking when $q=N/2$.

\begin{proof}[Proof of Theorem \ref{thm:LT-complex}]
 As we saw in Section \ref{sec:sobolev}, the map
 $$\C\setminus[0,\ii)\ni z\mapsto 1+\sqrt{V}(-\Delta-z)^{-1}\sqrt{|V|}\in1+\gS^{\alpha_q}$$
 is analytic for $N\geq 2$. The same is true, with an even easier proof, if $N=1$ ($\alpha_q=2$ in this case). Hence, for any $N\geq 1$, the map
 $$\C\setminus[0,\ii)\ni z\mapsto h(z):=\Det_{\lceil \alpha_q\rceil}\left(1+\sqrt{V}(-\Delta-z)^{-1}\sqrt{|V|}\right)\in\C$$
 is also analytic, where $\Det_{\lceil \alpha_q\rceil}$ denotes the regularized determinant, see for instance \cite[Sec. 2.2]{DemHanKat-09}, and $\lceil \alpha_q\rceil$ denotes the smallest integer which is less than or equal to $\alpha_q$. Furthermore, we have the inequality
 \begin{equation}\label{eq:reg-determinant}
    \log\left|\Det_{\lceil \alpha_q\rceil}\left(1+\sqrt{V}(-\Delta-z)^{-1}\sqrt{|V|}\right)\right|\le C\norm{\sqrt{V}(-\Delta-z)^{-1}\sqrt{|V|}}_{\gS^{\alpha_q}}^{\alpha_q} \,;
 \end{equation}
see, e.g., \cite[Thm 9.2]{Simon-79}, where this is proved with $\alpha_q$ replaced by $\lceil \alpha_q\rceil$ on the right side. A simple modification of that proof, however, yields \eqref{eq:reg-determinant} as stated; see also \cite[Lem. XI.9.22]{DunSch}.

As explained for instance in \cite[Thm. 21]{LatSuk-10}, the zeros of $h$ (counted with multiplicity) are exactly the eigenvalues of $-\Delta+V$ (counted with \emph{algebraic} multiplicity). Hence, we will prove bounds on sums of zeros of $h$, and to do so we use Jensen-type inequalities.
 
We begin with the case $q=N/2$, $N\ge3$, for which we can apply directly the usual Jensen inequality. In this case, by \eqref{eq:reg-determinant} and Theorem \ref{thm:uniform-sobolev-schatten} we can bound 
 $$\log|h(z)|\le C \|V\|_{L^{N/2}(\R^N,\C)}^N$$
 for all $z\in\C\setminus[0,\ii)$, with $C>0$ independent of $z$ and $V$. As a consequence, we can apply Jensen's inequality to the map $\mathbb{H}\ni w\mapsto h(w^2)$, where $\mathbb{H}$ denotes the upper half space. The version of Jensen's inequality for analytic functions on the upper half space bounded up to the real axis can be found for instance in \cite[Eq. (2.2.3)]{Garnett-81}, which gives exactly \eqref{eq:LT-3}.
 
When $N/2<q\le(N+1)/2$, the Schatten bounds of Theorem \ref{thm:uniform-sobolev-schatten}, Proposition \ref{resbound2d} (for $N\geq 2$) and Remark \ref{keruso1d} (for $N=1$) depend on $z$: according to \eqref{eq:reg-determinant} they yield
 \begin{equation}\label{eq:bound-log-h}
    \log|h(z)|\le C |z|^{(N-2q)\alpha_q/(2q)}\, \|V\|_{L^q(\R^N,\C)}^{\alpha_q},
 \end{equation}
where we recall that
$$\alpha_q=\left\{\begin{array}{ll}
                   2 & \text{if}\ N=1,\\
		   \max\left(2,\frac{(N-1)q}{N-q}\right) & \text{if}\ N\ge2.
                  \end{array}\right.
$$
In order to take this $z$-dependence into account we shall replace the ordinary Jensen inequality by a version of \cite{BorGolKup-09} for functions that blow-up at some point of the boundary. A special case of this result states that if $g$ is an analytic function on the unit disk $\mathbb D$ with $|g(0)|=1$ and $\log|g(w)| \leq D |1+w|^{-\alpha}$, then
 \begin{equation}
 \label{eq:borgolkup}
 \sum_{w\in Z_g}m_w(1-|w|) |1+w|^{(\alpha-1+\delta)_+} \le M_{\alpha,\delta} D
 \end{equation}
 for all $\delta>0$, $\alpha\ge 0$ and a constant $M_{\alpha,\delta}$ depending only on $\alpha$ and $\delta$. Here $Z(g)$ denotes the set of zeros of $g$ and $m_w$ denotes the multiplicity of the zero $w$. Since this result is stated for analytic functions on $\mathbb D$ and $h$ is defined on $\C\setminus[0,\ii)$, we use a conformal map as in \cite[Sec. 4.4]{DemHanKat-09}: let
 $$\psi:\mathbb{D}\ni w\mapsto a\left(\frac{1+w}{1-w}\right)^2\in\C\setminus[0,\ii),$$
 where $a<0$ is chosen such that $h(a)\neq0$. More precisely, by \cite[Thm. 9.2]{Simon-79}) (with the same modification as discussed above),
 \begin{align*}
 \label{eq:reg-determinant}
 & \left|\Det_{\lceil \alpha_q\rceil}\left(1+\sqrt{V}(-\Delta-z)^{-1}\sqrt{|V|}\right) - 1 \right| \\
 & \le \norm{\sqrt{V}(-\Delta-z)^{-1}\sqrt{|V|}}_{\gS^{\alpha_q}}^{\alpha_q} \exp\left( C_q \left( \norm{\sqrt{V}(-\Delta-z)^{-1}\sqrt{|V|}}_{\gS^{\alpha_q}}^{\alpha_q} +1 \right)^{\alpha_q} \right) \,.
 \end{align*}
 By continuity there is an $\epsilon_q>0$ such that $x \exp( C_q(x+1)^{\alpha_q}) \leq 1/2$ for all $0\leq x\leq \epsilon_q$. Thus, the uniform resolvent bound in Schatten spaces from Theorem \ref{thm:uniform-sobolev-schatten} and Remark \ref{keruso1d} implies that
 $$
 \left| \left|h(z)\right| - 1\right| \le \left| h(z) - 1\right| \le \frac12
 $$
provided
\begin{equation}
\label{eq:aconstraint}
C_{N,q}^{\alpha_q} |z|^{(N-2q)\alpha_q/(2q)} \|V\|_{L^q}^{\alpha_q} \le \epsilon_q \,.
\end{equation}
(Here $C_{N,q}$ denotes the constant on the right side in Theorem \ref{thm:uniform-sobolev-schatten} and Remark \ref{keruso1d}.) From now on we will assume that $a<0$ and that $z=a$ satisfies \eqref{eq:aconstraint}. Thus, $\log|h(a)| \geq -\log 2$ and, in particular, $h(a)\neq 0$, as desired.

Define now the map
 $$g:\mathbb{D}\ni w\mapsto h(\psi(w))/h(a)\,.$$
This is an analytic function from the unit disk to $\C$, and by \eqref{eq:bound-log-h} and \eqref{eq:aconstraint} it satisfies the bound
 \begin{align*}
    \log|g(w)| &\le C 2^{(4q-2N)\alpha_q/(2q)} \frac{|a|^{(N-2q)\alpha_q/(2q)}\|V\|_{L^q}^{\alpha_q} }{|1+w|^{(4q-2N)\alpha_q/(2q)}}-\log|h(a)|\\
    &\le \frac{C 2^{(4q-2N)\alpha_q/(2q)} \epsilon_q C_{N,q}^{-\alpha_q} + 2^{(4q-2N)\alpha_q/(2q)} \log 2 }{|1+w|^{(4q-2N)\alpha_q/(2q)}}
 \end{align*}
 for all $w\in\mathbb{D}$. Note that this bound is independent of $V$ and $a$.
 
We apply \eqref{eq:borgolkup} with $\alpha=(4q-2N)\alpha_q/(2q)$ and $\delta=1-\alpha+\epsilon$, where 
$$\begin{cases}
   \epsilon>1 & \text{if}\ N=1,\\
   \epsilon\geq 0 & \text{if}\ N\ge2\ \text{and}\ N/2<q<N^2/(2N-1),\\
   \epsilon>\frac{(2N-1)q-N^2}{N-q} & \text{if}\ N\ge2\ \text{and}\ N^2/(2N-1)\le q\le (N+1)/2.
  \end{cases}
$$
With this choice we have $\delta>0$ in any case. We obtain
 $$\sum_{w\in Z(g)}m_w(1-|w|)|1+w|^\epsilon \le M_{\alpha,\delta}\left(C 2^{(4q-2N)\alpha_q/(2q)} \epsilon_q C_{N,q}^{-\alpha_q} + 2^{(4q-2N)\alpha_q/(2q)} \log 2\right)  =: M_{N,q,\epsilon}'  \,.$$
The corresponding result for eigenvalues of $-\Delta+V$ is thus 
 $$\sum_{\lambda\in Z}m_\lambda(1-|\psi^{-1}(\lambda)|)|1+\psi^{-1}(\lambda)|^\epsilon \le M_{N,q,\epsilon}' \,.$$
We now follow some of the arguments of \cite{DemHanKat-09}. By Koebe's distortion theorem (see, e.g., \cite[p. 9]{Pommerenke}), for any $\lambda\in\C\setminus[0,\infty)$,
$$
1-|\psi^{-1}(\lambda)| \ge \frac{d(\lambda,[0,\infty))\ |(\psi^{-1})'(\lambda)|}{2} \,.
$$
Now
$$
(\psi^{-1})'(\lambda) = \frac{-\sqrt{|a|}}{\sqrt{-\lambda}(\sqrt{-\lambda} + \sqrt{|a|})^2}
\qquad\text{and}\qquad
1+\psi^{-1}(\lambda) = \frac{2\sqrt{-\lambda}}{\sqrt{-\lambda}+\sqrt{|a|}}
$$
with the convention that $\re\sqrt{-\lambda}>0$ for $\lambda\in\C\setminus[0,\infty)$. Thus,
\begin{align*}
(1-|\psi^{-1}(\lambda)|)|1+\psi^{-1}(\lambda)|^\epsilon & \ge \frac{d(\lambda,[0,\infty))\ \sqrt{|a|}}{2^{1-\epsilon}\, |\lambda|^{(1-\epsilon)/2 }\, |\sqrt{-\lambda}+\sqrt{|a|}|^{2+\epsilon}} \\
& \ge \frac{d(\lambda,[0,\infty))\ \sqrt{|a|}}{2^{1-\epsilon}\, |\lambda|^{(1-\epsilon)/2 }\, (\sqrt{|\lambda|}+\sqrt{|a|})^{2+\epsilon}} \,.
\end{align*}
As we explained before, if $z$ satisfies \eqref{eq:aconstraint} then $h(z)\neq 0$. Thus, all eigenvalues $\lambda\in Z$ satisfy $C_{N,q}^{\alpha_q} |\lambda|^{(N-2q)\alpha_q/(2q)} \|V\|_{L^q}^{\alpha_q}>\epsilon_q$, and therefore $|\lambda|<|a|$. Using this, we obtain
\begin{align*}
\sum_{\lambda\in Z}m_\lambda\, (1-|\psi^{-1}(\lambda)|)|1+\psi^{-1}(\lambda)|^\epsilon 
\ge \sum_{\lambda\in Z} m_\lambda\, \frac{d(\lambda,[0,\infty))}{8\, |\lambda|^{(1-\epsilon)/2 }\, |a|^{(1+\epsilon)/2}} \,.
\end{align*}
To summarize, we have shown that
$$
\sum_{\lambda\in Z}m_\lambda\, \frac{d(\lambda,[0,\ii))}{|\lambda|^{(1-\epsilon)/2}} \leq 8\, M_{N,q,\epsilon}\, |a|^{(1+\epsilon)/2} \,.
$$
The claimed inequality \eqref{eq:LT-1} follows by choosing $a<0$ such that $C_{N,q}^{\alpha_q} |a|^{(N-2q)\alpha_q/(2q)} \|V\|_{L^q}^{\alpha_q} =\epsilon_q$ (that is, equality holds in \eqref{eq:aconstraint}).
\end{proof}

\section{Application of the Limiting Absorption Principle: Schatten properties of the Scattering Matrix}\label{sec:SCATMAT}

An important object in scattering theory is the so-called scattering matrix. For every $\lambda>0$ (with the physical interpretation of an energy) this is a bounded operator $S(\lambda)$ on $L^2(\Sph^{N-1})$ (corresponding to the Fermi sphere in physics). Under rather weak assumptions on the potential $V$, the scattering matrix differs from the identity by a compact operator. In the next result we prove quantitative information in terms of trace ideals properties.

\begin{theorem}\label{thm:scatmat}
Let $N\ge 2$ and assume that $V\in L^q(\R^N,\R)$ with
$$
\begin{cases}
1 < q \le \frac32 & \text{if}\ N=2 \\
\frac N 2 \le q \le \frac{N+1}{2} & \text{if}\ N\ge 3 \,.
\end{cases}
$$
Then $S(\lambda)-1 \in \mathfrak S^{(N-1)q/(N-q)}(L^2(\Sph^{N-1}))$ for every $\lambda>0$ and the mapping $(0,\ii)\ni\lambda\mapsto S(\lambda)-1\in\gS^{(N-1)q/(N-q)}(L^2(\Sph^{N-1}))$ is continuous. Moreover, for all $\lambda>0$,
\begin{equation}
\label{eq:scatmatbound2}
\left\| S(\lambda) - 1 \right\|_{\mathfrak S^{\alpha_q}(L^2(\Sph^{N-1}))} \le C \lambda^{-1+N/2q} \|V\|_{L^q(\R^N)} \,, 
\end{equation}
where $\alpha_q:=\max(2,(N-1)q/(N-q))$ and $C$ is independent of $V$ and $\lambda$.
\end{theorem}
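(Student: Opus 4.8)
The starting point is the \emph{stationary representation} of the scattering matrix. For $H=-\Delta+V$ with $V$ real satisfying the hypotheses of the theorem, one has, for every $\lambda>0$,
$$
S(\lambda)=1-2\pi i\,\Gamma_0(\lambda)\bigl(V-V(-\Delta+V-\lambda-i0)^{-1}V\bigr)\Gamma_0(\lambda)^*,
$$
where $\Gamma_0(\lambda)\colon L^2(\R^N)\to L^2(\Sph^{N-1})$ is the appropriately normalized trace operator onto the sphere $\{\xi:|\xi|^2=\lambda\}$ in Fourier space, characterized by $\Gamma_0(\lambda)^*\Gamma_0(\lambda)=\tfrac{1}{2\pi i}\bigl((-\Delta-\lambda-i0)^{-1}-(-\Delta-\lambda+i0)^{-1}\bigr)=\tfrac{1}{2\sqrt\lambda}\,T_{S_\lambda}$, with $S_\lambda$ the sphere of radius $\sqrt\lambda$ and $T_{S_\lambda}$ as in Section~\ref{sec:restriction}. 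I would take this formula from the literature (e.g.\ Yafaev); what must be checked in our low-regularity setting $V\in L^q$ is only that all operators involved are well defined, which follows from the Ionescu--Schlag framework already used in the proof of Lemma~\ref{lem:extension-resolvent} (existence of the boundary values $R_0(\lambda\pm i0)$, admissibility of $V$) together with the fact, established below, that $\Gamma_0(\lambda)\sqrt{|V|}$ is compact.

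Next I would insert the Birman--Schwinger factorization used in the proof of Theorem~\ref{thm:LAP}. With $\sqrt V:=V/\sqrt{|V|}$ and $A(z):=\sqrt V(-\Delta-z)^{-1}\sqrt{|V|}$, the resolvent identity gives
$$
V-V(-\Delta+V-z)^{-1}V=\sqrt{|V|}\,(1+A(z))^{-1}\,\sqrt V,
$$
so that, writing $B(\lambda):=\Gamma_0(\lambda)\sqrt{|V|}$ and using that $\sqrt V\,\Gamma_0(\lambda)^*=\sgn(V)\,B(\lambda)^*$ ($\sqrt V$ being real-valued),
$$
S(\lambda)-1=-2\pi i\,B(\lambda)\,(1+A(\lambda+i0))^{-1}\,\sgn(V)\,B(\lambda)^*.
$$
By Lemma~\ref{lem:extension-resolvent}, $1+A(\lambda+i0)$ is invertible on $L^2(\R^N)$ for every $\lambda>0$ and $\lambda\mapsto(1+A(\lambda+i0))^{-1}\in\mathcal B(L^2(\R^N))$ is continuous, so the entire problem reduces to estimating $B(\lambda)$ in trace ideals, where $\|B(\lambda)\|_{\gS^{2r}}^2=\|B(\lambda)^*B(\lambda)\|_{\gS^{r}}$ and $B(\lambda)^*B(\lambda)=\sqrt{|V|}\,\Gamma_0(\lambda)^*\Gamma_0(\lambda)\,\sqrt{|V|}$.

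Here both harmonic-analytic inputs of the paper enter. On the one hand, since $\Gamma_0(\lambda)^*\Gamma_0(\lambda)=\tfrac{1}{2\sqrt\lambda}T_{S_\lambda}$, rescaling to the unit sphere and applying Theorem~\ref{thm:schatten-WTSW} (with $S=\Sph^{N-1}$ and the given $q$, which satisfies $1\le q\le(N+1)/2$) yields
$$
\|B(\lambda)\|_{\gS^{2(N-1)q/(N-q)}}^2\le C\,\lambda^{-1+N/(2q)}\,\|V\|_{L^q(\R^N)},
$$
which by Hölder's inequality in trace ideals already gives $S(\lambda)-1\in\gS^{(N-1)q/(N-q)}(L^2(\Sph^{N-1}))$. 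On the other hand, estimating the same quantity through $\Gamma_0(\lambda)^*\Gamma_0(\lambda)=\tfrac{1}{2\pi i}(R_0(\lambda+i0)-R_0(\lambda-i0))$ by the uniform resolvent bound of Theorem~\ref{thm:uniform-sobolev-schatten} (or, when $N=2$ and $1<q<4/3$, by Proposition~\ref{resbound2d}), which is available in $\gS^{\alpha_q}$ with $\alpha_q=\max(2,(N-1)q/(N-q))$, gives $\|B(\lambda)\|_{\gS^{2\alpha_q}}^2\le C\lambda^{-1+N/(2q)}\|V\|_{L^q}$; combining this with Hölder and the bound $\|(1+A(\lambda+i0))^{-1}\|_{\mathcal B(L^2(\R^N))}\le 2$ --- valid when $C\lambda^{-1+N/(2q)}\|V\|_{L^q}\le 1/2$ by \eqref{eq:BS-schatten-boundary}, exactly as in the proof of Theorem~\ref{thm:LAP}, and extended to all $(\lambda,V)$ by the scaling invariance of \eqref{eq:scatmatbound2} under $V\mapsto\mu^2V(\mu\,\cdot)$, $\lambda\mapsto\mu^{-2}\lambda$, reducing to $\lambda=1$ --- yields \eqref{eq:scatmatbound2}. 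Continuity of $\lambda\mapsto S(\lambda)-1$ in $\gS^{(N-1)q/(N-q)}$ then follows from the factorization, the continuity of $\lambda\mapsto(1+A(\lambda+i0))^{-1}\sgn(V)$ from Lemma~\ref{lem:extension-resolvent}, and the continuity of $\lambda\mapsto B(\lambda)$ in $\gS^{2(N-1)q/(N-q)}$, which I would prove by the same splitting $\sqrt{|V|}=W_1+W_2$ into a bounded, compactly supported part and an $L^{2q}$-small remainder as in the proof of Lemma~\ref{lem:extension-resolvent}, reducing to a family with explicit, smoothly $\lambda$-dependent kernel.

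The step I expect to be the main obstacle is \emph{not} any of the Schatten estimates --- those are supplied directly by Theorems~\ref{thm:schatten-WTSW} and~\ref{thm:uniform-sobolev-schatten} --- but the rigorous setup: justifying the stationary formula and the mapping properties of $\Gamma_0(\lambda)$ at the low regularity $V\in L^q$, and, intertwined with this, the uniform-in-$(\lambda,V)$ control of $(1+A(\lambda+i0))^{-1}$, which ultimately rests on the absence of positive eigenvalues of $-\Delta+V$ (the Koch--Tataru theorem) already invoked in Lemma~\ref{lem:extension-resolvent}.
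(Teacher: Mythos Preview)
Your treatment of the qualitative statements --- that $S(\lambda)-1\in\gS^{(N-1)q/(N-q)}$ and that $\lambda\mapsto S(\lambda)-1$ is continuous in this norm --- matches the paper's proof essentially line by line: the same stationary formula, the same Birman--Schwinger factorization, the same Schatten bound on $B(\lambda)=\Gamma_0(\lambda)\sqrt{|V|}$ coming from Theorem~\ref{thm:schatten-WTSW}, and the same $W_1+W_2$ splitting for the continuity of $B(\lambda)$.

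The gap is in your argument for the \emph{uniform} bound~\eqref{eq:scatmatbound2}. Your route gives
\[
\|S(\lambda)-1\|_{\gS^{\alpha_q}}\le 2\pi\,\|B(\lambda)\|_{\gS^{2\alpha_q}}^2\,\bigl\|(1+A(\lambda+i0))^{-1}\bigr\|
\le C\,\lambda^{-1+N/(2q)}\|V\|_{L^q}\,\bigl\|(1+A(\lambda+i0))^{-1}\bigr\|,
\]
and then you claim that the restriction $C\lambda^{-1+N/(2q)}\|V\|_{L^q}\le 1/2$ (needed for $\|(1+A)^{-1}\|\le 2$) can be removed by scaling $V\mapsto\mu^2V(\mu\,\cdot)$, $\lambda\mapsto\mu^{-2}\lambda$. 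But you have already observed that both sides of~\eqref{eq:scatmatbound2} are invariant under this scaling; the quantity $\lambda^{-1+N/(2q)}\|V\|_{L^q}$ is invariant as well, so the smallness condition is preserved and scaling cannot take you outside the regime where it holds. Reducing to $\lambda=1$ still leaves you with all $V\in L^q$, and $\|(1+A(1+i0))^{-1}\|$ is not bounded uniformly in $V$.

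The paper avoids this obstruction by invoking an abstract inequality of Sobolev--Yafaev (\cite[Thm.~7.9.4]{Yafaev-92}): after rewriting the formula so that $\sqrt{V}$ (rather than $\sqrt{|V|}$) flanks the free resolvent, one has
\[
\|S(\lambda)-1\|_{\gS^{\alpha_q}}\le \kappa_{\alpha_q}\,\bigl\|\sqrt{V}\,(-\Delta-\lambda-i0)^{-1}\sqrt{V}\,\bigr\|_{\gS^{\alpha_q}},
\]
with no factor of $(1+A)^{-1}$ at all. This is not a soft consequence of the factorization you wrote; it exploits the positivity of the imaginary part of $\sqrt{V}\,R_0(\lambda+i0)\,\sqrt{V}$ (equivalently, the unitarity of $S(\lambda)$) in a way your H\"older estimate does not. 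Together with Theorem~\ref{thm:uniform-sobolev-schatten} (and Proposition~\ref{resbound2d} for $N=2$, $1<q<4/3$) this yields~\eqref{eq:scatmatbound2} for every $\lambda>0$ with a constant independent of $V$.
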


We will show in Remark \ref{rem:smopt} that for any $q$ as in the theorem the trace ideal $\mathfrak S^{(N-1)q/(N-q)}$ is best possible for the conclusions of the theorem to hold.

We do not know whether the need to consider $\alpha_q:=\max(2,(N-1)q/(N-q))$  instead of $(N-1)q/(N-q)$ in \eqref{eq:scatmatbound2} is technical or not, but in any case it only affects the parameter range $1<q<4/3$ in $N=2$. Moreover, in this range our proof below shows that \eqref{eq:scatmatbound2} holds with $(N-1)q/(N-q)=q/(2-q)$ provided $\lambda^{-1+1/q} \|V\|_{L^q(\R^2)} \leq M$ for any $M>0$ and with $C=C_M$ depending on $M$.

Previously, trace ideal properties of $S(\lambda)-1$ were mostly known under pointwise assumptions on $V$. (An exception are the bounds in \cite{sobolev-yafaev}, but there only the Hilbert--Schmidt norm of $S(\lambda)-1$ is considered.) Our contribution is to replace these pointwise assumptions by $L^q$ assumptions on $V$. For example, in \cite{kuroda} (see also \cite[Prop. 8.1.5]{Yafaev-10}) it is shown that, if $|V(x)| \le C (1+|x|^2)^{-\rho/2}$ for some $\rho>1$, then $S(\lambda)-1 \in \mathfrak S^r$ for any $r>(N-1)/(\rho-1)$. This follows also from our theorem provided $2N/(N+1)<\rho\leq 2$. Moreover, \cite[Thm. 8.2.1]{Yafaev-10} implies that $S(\lambda)-1 \not\in \mathfrak S^{(N-1)/(\rho-1)}$ for $V$ satisfying $V(x) \sim -c|x|^{-\rho}$ as $|x|\to\infty$ with $c\neq 0$. This fact shows that the trace ideal $\mathfrak S^{(N-1)q/(N-q)}$ in our theorem is optimal. Our argument in Remark \ref{rem:smopt} will be different.

\medskip

For the definition of the scattering matrix we refer to \cite{Yafaev-10}. The formula that will be important for us is that
\begin{equation}
\label{eq:scatmatrepr}
S(\lambda) = 1 - 2\pi i \Gamma_0(\lambda) \sqrt{|V|} \left( 1- \sqrt{V} (-\Delta +V-\lambda - i0)^{-1} \sqrt{|V|} \right) \sqrt{V} \Gamma_0(\lambda)^*
\end{equation}
(see (6.6.19) in \cite{Yafaev-10}). Here $\Gamma_0(\lambda)$ is the operator that maps functions $\psi$ on $\R^N$ to functions on $\Sph^{N-1}$ by restricting the Fourier transform to the sphere of radius $\sqrt\lambda$,
$$
\left(\Gamma_0(\lambda)\psi\right)(\omega) = 2^{-1/2} \lambda^{(N-2)/4} \widehat\psi(\sqrt{\lambda}\omega),\quad\forall\omega\in\Sph^{N-1} \,.
$$
Under the conditions of the theorem the products $\Gamma_0(\lambda) \sqrt{|V|}$ and $\sqrt{V} \Gamma_0(\lambda)^* = ( \Gamma_0(\lambda) \sqrt{V} )^*$ are bounded operators between the corresponding $L^2$ spaces by the Stein--Tomas restriction theorem (see also the proof below). Therefore the right side in the above formula for $S(\lambda)$ is well-defined.

The proof of \eqref{eq:scatmatbound2} given below was suggested to us by A. Pushnitski to whom we are grateful. It simplified and improved our original argument.

\begin{proof}[Proof of Theorem \ref{thm:scatmat}]
It follows from Theorem \ref{thm:schatten-WTSW} that $\mathcal E_{\Sph^{N-1}}^* V \mathcal E_{\Sph^{N-1}} \in \mathfrak S^{(N-1)q/(N-q)}(L^2(\Sph^{N-1}))$ with
$$
\left\| \mathcal E_{\Sph^{N-1}}^* V \mathcal E_{\Sph^{N-1}} \right\|_{\mathfrak S^{(N-1)q/(N-q)}(L^2(\Sph^{N-1}))} \le C_{N,q} \|V\|_{L^q(\R^N)}
$$
under the assumptions on $q$ in the theorem. Thus,
\begin{align*}
\left\| \mathcal E_{\Sph^{N-1}}^* \sqrt{|V|} \right\|_{\mathfrak S^{2(N-1)q/(N-q)}(L^2(\R^N),L^2(\Sph^{N-1}))}^2
& = \left\| \sqrt{|V|} \mathcal E_{\Sph^{N-1}} \right\|_{\mathfrak S^{2(N-1)q/(N-q)}(L^2(\Sph^{N-1}),L^2(\R^N))}^2 \\
& \leq C_{N,q} \|V\|_{L^q(\R^N)} \,.
\end{align*}
By scaling,
\begin{align*}
\left\| \Gamma_0(\lambda) \sqrt{|V|} \right\|_{\mathfrak S^{2(N-1)q/(N-q)}(L^2(\R^N),L^2(\Sph^{N-1}))}^2 & = \left\| \sqrt{|V|} \Gamma_0(\lambda)^* \right\|_{\mathfrak S^{2(N-1)q/(N-q)}(L^2(\Sph^{N-1}),L^2(\R^N))}^2 \\
& \leq 2^{-1} C_{N,q} \ \lambda^{-1+N/2q} \|V\|_{L^q(\R^N)} \,.
\end{align*}
Thus, it follows that
$$
\left\| \Gamma_0(\lambda) V \Gamma_0(\lambda)^* \right\|_{\mathfrak S^{(N-1)q/(N-q)}(L^2(\R^N))} \leq 2^{-1} C_{N,q} \ \lambda^{-1+N/2q} \|V\|_{L^q(\R^N)} \,.
$$
Moreover, by H\"older's inequality for trace ideals and Theorem \ref{thm:LAP} we have $\Gamma_0(\lambda) V (-\Delta+V - \lambda-i0)^{-1} V \Gamma_0(\lambda) \in \mathfrak S^{r}(L^2(\R^N))$ with $r=2(N-1)q/(3N-2q-1)$ (which satisfies $r<(N-1)q/(N-q)$), and for $|z|$ large enough we can bound the norm as follows,
\begin{align*}
& \left\| \Gamma_0(\lambda) V (-\Delta+V - \lambda-i0)^{-1} V \Gamma_0(\lambda) \right\|_{\mathfrak S^{r}(L^2(\R^N))} \\
& \le \left\| \Gamma_0(\lambda) \sqrt{|V|} \right\|_{\mathfrak S^{2(N-1)q/(N-q)}(L^2(\R^N),L^2(\Sph^{N-1}))}^2 \left\| \sqrt{V} (-\Delta+V - \lambda-i0)^{-1} \sqrt{|V|} \right\|_{\mathfrak S^{2q}(L^2(\R^N))} \\
& \le C_{N,q}' \lambda^{-2+N/q} \|V\|_{L^q(\R^N)}^2 \,.
\end{align*}
This proves that \eqref{eq:scatmatrepr} is well-defined and belongs to $1+ \gS^{(N-1)q/(N-q)}(L^2(\Sph^{N-1}))$. (It also proves the bound \eqref{eq:scatmatbound2} in the case $\lambda^{-1+N/2q} \|V\|_{L^q(\R^N)}\le C$, but we will give a different argument below that works for any $\lambda>0$.)

Because of the formula \eqref{eq:scatmatrepr} and the continuity statement in Theorem \ref{thm:LAP}, the continuity statement about the scattering matrix will follow if we prove that the mapping $(0,\ii)\ni\lambda\mapsto \Gamma_0(\lambda) W\in\gS^{2(N-1)q/(N-q)}(L^2(\R^N),L^2(\Sph^{N-1}))$ is continuous for $W\in L^{2q}(\R^N)$. This is well-known if $W$ is bounded and has compact support; see for example \cite[Lem. 8.1.2]{Yafaev-10}. The case of a general $W\in L^{2q}(\R^N)$ then follows similarly as in the proof of Lemma \ref{lem:extension-resolvent}. We decompose $W=W_1+W_2$ with $W_1$ bounded and compactly supported and $\|W_2\|_{L^{2q}(\R^N)}\leq \epsilon$ and use the a-priori bounds from Theorem \ref{thm:schatten-WTSW} to control the $W_2$ piece. This completes the proof of the continuity statement.

Finally, we turn to the proof of \eqref{eq:scatmatbound2}. In view of the identity \eqref{eq:decomp-resolvent} we can write the identity \eqref{eq:scatmatrepr} as
\begin{align*}
S(\lambda) & = 1 - 2\pi i \Gamma_0(\lambda) \sqrt{|V|} \left( 1+ \sqrt{V} (-\Delta -\lambda - i0)^{-1} \sqrt{|V|} \right)^{-1} \sqrt{V} \Gamma_0(\lambda)^* \\
& = 1 - 2\pi i \Gamma_0(\lambda) \sqrt{V} (\sgn V) \left( 1+ \sqrt{V} (-\Delta -\lambda - i0)^{-1} \sqrt{V} (\sgn V) \right)^{-1} \sqrt{V} \Gamma_0(\lambda)^* \,.
\end{align*}
This operator is of the form considered in \cite[Secs. 7.7 and 7.9]{Yafaev-92}. Therefore the abstract theorem \cite[Thm. 7.9.4]{Yafaev-92}, originally from \cite{sobolev-yafaev}, yields
$$
\left\| S(\lambda) - 1 \right\|_{\mathfrak S^{\alpha_q}(L^2(\Sph^{N-1}))} \le \kappa_{\alpha_q} \left\| \sqrt{V} (-\Delta-\lambda-i0)^{-1} \sqrt{V} \right\|_{\mathfrak S^{\alpha_q}(L^2(\R^{N}))}
$$
with $\kappa_p^p = 2^p \min_{0<\beta< 1} (\beta^{-p} + 2(1-\beta)^{-p})$. The proposition now follows from Theorem \ref{thm:uniform-sobolev-schatten} (if $N\ge 3$) and Proposition \ref{resbound2d} (if $N=2$).
\end{proof}

\begin{remark}\label{rem:smopt}
Let us prove optimality of the trace ideal $\mathfrak{S}^{(N-1)q/(N-q)}$ for any fixed $q$ as in the theorem. It follows from \eqref{eq:scatmatrepr} and the fact that $\sqrt{V} (-\Delta +V-\lambda - i0)^{-1} \sqrt{|V|}$ is compact by Theorem \ref{thm:LAP}, that $S(\lambda)-1\in\mathfrak S^r$ iff $\Gamma_0(\lambda) V \Gamma_0(\lambda)^* \in\mathfrak S^r$ for any fixed $r$. Therefore the optimality follows from Theorem \ref{thm:optimality-schatten-exponent} and the duality principle of Lemma \ref{lemma:duality-principle}. A similar argument (based on Knapp's example on the sphere) shows that not even the operator norm of $S(\lambda)-1$ can be bounded in terms of $\|V\|_{L^{q}(\R^N)}$ if $2q>N+1$.
\end{remark}

\medskip

\noindent\textbf{Acknowledgments.} The authors are grateful to A. Laptev, M. Lewin and A. Pushnitski for useful discussions. J. S. thanks the Mathematics Department of Caltech for the Research Stay during which this work has been done. Financial support from the U.S. National Science Foundation through grant PHY-1347399 (R. F.), from the ERC MNIQS-258023 and from the ANR ``NoNAP'' (ANR-10-BLAN 0101) of the French ministry of research (J. S.) are acknowledged.

\bibliographystyle{siam}

\end{document}